\newenvironment{proof}{{\bf Proof:  }}{\hfill\rule{2mm}{2mm}\vspace*{5pt}}
\newenvironment{proofof}[1]{{\vspace*{5pt} \noindent\bf Proof of #1:  }}{\hfill\rule{2mm}{2mm}\vspace*{5pt}}
\newtheorem{definition}{Definition}[section]
\newtheorem{theorem}{Theorem}[section]
\newtheorem{lemma}{Lemma}[section]
\newtheorem{claim}{Claim}[section]
\newtheorem{fact}{Fact}[section]
\newtheorem{corollary}{Corollary}[section]
\newtheorem{observation}{Observation}[section]
\newcommand{\omm}{\textsf{Online Makespan Minimization}\xspace}
\newcommand{\lptr}{\textsf{LPT with Restart}\xspace}
\newcommand{\alg}{\textsf{ALG}}
\newcommand{\opt}{\textsf{OPT}}
\newcommand{\I}{\mathsf{I}}
\newcommand{\Q}{\mathsf{Q}}
\newcommand{\W}{\mathsf{W}}
\newcommand{\A}{\mathsf{A}}
\renewcommand{\P}{\mathsf{P}}
\title{Online Makespan Minimization: The Power of Restart\footnote{A preliminary version of the paper to appear in APPROX 2018.}}
\author{Zhiyi Huang\thanks{Department of Computer Science, The University of Hong Kong. \texttt{zhiyi@cs.hku.hk}. Partially supported by the Hong Kong RGC under the grant HKU17202115E.}
	\and Ning Kang\thanks{Department of Computer Science, The University of Hong Kong. \texttt{nkang@cs.hku.hk}.}
	\and Zhihao Gavin Tang\thanks{Department of Computer Science, The University of Hong Kong. \texttt{zhtang@cs.hku.hk}.}
	\and Xiaowei Wu\thanks{Department of Computing, The Hong Kong Polytechnic University. \texttt{wxw0711@gmail.com.} Part of the work was done when the author was a postdoc at the University of Hong Kong.}
	\and Yuhao Zhang\thanks{Department of Computer Science, The University of Hong Kong. \texttt{yhzhang2@cs.hku.hk}.}
}
\date{}
\begin{document}

\begin{titlepage}
	\thispagestyle{empty}
	\maketitle
	
	\begin{abstract}
		We consider the online makespan minimization problem on identical machines. Chen and Vestjens (ORL 1997) show that the largest processing time first (LPT) algorithm is 1.5-competitive. For the special case of two machines, Noga and Seiden (TCS 2001) introduce the SLEEPY algorithm that achieves a competitive ratio of $(5 - \sqrt{5})/2 \approx 1.382$, matching the lower bound by Chen and Vestjens (ORL 1997). Furthermore, Noga and Seiden note that in many applications one can kill a job and restart it later, and they leave an open problem whether algorithms with restart can obtain better competitive ratios.
		
		We resolve this long-standing open problem on the positive end. Our algorithm has a natural rule for killing a processing job: a newly-arrived job replaces the smallest processing job if 1) the new job is larger than other pending jobs, 2) the new job is much larger than the processing one, and 3) the processed portion is small relative to the size of the new job. With appropriate choice of parameters, we show that our algorithm improves the 1.5 competitive ratio for the general case, and the 1.382 competitive ratio for the two-machine case.
	\end{abstract}
\end{titlepage}

\section{Introduction}\label{sec:intro}

We study in this paper the classic online scheduling problem on identical machines.
Let there be $m$ identical machines, and a set of jobs that arrive over time.
For each job $j$, let $r_j$ denote its release time (arrival time), and $p_j$ denote its processing time (size).
We assume without loss of generality that all $r_j$'s and $p_j$'s are distinct.
We seek to schedule each job on one of the $m$ machines such that the makespan (the completion time of the job that completes last) is minimized.

We adopt the standard assumption that there is a pending pool such that jobs released but not scheduled are in the pending pool.
That is, the algorithm does not need to assign a job to one of the machines at its arrival; it can decide later when a machine becomes idle.
Alternatively, the \emph{immediate-dispatching} model has also been considered in some papers (e.g.,~\cite{algorithmica/AvrahamiA07}).

We consider the standard competitive analysis of online algorithms.
An algorithm is $(1+\gamma)$-competitive if for any online sequence of jobs, the makespan of the schedule made by the algorithm is at most $(1+\gamma)$ times the minimum makespan in hindsight.
Without loss of generality, (by scaling the job sizes) we assume the minimum makespan $\opt =1$ (for analysis purpose only).

\citet{orl/ChenV97} consider a greedy algorithm called \emph{largest processing time first} (LPT): whenever there is an idle machine, schedule the largest job in the pending pool.
They prove that the LPT algorithm is $1.5$-competitive and provide a matching lower bound (consider $m$ jobs of size $0.5$ followed by a job of size $1$).
They also show that no online algorithm can achieve a competitive ratio better than $1.3473$. 
For the special case when there are only two machines, \citet{tcs/NogaS01} introduce the \textsf{SLEEPY} algorithm that achieves a tight $(5 - \sqrt{5})/2 \approx 1.382$ competitive ratio, due to a previous lower bound given by \citet{orl/ChenV97}.

The $1.382$ lower bound (for two machines) holds under the assumption that whenever a job is scheduled, it must be processed all the way until its completion.
However, as noted in \citet{tcs/NogaS01}, many applications allow \textbf{restart}: 
a job being processed can be killed (put into pending) and restarted later to make place for a newly-arrived job;
a job is considered completed only if it has been continuously processed on some machine for a period of time that equals to its size.
In other words, whenever a job gets killed, all previous processing of this job is wasted. 

Note that the restart setting is different from the \emph{preemptive} setting, in which the processed portion is not wasted.
\citet{tcs/NogaS01} leave the following open problem:
\emph{Is it possible to beat the $1.382$ barrier with restart?}

In this paper, we bring an affirmative answer to this long-standing open problem.

We propose a variant of the LPT algorithm (with restart) that improves the $1.5$ competitive ratio for the general case, and the $1.382$ competitive ratio for the two-machine case.

\paragraph{Our Replacement Rule.}
A na\"{i}ve attempt for the replacement rule would be to replace a job whenever the newly-arrived job has a larger size.
However, it is easy to observe that the na\"{i}ve attempt fails even on one machine: the worst case competitive ratio is $2$ if we keep replacing jobs that are almost completed (with jobs of slightly larger size). 
Hence we should prevent a job from being replaced if a large portion has been processed.
Moreover, we allow a newly-arrived job to replace a processing job only if it has a much larger size, in order to avoid a long chain of replacements.
As we will show by an example in Section~\ref{sec:candidate_alg}, the worst case competitive ratio is $1.5$ if a job of size $1$ is replaced by a job of size $1+\epsilon$, which is in turn replaced by a job of size $1+2\epsilon$, etc.

We hence propose the following algorithm that applies the above rules.

\paragraph{\lptr.}
As in the LPT algorithm, our algorithm schedules the largest pending job whenever there is an idle machine.
The main difference is that our algorithm may kill a processing job to make place for a newly-arrived job according to the following rule. Upon the arrival of a job $j$, we kill a processing job $k$ (i.e., put $k$ into pending) and schedule $j$ if:
\begin{compactitem}
	\item[1.] $j$ is the largest pending job and $k$ is the smallest among the $m$ processing jobs;
	\item[2.] the processed portion of $k$ is less than $\alpha p_j$;
	\item[3.] the size of $j$ is more than $1 + \beta$ times larger than $k$, i.e., $p_j > (1+\beta) p_k$,
\end{compactitem}
where $0 < \alpha, \beta < \frac{1}{2}$ are parameters of the algorithm.
We call such an operation a \emph{replacement} (i.e. $j$ replaces $k$).

Intuitively, the parameter $\alpha$ provides a bound on the total amount of wasted processing (in terms of the total processing time); while the parameter $\beta$ guarantees an exponential growth in the processing time of jobs if there is a chain of replacements.
With appropriate choice of parameters, we show the following results.

\begin{theorem}\label{theorem:main}
	\lptr, with parameters $\alpha = \frac{1}{200}$ and $\beta = \sqrt{2}-1$, is $(1.5-\frac{1}{20000})$-competitive for the \omm problem with restart.
\end{theorem}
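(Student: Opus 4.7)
The plan is to proceed by contradiction. Normalize so that $\opt=1$ and suppose \lptr achieves makespan $\alg > 1.5 - \epsilon$ on some instance, where $\epsilon = 1/20000$. Let $j^*$ be the job completing last and let $t^* = \alg - p_{j^*}$ be its start time. Since $p_{j^*} \le \opt = 1$, we have $t^* > 0.5 - \epsilon$, and all $m$ machines must be busy with some jobs $k_1,\dots,k_m$ (with $k_1$ the smallest) throughout an interval ending at $t^*$ that covers $[r_{j^*}, t^*]$, for otherwise $j^*$ would have been scheduled earlier.

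Next I would examine the moment $r_{j^*}$ when $j^*$ arrived. Because $j^*$ did not replace $k_1$ at that time, at least one of the three replacement conditions failed, and I case-split accordingly:
\begin{compactitem}
\item[(a)] $j^*$ was not the largest pending job; then a larger job $j'$ was pending, and tracking $j'$'s schedule shows it must complete no earlier than $j^*$, contradicting $j^*$'s role as the last-completing job.
\item[(b)] $k_1$ had already received at least $\alpha p_{j^*}$ units of processing; this forces $p_{k_1} \ge \alpha p_{j^*}$ and pins down how soon $k_1$ finishes, yielding a bound on $t^*$ strong enough to contradict $t^* > 0.5 - \epsilon$.
\item[(c)] $p_{j^*} \le (1+\beta) p_{k_1}$; since $k_1$ is the smallest processing job, every $k_i$ has size at least $p_{j^*}/(1+\beta)$, and summing $\sum_i p_{k_i} + p_{j^*}$ together with the release-time constraints forces $\opt$ strictly above $1$, a contradiction.
\end{compactitem}

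The third ingredient is a volume argument controlling the wasted processing from replacements. Let $\W$ be the total useful processing completed by time $t^*$, so that $m \cdot t^* \le \W + (\text{wasted work}) + (\text{total idle time})$. Condition 2 caps each individual waste by $\alpha p_{\text{new}}$, while condition 3 forces any chain of replacements on a single machine to grow geometrically in size by the factor $1+\beta = \sqrt{2}$. Charging each wasted slice to the final surviving descendant of its chain and summing the resulting geometric series yields total wasted work at most $O(\alpha) \cdot m$, since the total job size is at most $m$ by $\opt = 1$. Combining $m \cdot t^* \le \W + O(\alpha m)$ with $\W \le m - p_{j^*}$ and $t^* + p_{j^*} > 1.5 - \epsilon$ then produces the desired contradiction; the specific numerical choice $\alpha = 1/200$, $\epsilon = 1/20000$ is exactly what is needed to close the gap simultaneously in all three cases.

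The main technical obstacle will be the bookkeeping in the presence of overlapping replacement chains across different machines, together with reconciling the two regimes $p_{j^*} \approx 1$ (where case (c) is nearly tight and the aggregate-size bound is decisive) and $p_{j^*} \ll 1$ (where $t^*$ is bounded away from $0.5$ and the volume argument must carry the day). Verifying that the chosen parameters actually close all three case-bounds at once, and tracing through wasted-work chains without losing the $O(\alpha)$ factor, is where the bulk of the calculation will live.
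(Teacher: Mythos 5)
Your case analysis at $r_{j^*}$ and your volume argument both have serious gaps, and together they miss the central technical ingredient of the paper.

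On the case split: case~(a) does not yield a contradiction. If a larger job $j'$ is pending when $j^*$ arrives, $j'$ can perfectly well be scheduled earlier than $j^*$ and complete earlier than $j^*$. There is no reason the LPT rule forces the largest pending job to finish last — later schedule starts with smaller sizes routinely complete last. Case~(b) is also too vague: knowing that $k_1$ has been processed at least $\alpha p_{j^*}$ does not pin down when $k_1$ finishes unless you already know $p_{k_1}$, and case~(c) (all processing jobs plus $j^*$ have size $> p_{j^*}/(1+\beta)$) only forces $\opt > 1$ when $p_{j^*}$ is already large enough (roughly $p_{j^*}/(1+\beta) > 1/2$ is what you need to pack $m+1$ such jobs infeasibly). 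For small $p_{j^*}$ this argument says nothing. The paper does not case-split on why $j^*$ fails the replacement test at $r_{j^*}$; it instead looks at the final start time $s_n$ and separates the analysis into a bin-packing regime (large $p_n$) and an efficiency regime (small $p_n$), each handled by a different machine.

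The more fundamental gap is in your volume argument, where you jump from $m t^* \le \W + (\text{wasted work}) + (\text{idle time})$ to $m t^* \le \W + O(\alpha m)$. Idle time is \emph{not} bounded by $O(\alpha m)$ for this algorithm. The paper gives an explicit example (already for LPT, with zero replacements and hence zero waste) where $m$ jobs of size $1/2$ arrive at time $0$ followed by $m/2$ jobs of size $1-\epsilon$ at time $\epsilon$: the schedule is $m/4$ behind the optimal amount of processing at time $1$, entirely due to suboptimal scheduling decisions and the resulting idle periods, not waste. This is exactly the reason the paper needs its main technical contribution, the Leftover Lemma ($\Delta_t \le \frac{1}{4}tm + \W_t$): the leftover workload is controlled by waste \emph{plus a $\frac{1}{4}tm$ term}, and that $\frac{1}{4}$ factor is established by a nontrivial inductive averaging argument over marginal idle times (Claim~\ref{claim:1/4} in the paper). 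Without this lemma or something equivalent, your inequality $m t^* \le \W + O(\alpha m)$ is simply false, and combining it with $\W \le m - p_{j^*}$ and $t^* + p_{j^*} > 1.5 - \epsilon$ does not produce a valid contradiction. Your proposal also lacks any analogue of the paper's bin-packing lemma (Lemma~\ref{lemma:impossible_cases}) and the hybrid argument (Section~\ref{sec:general_boundary}) used to close the gap in the medium-size regime $\frac{1}{2+\beta} < p_n \le \frac{1}{2+\alpha}$, which is where most of the technical work actually lives.
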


\begin{theorem}\label{theorem:two_machines}
	\lptr, with parameters $\alpha = \beta = 0.2$, is $1.38$-competitive for the \omm problem with restart on two machines.
\end{theorem}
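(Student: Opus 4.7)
The plan is to argue by contradiction: assume an instance where \lptr, with $\alpha = \beta = 0.2$ and $m = 2$, produces a makespan $C > 1.38$, normalized so that $\opt = 1$. Let $\ell$ denote the job whose completion attains $C$, and let $s_\ell = C - p_\ell$ be the starting time of its final (uninterrupted) processing period. Three basic facts follow from $\opt = 1$: every job $j$ satisfies $p_j \le 1$ and $r_j + p_j \le 1$; the total useful processing is $\sum_j p_j \le 2$; and the total processing capacity across the two machines up to time $C$ equals $2C$.

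First I would analyze the interval $[r_\ell, s_\ell]$. Both machines must be busy throughout, since otherwise the idle machine would have started $\ell$, and $\ell$ is the largest pending job at every moment inside the interval. Since the algorithm does not use $\ell$ to replace any processing job during this interval, at every moment at least one of conditions (ii) and (iii) of the replacement rule must fail on the smaller of the two processing jobs: either its processed portion is already $\ge \alpha p_\ell = 0.2\, p_\ell$, or its size $p_k$ satisfies $p_\ell \le 1.2\, p_k$, i.e., $p_k \ge p_\ell / 1.2$. Next I would bound wasted processing. Because only a newly-arrived job can trigger a replacement, each job acts as replacer at most once and wastes strictly less than $0.2\, p_j$; summing crudely gives total wasted work at most $0.2 \sum_j p_j \le 0.4$, and a sharper bound exploits the geometric growth $p_{j_{i+1}} > 1.2\, p_{j_i}$ along any chain of consecutive replacements on one machine, charging the waste against useful processing of the chain's members.

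Then I would split into cases on $p_\ell$. For \emph{large} $p_\ell$, the $\opt = 1$ pigeonhole forces $\ell$ to fill nearly a whole machine in the optimum, leaving capacity at most $1 - p_\ell$ on the other; combined with the work accumulated on the non-$\ell$ machine up through $s_\ell$, this contradicts $\sum_j p_j \le 2$. For \emph{small} $p_\ell$, condition (iii) fails easily so condition (ii) must be the active one, ensuring that each processing job on each machine has accumulated at least $0.2\, p_\ell$ of useful work throughout $[r_\ell, s_\ell]$; integrating gives a lower bound on total useful processing that, together with $p_\ell$ itself, exceeds $2$ under $C > 1.38$. For \emph{intermediate} $p_\ell$, both failure modes interleave; here I would track the last job that started on the non-$\ell$ machine before $s_\ell$, use its release-time bound $r_k + p_k \le 1$, and combine with the wasted-processing estimate to obtain the contradiction.

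The main obstacle is the intermediate regime, where neither the $\opt$-pigeonhole nor the pure work inequality is individually tight. The specific parameter choice $\alpha = \beta = 0.2$ is calibrated precisely so that, together with the wasted-processing bound, the various inequalities compose to forbid $C > 1.38$. Verifying this amounts to a tight optimization over the permissible values of $p_\ell$, $r_\ell$, and the sizes of jobs running during $[r_\ell, s_\ell]$, and to checking that the Noga--Seiden tight instance (which lives exactly in this intermediate-$p_\ell$ regime) is broken by the existence of a legal replacement triggered somewhere along the schedule. Treating wasted work via replacement chains, rather than bounding each replacement in isolation, is the key quantitative ingredient that permits breaking the $(5-\sqrt{5})/2$ barrier.
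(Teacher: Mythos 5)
Your plan identifies the right high-level structure---argue by contradiction, look at the last job $\ell$ and its pending interval, split into cases on $p_\ell$, and bound wasted processing---but it is missing the two technical pillars that actually make the paper's proof go through, and the sketch for the small-$p_\ell$ case has a genuine gap.

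The central missing tool is the Leftover Lemma (Lemma~\ref{lemma:leftover}), which bounds $\Delta_t$, the amount of work $\opt$ can finish before time $t$ beyond what the algorithm finishes, by $\frac{1}{4}tm + \W_t$. Your small-$p_\ell$ argument says ``each processing job has accumulated at least $0.2\,p_\ell$ of useful work; integrating gives a lower bound on total useful processing that exceeds $2$.'' This does not work as stated: the fact that the machines are busy throughout $[r_\ell, s_\ell]$ only tells you that $2(s_\ell - r_\ell)$ of processing \emph{power} is used; it says nothing directly about whether that power is spent on work $\opt$ could not also have done earlier. LPT alone (with no replacement at all, $\W = 0$) already keeps both machines busy through $[r_\ell, s_\ell]$ and is only $1.5$-competitive, so busyness cannot by itself give you the contradiction. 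The quarter-term in the Leftover Lemma is exactly what quantifies the unavoidable inefficiency of greedy scheduling, and it is proved via a non-trivial induction (Claim~\ref{claim:1/4}) that your proposal does not reproduce or replace. Without it, the efficiency inequality $\alg \le 1 + p_n - \frac{\P}{2} + \frac{t}{4} + \frac{\W_1}{2}$ (equation (\ref{equation:efficiency}) in the paper) never appears, and the ``integrate to exceed $2$'' step has no foundation.

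The second missing pillar is the bin-packing upper bound $p_n \le \frac{1}{2+\alpha}$, obtained through Lemma~\ref{lemma:impossible_cases} and the structural lemmas of Section~\ref{sec:structural}. You gesture at an $\opt$-pigeonhole for ``large'' $p_\ell$, which is the right spirit, but the actual argument must account for jobs that are replaced and rescheduled, which your sketch does not track. Finally, your stated ``key quantitative ingredient'' --- charging waste along replacement chains using geometric growth --- is not what the paper uses for two machines. The paper instead bounds the number of \emph{major replacements} (replacements of jobs with size $\ge p_n$): Lemmas~\ref{lemma:two_machines_irreplaceable} and~\ref{lemma:major_replacement} show there are zero or one such replacements depending on $p_n$, which directly shrinks $\W_1$ because the expensive waste comes from large replacers. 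The notion of ``uncharged jobs'' in Section~\ref{ssec:some_basic_facts} is then used to net waste against processing after $s_n$, giving the bound on $\W_1 - \P$ that closes the intermediate case. The crude bound $\W_1 \le 0.4$ you mention is correct but far too weak: plugging it into the efficiency inequality gives only $\alg \le 1.25 + 0.2 + \frac{p_n}{2}$, which does not reach $1.38$ for $p_n$ near $0.38$.
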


There are many other natural candidate replacement rules.
We list some candidate algorithms that we have considered and their counter examples in Sec~\ref{sec:candidate_alg}.

\paragraph{Our Techniques.}
The main focus of our paper is the general case, i.e., on $m$ machines.
The analysis for the two-machine case is built on the general case by refining some of the arguments.

We adopt an idea from~\citet{orl/ChenV97} to look at the last completed job in our schedule. 
Intuitively, only jobs with size comparable to that of the last job matter.
We develop two kinds of arguments, namely the \emph{bin-packing argument} and the \emph{efficiency argument}.

Assume for contrary that the algorithm has a makespan strictly larger than $1 + \gamma$, where $\gamma := \frac{1}{2}-\frac{1}{20000}$,
we use the bin-packing argument to give an upper bound on the size of the last completed job.
Assume that the last job is large, we will find a number of large jobs that cannot be packed into $m$ bins of size $1$ (recall that we assume $\opt = 1$).
In other words, to schedule this set of jobs, one of the $m$ machines must get a total workload strictly greater than $1$.
For example, finding $2m+1$ jobs of size strictly greater than $\frac{1}{3}$ would suffice.
We refer to such a set of large jobs as an \emph{infeasible} set of jobs.

We then develop an efficiency argument to handle the case when the last job is of small size. 
The central of the argument is a Leftover Lemma that upper bounds the difference of total processing done by the algorithm and by $\opt$.
As our main technical contribution, the lemma is general enough to be applied to all schedules.

Fix any schedule (produced by some algorithm $\alg$) and a time $t$.
Let $\mathcal{M}$ denote the set of machines.
For each machine $M\in\mathcal{M}$, let $\W(M,x)\in\{0,1\}$ be the indicator function of the event that ``at time $x$, machine $M$ is not processing while there are pending jobs''.
Define $\W_t = \sum_{M \in \mathcal{M}}\int_0^t \W(M,x) dx$ to be the total waste (of processing power) before time $t$.
We show (in Section~\ref{sec:leftover}) the following lemma that upper bounds the leftover workload.

\begin{lemma}[Leftover Lemma]\label{lemma:leftover}
	For all time $t$, let $\Delta_t$ be the difference in total processing time before time $t$ between $\alg$ and $\opt$.
	We have $\Delta_t \leq \frac{1}{4}tm + \W_t$.
\end{lemma}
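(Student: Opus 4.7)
The plan is to express $\Delta_t$ as a time integral and bound the integrand. For each $x \in [0,t]$, partition the $m$ machines of ALG into three groups according to their state at $x$: let $a(x)$ count those actively processing a job, $w(x)$ count those idle while the pending pool is nonempty, and $g(x) = m - a(x) - w(x)$ count those idle while the pending pool is empty. Then $\W_t = \int_0^t w(x)\,dx$ and set $G_t := \int_0^t g(x)\,dx$. Letting $a^*(x)$ denote the number of machines OPT uses at time $x$, we may write $\Delta_t = \int_0^t \bigl(a^*(x) - a(x)\bigr)\,dx$, and the coarse inequality $a^*(x) \le m = a(x)+w(x)+g(x)$ immediately yields $\Delta_t \le \W_t + G_t$.

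The nontrivial content of the lemma is therefore to replace the $G_t$ term by $\frac{1}{4}tm$. The key observation is that $g(x) > 0$ is only possible when the pending pool is empty, so at such times every released-but-uncompleted job must already be in processing, forcing $a(x) = R(x) - C(x)$, where $R(x)$ is the number of released jobs and $C(x)$ the number of jobs ALG has completed by $x$. On OPT's side we always have $a^*(x) \le R(x) - C^*(x)$ with $C^*(x)$ counting OPT's completions. Subtracting gives $a^*(x) - a(x) \le C(x) - C^*(x)$ at every time with $g(x)>0$. Intuitively, OPT can outpace ALG during a genuine-idle instant only by the margin by which ALG currently leads in the count of finished jobs, because ALG has literally no further work available.

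The main obstacle, and the step I expect to require the most care, is showing that
\[
\int_0^t \bigl(C(x) - C^*(x)\bigr)^+ \mathbf{1}[g(x) > 0]\, dx \;\le\; \frac{1}{4} t m .
\]
The plan is to charge each unit of ALG's ``completion lead'' to processing slots of the corresponding jobs in OPT's schedule. Because $\opt = 1$, every job satisfies $p_j \le 1$ and $\sum_j p_j \le m$, and OPT finishes all work by time $1$; consequently the time OPT needs to spend on each job is confined to $[0,1]$ and can be compared directly with when ALG completes the same job. Expressing the integral as a sum over jobs of intervals between ALG's completion time and the corresponding OPT processing window, the extremal case balances how early ALG finishes against how much of $[0,1]$ OPT still needs for the very same jobs; that trade-off should produce the $\frac{1}{4}$ through an inequality of the $y(1-y) \le \frac{1}{4}$ type. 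Making this charging work for an \emph{arbitrary} schedule, rather than relying on LPT-specific structure, is what makes the lemma the main technical contribution of the paper.
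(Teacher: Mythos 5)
Your opening decomposition is sound: splitting each machine's state at time $x$ into processing, idle-with-pending, and genuine-idle cleanly isolates the waste contribution as $\W_t$, and the identity $a(x) = R(x) - C(x)$ at genuine-idle times (pending pool empty) together with $a^*(x) \le R(x) - C^*(x)$ does give $a^*(x)-a(x)\le C(x)-C^*(x)$ there. This is a genuinely different starting point from the paper, which tracks the cumulative \emph{pending time} $\delta_j$ of the jobs processed at the last idle instant (the quantity $\A_t$) rather than completion counts.

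However, the reduction is too lossy and the sub-claim you isolate is false. The inequality $a^*(x)\le R(x)-C^*(x)$ discards the constraint $a^*(x)\le m$, and at a genuine-idle time $\opt$ may have many released-but-unfinished jobs that it is simply holding back; $C(x)-C^*(x)$ then wildly overshoots $a^*(x)-a(x)$. Concretely, take $m=2$, one job $J$ of size $1$ and $100$ jobs of size $0.001$, all released at time $0$, so $\opt=1$. Let the fixed optimal schedule run $J$ on machine~1 on $[0,1]$ and defer all small jobs to $[0.9,1]$ on machine~2 (still optimal). Under LPT (or any schedule that processes the small jobs on $[0,0.1]$ and then idles), every $x\in(0.1,0.9)$ is a genuine-idle time with $C(x)=100$, $C^*(x)=0$, so
\begin{equation*}
\int_{0}^{1}\bigl(C(x)-C^*(x)\bigr)^{+}\,\mathbf{1}[g(x)>0]\,dx \;\ge\; 80,
\end{equation*}
while $\tfrac14 tm = \tfrac12$. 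Yet $\Delta_1=0$, so the lemma itself is fine; the failure is entirely in passing from $a^*-a$ to $C-C^*$. Any fix has to retain the fact that $\opt$ is also limited to $m$ machines, which is exactly what the paper's $\A_t$ (built from the pending-time functions $\delta_j$ of jobs in $J(t')$) and $\I_t$ encode via Claim~\ref{claim:leftover_leq_min}; the factor $\frac14$ then emerges from the sequence inequality in Claim~\ref{claim:1/4}, applied inductively over marginal idle times to balance idle-period lengths $a_i$ against busy-period lengths $b_i$ weighted by the busy-machine fractions $h_i$. Your AM--GM intuition is in the right spirit (it is exactly the $k=1$ base case there), but it needs to be driven by per-machine busy/idle structure of the schedule, not by completion counts.
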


Observe that the total processing power (of $m$ machines) before time $t$ is $tm$.
The Leftover Lemma says that compared to any schedule (produced by algorithm $\alg$), the \emph{extra} processing the optimal schedule can finish before time $t$,
is upper bounded by the processing power wasted by the schedule (e.g., due to replacements),
plus a quarter of the total processing power, which comes from the sub-optimal schedule of jobs.

Consider applying the lemma to the final schedule\footnote{Since a job can be scheduled and replaced multiple times, its start time is finalized only when it is completed.} produced by our algorithm.
Since our algorithm schedules a job whenever a machine becomes idle,
the waste $\W_t$ comes only from the processing (before time $t$) of jobs that are replaced.
Thus (by our replacement) we can upper bound $\W_t$ by $\alpha$ fraction of the total size of jobs that replace other jobs.

We remark that the above bound on the leftover workload is tight for LPT (for which $\W_t = 0$).
Consider $m$ jobs of size $0.5$ arriving at time $0$, followed by $m/2$ jobs of size $1-\epsilon$ arriving at time $\epsilon$.
The optimal schedule uses $\frac{m}{2}$ machines to process the size~$(1-\epsilon)$ jobs and $\frac{m}{2}$ machines to process the size~$0.5$ jobs (two per machine), finishing all jobs at time $1$.
LPT would schedule all the size~$0.5$ jobs first; all of the $\frac{m}{2}$ size~$(1-\epsilon)$ jobs have half of their workload unprocessed at time $1$.
Therefore, the amount of leftover workload at time $t = 1$ is $\frac{m}{4}$.

\paragraph{Other Work.}
The online scheduling model with restart has been investigated in the problem of scheduling jobs on a single machine to maximize the number of jobs completed before their deadlines.
\citet{hoogeveen2000line} study the general case and propose a $2$-competitive algorithm with restart.
Subsequently, \citet{siamcomp/ChrobakJST07} consider the special case when jobs have equal lengths.
They propose an improved $\frac{3}{2}$-competitive algorithm with restart for this special case, and prove that this is optimal for deterministic algorithms.
However, the restart rule and its analysis in our paper do not bear any obvious connections to those in~\citet{hoogeveen2000line} and~\citet{siamcomp/ChrobakJST07} due to the different objectives.

Other settings of the online makespan minimization problem have been studied in the literature.
A classic setting is when all machines are identical and all jobs have release time $0$, but the algorithm must immediately assign each job to one of the machines at its arrival (immediate dispatching). 
This is the same as online load balancing problem.
\citet{DBLP:journals/siamam/Graham69} proves that the natural greedy algorithm that assigns jobs to the machine with the smallest workload is $(2-\frac{1}{m})$-competitive in this setting, which is optimal for $m \le 3$ (due to folklore examples).
A series of research efforts have then been devoted to improving the competitive ratio when $m$ is large (e.g.,~\cite{doi:10.1137/S0097539797324874,BARTAL1995359,KARGER1996400}).
For $m=4$, the best upper bound is 1.7333~\cite{CHEN1994221}, while the best lower bound stands at $1.7321$~\cite{doi:10.1137/S0097539702403438}.
For $m$ that tends to infinity, the best upper bound is $1.9201$ \cite{JOS:JOS54}, while the best lower bound is $1.880$~\cite{JFRUDIN2001}.

A variant of the above setting is that a buffer is provided for temporarily storing a number of jobs; when the buffer is full, one of the jobs must be removed from the buffer and allocated to a machine (e.g.,~\cite{DBLP:conf/imsccs/LiZSC07,DBLP:journals/jco/DosaE10}). \citet{Kellerer1997235} and \citet{DBLP:journals/ipl/Zhang97} use algorithms with a buffer of size one to achieve an improved $4/3$ competitive ratio for two machines. 
\citet{DBLP:journals/siamcomp/EnglertOW14} characterize the best ratio achievable with a buffer of size $\Theta(m)$, where the ratio is between $4/3$ and $1.4659$ depending on the number of machines $m$.
When both preemption and migration are allowed, \citet{DBLP:journals/orl/ChenVW95} give a $1.58$-competitive algorithm without buffer, matching the previous lower bound by \citet{DBLP:journals/ipl/ChenVW94}.
\citet{DBLP:journals/siamdm/DosaE11} achieve a ratio of $4/3$ with a buffer of size $\Theta(m)$.

Finally, if the machines are related instead of identical, the best known algorithm is $4.311$-competitive by \citet{DBLP:journals/jal/BermanCK00}, while the best lower bound is $2$ by \citet{DBLP:journals/orl/EpsteinS00}. 
When preemption is allowed, \citet{DBLP:journals/algorithmica/EbenlendrJS09} show that the upper bound can be improved to $e$. 
For the special case of two related machines, the current best competitive ratio is $1.53$ by \citet{DBLP:conf/soda/EpsteinNSSW99} without preemption, and $4/3$ with preemption by \citet{DBLP:journals/algorithmica/EbenlendrJS09} and \citet{DBLP:journals/orl/WenD98}.

\paragraph{Organization.}
We first provide some necessary definitions in Section~\ref{sec:preli}.
Then we prove the most crucial structural property (Lemma~\ref{lemma:leftover}, the Leftover Lemma) in Section~\ref{sec:leftover}, which essentially gives a lower bound on the efficiency of all schedules.
We present the details of the bin-packing argument and efficiency argument in Section~\ref{sec:general_case}, where our main result Theorem~\ref{theorem:main} is proved.
The special case of two machines is considered in Section~\ref{sec:two_machine}, where Theorem~\ref{theorem:two_machines} is proved.
Finally, we prove in Section~\ref{appendix:restart_hardness} that no deterministic algorithm, even with restart, can get a competitive ratio better than $\sqrt{1.5} \approx 1.225$.

\section{Preliminaries}\label{sec:preli}

Consider the online makespan minimization with $m$ identical machines and jobs arriving over time.
Recall that for each job $j$, $r_j$ denotes its release time and $p_j$ denotes its size.
Let $\opt$ and $\alg$ be the makespan of the optimal schedule and our schedule, respectively.
Recall that we assume without loss of generality that $\opt = 1$ (for analysis purpose only).
Hence we have $r_j + p_j \leq 1$ for all jobs $j$.
Further, let $s_j$ and $c_j := s_j + p_j$ denote the start and completion time of job $j$, respectively, in the \textbf{final} schedule produced by our online algorithm.
Note that a job can be scheduled and replaced multiple times.
We use $s_j(t)$ to denote the last start time of $j$ before time $t$.

We use $n$ to denote the job that completes last, i.e., we have $\alg = c_n = s_n + p_n$.

We consider the time horizon as continuous, and starts from $t=0$.
Without loss of generality (by perturbing the variables slightly), we assume that all $r_i$'s, $p_i$'s and $s_i$'s are different.

\begin{definition}[Processing Jobs]
	For any $t \leq \alg$, we denote by $J(t)$ the set of jobs that are being processed at time $t$, including the jobs that are completed or replaced at $t$ but excluding the jobs that start at $t$.
\end{definition}

Note that $J(t)$ is defined based on the schedule produced by the algorithm at time $t$.
It is possible that jobs in $J(t)$ are replaced at or after time $t$.

\paragraph{Idle and Waste.}
We say that a machine is \emph{idle} in time period $(a,b)$, if for all $t\in(a,b)$, the machine is not processing any job according to our algorithm, and there is \textbf{no} pending job.
We call time $t$ idle if there exists at least one idle machine at time $t$.
Whenever a job $k$ is replaced by a job $j$ (at $r_j$), we say that a \emph{waste} is created at time $r_j$.
The size of the waste is the portion of $k$ that is (partially) processed before it is replaced.
We can also interpret the waste as a time period on the machine.
We say that the waste \emph{comes from} $k$, and call $j$ the \emph{replacer}.

\begin{definition}[Total Idle and Total Waste]
	For any $t\in[0,1]$, 
	define $\I_t$ as the \emph{total idle time} before time $t$, i.e., the summation of total idle time before time $t$ on each machine.
	Similarly, define $\W_t$ as the \emph{total waste} before time $t$ in the final schedule, i.e., the total size of wastes located before time $t$, where if a waste crosses $t$, then we only count its fractional size in $[0,t]$.
\end{definition}

\section{Bounding Leftover: Idle and Waste} \label{sec:leftover}

In this section, we prove Lemma~\ref{lemma:leftover}, the most crucial structural property.
Recall that we define $\W_t$ as the total waste located before time $t$.
For applying the lemma to general scheduling algorithms, (recall from Section~\ref{sec:intro}) $\W_t$ is defined as $\sum_{M \in \mathcal{M}}\int_0^t \W(M,x) dx$, the total time during which machines are not processing while there are pending jobs.
It is easy to check that the proofs hold under both definitions.
We first give a formal definition of the \emph{leftover} $\Delta_t$ at time $t$.

\begin{definition}[Leftover]
	Consider the final schedule and a fixed optimal schedule $\opt$.
	For any $t\in[0,1]$, let $\Delta_t$ be the total processing $\opt$ does before time $t$, minus the total processing our algorithm does before time $t$.
\end{definition}

Since the optimal schedule can process a total processing at most $m(1-t)$ after time $t$, we have the following useful observation.

\begin{observation}\label{observation:processing_after_t}
	The total processing our algorithm does after time $t$ is at most $m(1-t)+\Delta_t$.
\end{observation}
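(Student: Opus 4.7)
The plan is to prove the observation by a short bookkeeping argument based on two facts: conservation of total useful processing, and the capacity bound on $\opt$ after time $t$. By ``useful processing'' I mean the portions of jobs that actually contribute to their final completion; the wasted portions produced by restarts are tracked separately through $\W_t$. Observe that the optimal offline schedule can be taken to incur no waste, since any schedule with restarts can be converted into one without restarts and of no greater makespan simply by deleting the prefix of any job that gets later restarted. Hence ``processing'' by $\opt$ coincides with useful work.

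First I would record the capacity bound: because $\opt$ has makespan $1$ on $m$ identical machines, the total amount of processing it performs in the interval $[t,1]$ is at most $m(1-t)$. Second, since both $\alg$ and $\opt$ must eventually complete every released job, each performs total useful processing exactly $P := \sum_j p_j$. Writing $P_\alg(t)$ and $P_\opt(t)$ for the amount of useful processing done before time $t$, the definition $\Delta_t = P_\opt(t) - P_\alg(t)$ together with $P_\alg(1) = P_\opt(1) = P$ gives
\[
P - P_\alg(t) \;=\; \bigl(P - P_\opt(t)\bigr) + \Delta_t.
\]
The left-hand side equals the total useful processing $\alg$ performs after time $t$, while the first term on the right-hand side equals $\opt$'s processing after $t$, which is at most $m(1-t)$ by the capacity bound. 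Combining the two inequalities yields the claimed bound $m(1-t)+\Delta_t$.

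I do not anticipate any real obstacle here: the argument is purely a one-line algebraic manipulation once the two underlying facts are in hand. The only mild subtlety is to keep the semantics of ``processing'' consistent — interpreting it as useful work on the $\alg$ side, with wasted machine-time isolated into $\W_t$ — so that the conservation identity $P_\alg(1)=P_\opt(1)=P$ is valid. Once this is clear, the observation follows immediately.
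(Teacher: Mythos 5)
Your argument is correct and is exactly the accounting the paper leaves implicit (the paper's entire ``proof'' is the one-sentence remark that $\opt$ can do at most $m(1-t)$ work after time $t$): both schedules perform the same total useful work $P=\sum_j p_j$, so $\alg$'s useful work after $t$ equals $\opt$'s useful work after $t$ plus $\Delta_t$, and the first term is capped by $m(1-t)$. One small slip: you write $P_{\alg}(1)=P$, which is false in general since $\alg$ has makespan exceeding $1$ and hence does useful processing after time $1$; the correct statement, which you did state in the preceding sentence and which is all the algebra actually needs, is that $\alg$'s useful processing over its entire run equals $P$.
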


We call time $t$ a \emph{marginal idle time} if $t$ is idle and the time immediately after $t$ is not.
We first define $A_t$, which is designated to be an upper bound on the total processing that could have been done before time $t$, i.e., the leftover workload due to sub-optimal schedule.

\begin{definition}[$\A_t$] \label{definition:A_t}
	For all $t\in[0,1]$, if there is no idle time before $t$, then define $\A_t = 0$, otherwise	
	let $t'\leq t$ be the last idle time before $t$.
	Define $\A_t = \sum_{j \in J(t')} \min\{\delta_j, p_j\}$, where $\delta_j := |T_j| = |\{ \theta\in [r_j,t']:\text{job $j$ is pending at time }\theta \}|$ is the total pending time of job $j\in J(t')$ before time $t'$.
\end{definition}

We show the following claim, which (roughly) says that the extra processing $\opt$ does (compared to $\alg$) before time $t$, is not only upper bounded by total idle and waste ($\I_t + \W_t$), but also by the total size or pending time of jobs currently being processed ($\A_t + \W_t$).

\begin{claim}\label{claim:leftover_leq_min}
	We have $\Delta_t \leq \min\{\A_t, \I_t\} + \W_t$ for all $t\in[0,1]$.
\end{claim}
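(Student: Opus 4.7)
The plan is to split Claim~\ref{claim:leftover_leq_min} into the two separate inequalities
\[
\text{(i)}\ \Delta_t \leq \I_t + \W_t \qquad\text{and}\qquad \text{(ii)}\ \Delta_t \leq \A_t + \W_t,
\]
prove each independently, and take the minimum of the right-hand sides.

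For (i) I would use a simple global accounting. On each machine the time in $[0,t]$ partitions into idle time, time spent processing a job that is subsequently killed (which is counted as waste), and useful processing that eventually contributes to a completed job. Summing over all $m$ machines these three parts total $tm$ and contribute $\I_t$, $\W_t$, and $tm - \I_t - \W_t$ respectively, so $\alg$'s useful processing before $t$ is exactly $tm - \I_t - \W_t$. Since $\opt$'s total processing before $t$ is at most $tm$, subtracting gives (i).

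For (ii), if there is no idle time before $t$ then $\A_t = \I_t = 0$ and the bound reduces to (i). Otherwise let $t'\le t$ be the last idle time before $t$. By the definition of idle, no jobs are pending at $t'$, hence every job $j$ with $r_j \le t'$ is either completed by $\alg$ by time $t'$ or currently being processed, i.e., $j\in J(t')$. I would bound $\Delta_{t'}$ by a per-job argument. Writing $o_j(t')$ and $a_j(t')$ for the useful processing done on $j$ before $t'$ by $\opt$ and $\alg$ respectively, completed jobs satisfy $o_j(t') - a_j(t') \le p_j - p_j = 0$, and unreleased jobs contribute $0$; thus $\Delta_{t'} \le \sum_{j\in J(t')}\bigl(o_j(t') - a_j(t')\bigr)$. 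For $j\in J(t')$ we have $a_j(t') = t' - s_j(t')$ and $o_j(t') \le \min(p_j,\, t'-r_j)$. Decomposing the interval $[r_j, s_j(t')]$ into pending segments (total $\delta_j$) and segments in which $j$ was processed but later killed before $t'$ (total $w_j(t')$, the waste that is attributable to $j$ before $t'$) gives the identity $s_j(t') - r_j = \delta_j + w_j(t')$, and hence
\[
o_j(t') - a_j(t') \;\leq\; \min\bigl(p_j,\, \delta_j + w_j(t')\bigr) \;\leq\; \min(p_j, \delta_j) + w_j(t').
\]
Summing over $J(t')$ and using $\sum_{j\in J(t')} w_j(t') \le \W_{t'}$ yields $\Delta_{t'} \le \A_t + \W_{t'}$ (noting that $\A_{t'} = \A_t$ since $t'$ is itself the last idle time $\le t'$). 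To extend to $t$, observe that $(t', t]$ contains no idle time, so $\alg$'s useful processing there is $(t-t')m - (\W_t - \W_{t'})$ while $\opt$'s is at most $(t-t')m$, giving $\Delta_t - \Delta_{t'} \le \W_t - \W_{t'}$. Adding this to the previous bound yields (ii).

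The hard part is the per-job bookkeeping for $j \in J(t')$: one has to identify $\alg$'s useful processing on $j$ as the time elapsed since the \emph{last} start $s_j(t')$ (earlier sessions are wasted), and then cleanly split the shortfall $s_j(t') - r_j$ into a pending-time component (absorbed by $\A_t$) and a wasted-processing component (absorbed by $\W_t$). The elementary inequality $\min(p_j,\, \delta_j + w_j(t')) \le \min(p_j, \delta_j) + w_j(t')$ is precisely what lets these two contributions decouple and match the two terms on the right-hand side of the claim.
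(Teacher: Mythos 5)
Your proposal follows essentially the same route as the paper: split off the easy bound $\Delta_t \leq \I_t + \W_t$, pass to the last idle time $t'$, do a per-job accounting over $J(t')$ that decomposes $[r_j, s_j(t')]$ into pending time $\delta_j$ and earlier wasted sessions, and then absorb the increment from $t'$ to $t$ into waste. The key inequality $\min(p_j,\,\delta_j + w_j(t')) \leq \min(p_j,\delta_j) + w_j(t')$ is also exactly what the paper uses.

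There is one bookkeeping subtlety that you gloss over and the paper treats explicitly. You assert the equality $a_j(t') = t' - s_j(t')$ for $j \in J(t')$, where $a_j(t')$ is \emph{useful} processing by $\alg$ in the \emph{final} schedule. That equality fails whenever $j \in J(t')$ is replaced at some time $> t'$: then the whole segment $[s_j(t'), t']$ is wasted in the final schedule, so $a_j(t') = 0$, and your per-job bound is off by $t' - s_j(t')$. The paper sidesteps this by introducing the quantities $\Delta_t(t)$, $\W_t(t)$ computed from the schedule run only up to time $t$ (so that the processing of jobs in $J(t)$ genuinely counts as useful at that moment), observing the cancellation identity $\W_t - \W_t(t) = \Delta_t - \Delta_t(t)$, and reducing to the clean statement $\Delta_t(t) \leq \A_t + \W_t(t)$ where your equality $a_j(t') = t' - s_j(t')$ does hold. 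Your argument can be repaired without this device — the missing $t' - s_j(t')$ terms are themselves wastes located in $[0,t']$ and are disjoint across jobs, so they can still be charged to $\W_{t'}$ — but as written the derivation is not airtight, and the paper's $\Delta_t(t)$/$\W_t(t)$ normalization is the cleaner way to make the per-job equality literal.
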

\begin{proof}
	First observe that we only need to prove the claim for marginal idle times, as we have $\frac{d\Delta_t}{dt}\leq \frac{d\W_t}{dt}$ (while $\frac{d\A_t}{dt} = \frac{d\I_t}{dt} = 0$) for non-idle time $t$.
	Now suppose $t$ is a marginal idle time.
	
	It is easy to see that $\Delta_t$ is at most $\I_t + \W_t$, the total length of time periods before $t$ during which the algorithm is not processing (in the final schedule).	
	Next we show that $\Delta_t \leq \A_t + \W_t$.
	
	Let $\Delta_t(t)$, $\A_t(t)$ and $\W_t(t)$ be the corresponding variables when the algorithm is run until time $t$.
	Observe that for a job $j\in J(t)$, if it is replaced after time $t$, then it contributes a waste to $\W_t$ but not to $\W_t(t)$.
	Moreover, it has the same contribution to $\Delta_t - \Delta_t(t)$ and to $\W_t$.
	Thus we have $\W_t-\W_t(t) = \Delta_t - \Delta_t(t)$. 
	By definition we have $\A_t = \A_t(t)$.
	
	Hence it suffices to show that $\Delta_t(t) \leq \A_t + \W_t(t)$.
	
	Since $t$ is idle, there is no pending job at time $t$.
	Thus the difference in total processing at time $t$, i.e., $\Delta_t$, must come from the difference (between $\alg$ and $\opt$) in processing of jobs in $J(t)$ that has been completed.
	For each $j\in J(t)$, the extra processing $\opt$ can possibly do on $j$ (compared to $\alg$) is at most $\min \{s_j(t)-r_j,p_j\}$. 
	Hence we have $\Delta_t(t) \leq \sum_{j\in J(t)}\min\{s_j(t)-r_j,p_j\}$.
	
	Recall by Definition~\ref{definition:A_t}, we have $T_j \subset [r_j, s_j(t))$ is the periods during which $j$ is pending.
	
	Thus at every time $t\in [r_j,s_j(t))\setminus T_j$, $j$ is being processed (and replaced later).
	Hence $\big|[r_j,s_j(t))\setminus T_j\big|$ is at most the total wastes from $j$ that are created before $s_j(t)<t$,
	which implies 
	\begin{align*}
		\Delta_t(t) \leq \sum_{j\in J(t)}\min\{s_j(t)-r_j,p_j\} \leq \sum_{j\in J(t)}\min\{\delta_j,p_j\} + \W_t(t) = \A_t + \W_t(t),
	\end{align*}
	as desired.
\end{proof}

We prove the following technical claim.

\begin{claim}\label{claim:1/4}
	For any integer $k\geq1$, given any three sequences of positive reals $\{a_i\}_{i\in[k]}$, $\{b_i\}_{i\in[k]}$ and $\{h_i\}_{i\in[k]}$ satisfying conditions
	\begin{compactitem}
		\item[(1)] $0\leq h_1\leq h_2\leq \ldots\leq h_k\leq 1$;
		\item[(2)] for all $j\in[k]$, we have $\sum_{i \in [j]} a_i h_i \geq \frac{1}{4}\sum_{i\in [j]}(a_i+b_i)$,
	\end{compactitem}
	we have $\sum_{i\in[k]} b_i(1-h_i) \leq \frac{1}{4} \sum_{i\in [k]}(a_i+b_i)$.
\end{claim}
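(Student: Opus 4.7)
The plan is a two-step approach: first reduce to a canonical form via a merging operation, then apply Abel summation together with Cauchy--Schwarz. Throughout, write $A := \sum_i a_i$, $B := \sum_i b_i$, $A_j := \sum_{i \leq j} a_i$, $B_j := \sum_{i \leq j} b_i$.

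\textbf{Step 1: Merging.} I plan to introduce the operation: whenever consecutive indices $j, j+1$ satisfy $b_j/a_j \leq b_{j+1}/a_{j+1}$, collapse them into a single index with $a' = a_j + a_{j+1}$, $b' = b_j + b_{j+1}$, and $h' := (a_j h_j + a_{j+1} h_{j+1})/(a_j + a_{j+1})$ (the $a$-weighted average of $h_j$ and $h_{j+1}$). Since $h' \in [h_j, h_{j+1}]$, monotonicity is preserved. The $a$-weighted averaging gives $(a_j + a_{j+1}) h' = a_j h_j + a_{j+1} h_{j+1}$, so every prefix sum of $a_i h_i$ (and of $a_i + b_i$) matches the corresponding prefix sum of the original sequence, thus preserving condition~(2). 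A direct calculation reveals
\[
\sum_i b'_i(1-h'_i) - \sum_i b_i(1-h_i) = \frac{(h_{j+1}-h_j)(a_j b_{j+1} - a_{j+1} b_j)}{a_j + a_{j+1}} \geq 0,
\]
so the inequality we wish to prove only becomes harder. Each merge reduces $k$ by one, so after finitely many merges we may assume $b_1/a_1 \geq b_2/a_2 \geq \cdots \geq b_k/a_k$.

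\textbf{Step 2: Abel summation and Cauchy--Schwarz.} With the ratios $w_i := b_i/a_i$ now nonincreasing, set $u_i := 4 a_i h_i - a_i$. Condition~(2) rewrites as $U_j := \sum_{i \leq j} u_i \geq B_j$ for all $j$. Summation by parts yields
\[
\sum_i u_i w_i = U_k w_k + \sum_{j=1}^{k-1} U_j (w_j - w_{j+1}) \geq B_k w_k + \sum_{j=1}^{k-1} B_j (w_j - w_{j+1}) = \sum_i \frac{b_i^2}{a_i},
\]
using $U_j \geq B_j$ and $w_j - w_{j+1} \geq 0$. Cauchy--Schwarz then gives $\sum_i b_i^2/a_i \geq B^2/A$. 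Combined with the identity $\sum_i u_i w_i = 4 \sum_i b_i h_i - B$, this produces $4 \sum b_i h_i \geq B(A+B)/A$, hence $\sum b_i(1-h_i) \leq B(3A-B)/(4A)$. The final bound $B(3A-B)/(4A) \leq (A+B)/4$ rearranges to $(A-B)^2 \geq 0$, which holds trivially.

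The main obstacle is verifying the merging step: one must check that condition~(2) continues to hold at every prefix (not only at the merged index) and that the change in $\sum b_i(1-h_i)$ has the stated form and non-negative sign. Once the merging step is validated, the Abel--Cauchy calculation is essentially routine, and the final rearrangement $(A-B)^2 \geq 0$ indicates that the inequality is tight exactly when $A = B$, consistent with the example of LPT noted just before this claim.
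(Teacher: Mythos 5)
Your proof is correct, and it takes a genuinely different route from the paper's. The paper proves the claim by induction on $k$: the base case uses the AM--GM inequality $a_1h_1\cdot b_1(1-h_1)\le\bigl(\tfrac{a_1+b_1}{2}\bigr)^2\cdot\tfrac14$, and the inductive step redistributes a carefully chosen amount $\phi=\min\{b_k,\sum_{i<k}(4a_ih_i-a_i-b_i)\}$ from $b_k$ to $b_{k-1}$ so that the $(k-1)$-prefix condition becomes tight, enabling the induction hypothesis to close the argument. Your proof instead first normalizes the instance: the merge operation (collapsing adjacent indices whenever $b_j/a_j\le b_{j+1}/a_{j+1}$, with $h'$ the $a$-weighted average) strictly preserves all prefix constraints and can only increase the objective, so it reduces matters to the sorted-ratio case; then a clean Abel-summation argument using $U_j\ge B_j$ and $w_j$ nonincreasing gives $\sum u_iw_i\ge\sum b_i^2/a_i$, and Cauchy--Schwarz plus a $(A-B)^2\ge0$ rearrangement finishes. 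I verified all the algebra, including the identity for the change in $\sum b_i(1-h_i)$ under a merge and the conversion $\sum u_iw_i=4\sum b_ih_i-B$; everything checks out. What each approach buys: the paper's induction is self-contained and elementary, needing only AM--GM, but the choice of $\phi$ is somewhat unmotivated. Your version is more modular and transparent, produces the sharper intermediate bound $\sum b_i(1-h_i)\le B(3A-B)/(4A)$, and makes the tightness condition $A=B$ explicit, which matches the extremal LPT example described just before the claim. One minor point worth stating explicitly when writing this up: after a merge the new ratio $b'/a'$ lies between the two old ratios, so a fresh inversion can appear, but since each merge decreases $k$ the process terminates after at most $k-1$ steps regardless.
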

\begin{proof}
	We prove the claim by induction on $k$.
	We first show that the claim holds true when $k=1$.
	Note that we have $a_1 h_1 \cdot b_1(1-h_1) \leq (\frac{a_1+b_1}{2})^2 \cdot (\frac{h_1 + (1-h_1)}{2})^2 = \frac{(a_1+b_1)^2}{16}$. Combine with property (2) we know that $b_1(1-h_1) \leq \frac{1}{4}(a_1+b_1)$.
	\begin{figure}[H]
		\centering
		\includegraphics[width = 0.4\textwidth]{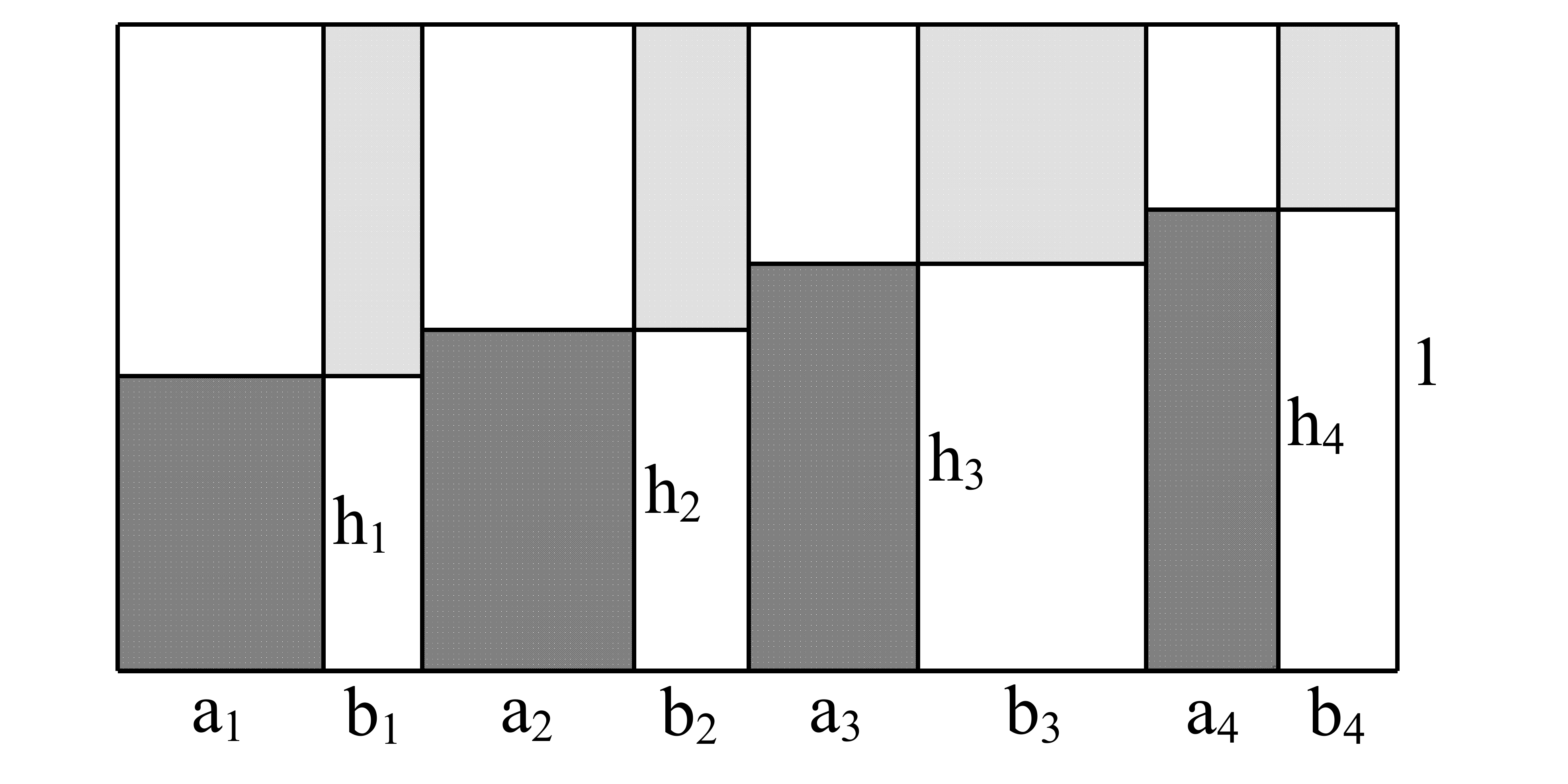}
		\vspace*{-10pt}
		\caption{graph representation of Claim~\ref{claim:1/4} for $k=4$}
	\end{figure}
	
	Now suppose the claim is true for all values smaller than $k$.
	Using induction hypothesis on $\{a_i\}_{i \in  [k-1]}, \{b_i\}_{i \in [k-1]}$ and $\{h_i\}_{i \in [k-1]}$, we have
	\begin{equation*}
		\sum_{i\in[k-1]} b_i(1-h_i) \leq \frac{1}{4} \sum_{i\in [k-1]}(a_i+b_i) \leq \sum_{i\in[k-1]} a_i h_i.
	\end{equation*}
	
	Define $\phi = \min\{b_k, \sum_{i\in[k-1]}(4a_i\cdot h_i-  a_i - b_i)\}$.
	Let $b'_k = b_k - \phi$ and $b'_{k-1} = b_{k-1}+\phi$.
	
	Note that $\{a_i\}_{i \in  [k]}, \{b_i\}_{i \in [k-2]}\cup\{b'_{k-1},b'_k\}$ and $\{h_i\}_{i \in [k]}$ (and their prefixes) satisfy the conditions of the claim:
	first, by definition we have $b'_k >0$ and $b'_{k-1}\geq b_{k-1}>0$; second, since $\{a_i\}_{i \in  [k]}$ and $\{h_i\}_{i \in [k]}$ are not changed, and $b'_{k-1}+b'_k=b_{k-1}+b_k$, if suffices to check condition~(2) for $j = k-1$:
	\begin{equation*}
		\frac{1}{4}\sum_{i\in[k-2]} (a_i+b_i) + \frac{1}{4}(a_{k-1}+b'_{k-1})
		\leq \frac{1}{4}\sum_{i\in[k-1]} (a_i+b_i) + \sum_{i\in[k-1]}(a_i\cdot h_i-  \frac{a_i+b_i}{4})
		= \sum_{i\in[k-1]}(a_i\cdot h_i).
	\end{equation*}
	
	Applying the induction hypothesis on $\{a_i\}_{i \in  [k-1]}, \{b_i\}_{i \in [k-2]}\cup\{b'_{k-1}\}$ and $\{h_i\}_{i \in [k-1]}$,
	\begin{equation*}
		\sum_{i\in[k-1]} b_i(1-h_i) + \phi(1-h_{k-1}) \leq \frac{1}{4} (\sum_{i\in [k-1]}(a_i+b_i) + \phi) \leq \sum_{i\in[k-1]} a_i h_i.
	\end{equation*}
	
	If $\phi = b_k$, then immediately we have 
	\begin{align*}
		\sum_{i\in[k]} b_i(1-h_i) &\leq \sum_{i\in[k-1]} b_i(1-h_i) + b_k(1-h_{k-1}) \leq \frac{1}{4} (\sum_{i\in [k-1]}(a_i+b_i) + b_k) < \frac{1}{4} \sum_{i\in [k]}(a_i+b_i),
	\end{align*}
	as desired.
	Otherwise we have $\phi = \sum_{i\in[k-1]}(4a_i\cdot h_i-  a_i - b_i)$, and hence we have
	\begin{equation*}
		a_k\cdot h_k \geq \frac{1}{4}\sum_{i\in[k]}(a_i+b_i) - \sum_{i\in[k-1]}a_i\cdot h_i = \frac{1}{4}(a_k+b_k+\phi) = \frac{1}{4}(a_k + b'_k),
	\end{equation*}
	which implies $b'_k(1-h_k) \leq \frac{1}{4}(a_k + b'_k)$. Hence we have
	\begin{align*}
		& \sum_{i\in[k]} b_i(1-h_i) = \sum_{i\in[k-2]} b_i(1-h_i) + b'_{k-1}(1-h_{k-1}) + b'_k(1-h_{k}) + \phi(h_{k-1}-h_k)\\
		\leq &\frac{1}{4}\sum_{i\in[k-2]}(a_i+b_i) + \frac{a_{k-1}+b'_{k-1}}{4} + \frac{a_k + b'_k}{4} = \frac{1}{4}\sum_{i\in[k]}(a_i+b_i),
	\end{align*}
	which completes the induction.
\end{proof}

Given Claim~\ref{claim:1/4}, we are now ready to proof the Leftover Lemma.

\begin{proofof}{Lemma~\ref{lemma:leftover}}
	As before, it suffices to prove the lemma for marginal idle times, as we have $\frac{d\Delta_t}{dt}\leq \frac{d\W_t}{dt}$ (while $\frac{d(\frac{1}{4}tm)}{dt} > 0$) for non-idle time $t$.
	Now suppose $t$ is a marginal idle time.
	As before, let $\Delta_t(t)$ and $\W_t(t)$ be the values of variables when the algorithm is run until time $t$.
	
	We prove a stronger statement that $\Delta_t(t) \leq \frac{1}{4}tm +\W_t(t)$, by induction on the number $k$ of marginal idle times at or before time $t$.
	Note that the stronger statement implies the lemma, as we have $\Delta_t - \Delta_t(t) = \W_t- \W_t(t)$.	
	
	In the following, we use a weaker version of Claim~\ref{claim:leftover_leq_min}: we only need $\A_t \leq \sum_{j \in J(t)}\delta_j$.
	
	\textbf{Base Case: $k=1$.}
	Since $t$ is the first marginal idle time, let $g$ be the first idle time, we know that $[g,t]$ is the only idle period.
	Define $J := \{j\in J(t):s_j(t) \leq g\}$ to be the set of jobs that are processed from time $g$ to $t$.
	By definition we have $\I_t \leq (t-g)(m-|J|)$.
	Recall that $\A_t \leq \sum_{j\in J(t)} \delta_j$, where $\delta_j$ is the total pending time of job $j$ before time $t$.
	Hence we have $\delta_j \leq g$ if $j\in J$, and $\delta_j = 0$ otherwise.
	By Claim~\ref{claim:leftover_leq_min} we have
	\begin{align*}
		\Delta_t(t) \leq &\min\{\A_t,\I_t\}+\W_t(t) \leq \min\{ g|J|, (t-g)(m-|J|)\} + \W_t(t) \leq \frac{1}{4}tm + \W_t(t).
	\end{align*}
	
	\textbf{Induction.}
	Now suppose the statement holds for all marginal idle times $0 < t_1 < t_2 < \ldots < t_{k-1}$, and consider the next marginal idle time $t_k$.
	We show that $\Delta_{t_k}(t_k) \leq \frac{1}{4}t_k m +\W_{t_k}(t_k)$.
	First of all, observe that the difference in $\Delta_{t_k}(t_k)$ and $\Delta_{t_j}(t_j)$ must come from the idle periods in $[t_j,t_k]$ and wastes created in $[t_j,t_k]$.
	Hence for all $j<k$ we have
	\begin{equation*}
		\Delta_{t_k}(t_k) - \Delta_{t_j}(t_j) \leq (\I_{t_k} - \I_{t_j}) + \W_{t_k}(t_k) - \W_{t_j}(t_j).
	\end{equation*}
	Hence, if there exists some $j$ such that $\I_{t_k} - \I_{t_j} \leq \frac{1}{4}(t_k-t_j)m$, then by induction hypothesis, 
	\begin{equation*}
		\Delta_{t_k}(t_k) \leq \Delta_{t_j}(t_j) + \frac{1}{4}(t_k-t_j)m + \W_{t_k}(t_k) - \W_{t_j}(t_j)
		\leq \frac{1}{4}t_km + \W_{t_k}(t_k),
	\end{equation*}
	and we are done. Now suppose otherwise.
	
	For all $i\leq k$, let $g_i\in (t_{i-1},t_i)$ be the first idle time after $t_{i-1}$ (assume $g_0 = t_0 = 0$), i.e., $[g_1,t_1], [g_2, t_2], \ldots, [g_k,t_k]$ are the disjoint idle periods.
	Define $J_i := \{j \in J(t_k): r_j \in [t_{i-1}, g_i) \}$.
	Note that for all $j\in J_i$, we have $\delta_j\leq \sum_{x=i}^k (g_x - t_{x-1})$, as $j$ is not pending during idle periods; for $j\in J(r_k)\setminus \cup_{i\leq k}J_i$, we have $\delta_i = 0$.
	For all $i\in[k]$, define
	\begin{equation*}
		\textstyle a_i := t_{k-i+1}-g_{k-i+1}, b_i := g_{k-i+1}-t_{k-i}, h_i := 1-\frac{1}{m}\sum_{x\in[k-i+1]}|J_x|.
	\end{equation*}
	
	We show that the three sequences of positive reals $\{a_i\}_{i\in[k]}$, $\{b_i\}_{i\in[k]}, \{h_i\}_{i\in[k]}$ satisfy the conditions of Claim~\ref{claim:1/4}, which implies $\Delta_{t_k}(t_k) \leq \A_{t_k}+\W_{t_k}(t_k) \leq \frac{1}{4}t_k m + \W_{t_k}(t_k)$ as
	\begin{align*}
		\A_{t_k} & \leq \sum_{j\in J(r_k)} \delta_j \leq \sum_{i\in[k]}\left(\sum_{x=i}^k(g_x-t_{x-1})\right)|J_i|
		= \sum_{i\in[k]} \left((g_i -t_{i-1})\cdot \sum_{x\in[i]}|J_x|\right)\\
		& = m \sum_{i\in[k]} (b_{k-i+1}\cdot (1-h_{k-i+1})) \leq \frac{1}{4}m\sum_{i\in[k]}(a_i+b_i) = \frac{1}{4}t_k m.
	\end{align*}
	Finally, we check the conditions of Claim~\ref{claim:1/4}. Condition~(1) trivially holds.
	For condition~(2), observe that $\I_{t_i} - \I_{t_{i-1}} \leq (t_i - g_i)\cdot (m- \sum_{x\in[i]}|J_x|) = a_{k-i+1}\cdot h_{k-i+1}\cdot m$.
	Hence we have
	\begin{align*}
		\sum_{i\in[j]}(a_i\cdot h_i) &\geq \frac{1}{m} \sum_{i\in[j]}(\I_{t_{k-i+1}} - \I_{k-i}) = \frac{1}{m}(\I_{t_k} - \I_{t_{k-j}})
		> \frac{1}{4}(t_k - t_{k-j}) = \frac{1}{4}\sum_{i\in[j]}(a_i+b_i),
	\end{align*}
	as required.
\end{proofof}

\section{Breaking $1.5$ on Identical Machines} \label{sec:general_case}

In this section, we prove Theorem~\ref{theorem:main}.
We will prove by contradiction: assume for contrary that $\alg > 1 + \gamma$, we seek to derive a contradiction, e.g., no schedule could complete all jobs before time $1$
(Recall that we assume $\opt = 1$).
To do so, we introduce two types of arguments:
we use a bin-packing argument to show that the last job must be of small size, as otherwise there exists a set of infeasible large jobs;
then we use an efficiency argument (built on the Leftover Lemma) to show that the total processing (excluding idle and waste periods) our algorithm completes exceed $m$, the maximum possible processing $\opt$ does.

For convenience of presentation, in the rest of the paper, we adopt the \emph{minimum counter-example} assumption~\cite{tcs/NogaS01}, i.e., we consider the instance with the minimum number of jobs such that $\alg > 1+\gamma$ and $\opt = 1$.
As an immediate consequence of the assumption, we get that no job arrives after $s_n$.
This is because such jobs do not affect the start time of $n$ and therefore could be removed to obtain a smaller counter example.

Recall that in our algorithm, we set $\alpha = \frac{1}{200}$ and $\beta = \sqrt{2}-1$.
Define $\gamma := \frac{1}{2}-\epsilon$, where $\epsilon = \frac{1}{20000}$.
%We will prove in the following that our algorithm is $(1+\gamma)$-competitive.
We first provide some additional structural properties of our algorithm, which will be the building blocks of our later analysis.

\subsection{Structural Properties}\label{sec:structural}

Observe that if a job is replaced, then it must be the minimum job among the $m$ jobs that are currently being processed.
Hence immediately we have the following lemma, since otherwise we can find $m+1$ jobs (including the replacer) of size larger than $\frac{1}{2}$.

\begin{fact}[Irreplaceable Jobs]\label{fact:irreplaceable_large_job}
	Any job with size at least $\frac{1}{2}$ cannot be replaced.
\end{fact}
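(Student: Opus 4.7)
The plan is to derive a contradiction from the assumption that some job $k$ with $p_k \geq \tfrac{1}{2}$ is replaced. Suppose, for contrary, that at time $r_j$ the newly arrived job $j$ triggers a replacement of such a $k$. I will invoke conditions (1) and (3) of the \lptr replacement rule to harvest $m+1$ large jobs. By condition~(1), $k$ is the smallest of the $m$ jobs in $J(r_j)$, and since all processing times are distinct (Section~\ref{sec:preli}), every other job currently processing has size strictly greater than $p_k \geq \tfrac{1}{2}$. By condition~(3), $p_j > (1+\beta)p_k > p_k \geq \tfrac{1}{2}$. Collecting these jobs gives $m+1$ distinct jobs, of which at least $m$ (namely $j$ together with the $m-1$ non-$k$ processing jobs) have size strictly greater than $\tfrac{1}{2}$, and the remaining one, $k$, has size at least $\tfrac{1}{2}$.

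The next step is a simple pigeonhole bin-packing argument: any schedule of makespan at most $1$ must place at most one of the $m$ strictly-greater-than-$\tfrac{1}{2}$ jobs on each machine, because two such jobs together exceed~$1$. Hence these $m$ large jobs occupy all $m$ machines, and each of those machines already carries load strictly greater than $\tfrac{1}{2}$. Adding $k$, whose size is at least $\tfrac{1}{2}$, to any machine would push its load above $1$. So no feasible schedule with makespan at most $1$ exists for this subset of jobs, contradicting our assumption $\opt = 1$.

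The argument is short enough that there is essentially no serious obstacle; the only thing to watch out for is the boundary case $p_k = \tfrac{1}{2}$, where $k$ itself is not strictly above $\tfrac{1}{2}$. This is handled by the last sentence above: even then the host machine already has load $> \tfrac{1}{2}$ from its assigned strictly-large job, so adding $k$ of size exactly $\tfrac{1}{2}$ still drives the total above $1$. This completes the proof of Fact~\ref{fact:irreplaceable_large_job}.
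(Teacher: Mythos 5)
Your proof is correct and follows essentially the same reasoning the paper sketches in the sentence preceding Fact~\ref{fact:irreplaceable_large_job}: replacement rule~(1) forces $k$ to be the smallest of the $m$ processing jobs and rule~(3) forces the replacer to be even larger, yielding $m+1$ jobs of size at least $\tfrac{1}{2}$ (with $m$ of them strictly larger), which cannot be packed into a makespan-$1$ schedule. You handle the boundary case $p_k = \tfrac{1}{2}$ a bit more carefully than the paper's terse remark, but it is the same argument.
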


Next, we show that if a job $i$ is pending for a long time, then each of the jobs processed at time $s_i$ must be of (relatively) large size.

\begin{lemma} \label{lemma:boundary_jobs}
	For any job $i$, we have $p_j > \min\{ s_i - r_i, p_i \}$ for all $j\in J(s_i)$.
\end{lemma}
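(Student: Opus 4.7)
The plan is to prove the contrapositive by contradiction. Suppose some $j \in J(s_i)$ satisfies $p_j \le \min\{s_i - r_i, p_i\}$; by the distinctness of sizes this gives $p_j < p_i$ together with $p_j \le s_i - r_i$, and I will derive a contradiction.

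First I would lower-bound the last start time $t_0 := s_j(s_i)$ of $j$. Because $j \in J(s_i)$, job $j$ runs continuously from $t_0$ until at least $s_i$ (possibly completing or being replaced exactly at $s_i$), so the processed amount $s_i - t_0$ cannot exceed $p_j$. Hence $t_0 \ge s_i - p_j \ge r_i$, where the second inequality uses $p_j \le s_i - r_i$. Throughout $[r_i, s_i)$ job $i$ is pending (by definition of $s_i$ being its final start time and by the minimum counter-example removing any job that arrives later), so in particular $i$ is pending at $t_0$.

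It remains to show that $j$ cannot have begun its current run at $t_0$ while the larger job $i$ was pending. Because no machine can be idle during $[r_i, s_i)$ (otherwise the LPT rule would immediately have scheduled the pending job $i$), there are only two ways $j$ could have started at $t_0$: either some machine just finished a job at $t_0$ and the LPT rule selected $j$ as the largest pending job, or $t_0 = r_j$ and $j$ triggered a replacement. In the first case, being chosen by LPT forces $p_j > p_i$, contradicting $p_j < p_i$. In the second case, the first replacement condition requires $j$ to be the largest pending job upon arrival, once again contradicting $p_j < p_i$. Both cases fail, completing the contradiction.

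The main thing to be careful about is the boundary behaviour of $J(s_i)$: jobs that complete or are replaced exactly at $s_i$ are included, so one must allow equality in $t_0 \ge s_i - p_j$. This does not affect the case analysis, and beyond this bookkeeping I do not anticipate a real obstacle.
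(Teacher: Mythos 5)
Your proposal contains a genuine gap at the step where you assert that ``throughout $[r_i, s_i)$ job $i$ is pending.'' Knowing that $s_i$ is the \emph{final} start time of $i$ does not imply that $i$ was never processed before $s_i$: job $i$ could have been scheduled earlier, run for a while, and been replaced at some time in $(r_i, s_i)$. The minimum-counter-example assumption only removes jobs that arrive after $s_n$ and says nothing about $i$ having stayed in the pending pool. Consequently, at time $t_0 = s_j(s_i)$ job $i$ could be running on some other machine rather than pending, in which case the LPT rule and the replacement rule face no conflict in starting the smaller job $j$ at $t_0$, and your case analysis of ``how $j$ could start while a larger pending job exists'' never applies.

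To close the gap you must handle the remaining case: if $i$ is being processed at $t_0$ and nonetheless restarts at $s_i > t_0$, then $i$ must have been replaced at some time $\tau \in (t_0, s_i)$; but a replacement kills only the \emph{smallest} job among the $m$ currently processing ones, and throughout $(t_0, s_i)$ the strictly smaller job $j$ is processing, so $i$ cannot be that smallest job --- a contradiction. This added branch is precisely the argument the paper uses. Your derivation $t_0 \ge s_i - p_j \ge r_i$ and your analysis of the sub-case where $i$ is genuinely pending at $t_0$ are both correct; only the unjustified pendingness claim needs to be replaced by this explicit replacement argument.
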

\begin{proof}
	It suffices to consider the non-trivial case when $s_i > r_i$.
	Consider any $j\in J(s_i)$.
	If $p_j > p_i$ or $s_j(s_i) < r_i$\footnote{Note that we use $s_j(s_i)$ here instead of $s_j$ as $j$ can possibly be replaced after time $s_i$.}, then we have $p_j > \min\{ s_i - r_i, p_i \}$.
	Otherwise, we consider time $s_j(s_i)$, at which job $j$ is scheduled.
	Since $p_j < p_i$ and $s_j(s_i) > r_i$ ($i$ is already released), we know that $p_i$ must be processed at $s_j(s_i)$. 
	Hence we know that $i$ is replaced during $(s_j(s_i), s_i)$, which is impossible since $j$ (which is of smaller size than $i$) is being processed during this period.
\end{proof}

Specifically, since $\alg = s_n+p_n > 1+\gamma$ and $r_n+p_n \leq \opt = 1$, we have $s_n - r_n > \gamma$.
Applying Lemma~\ref{lemma:boundary_jobs} to job $n$ gives the following.

\begin{corollary}[Jobs Processed at Time $s_n$] \label{corollary:boundary_jobs_n}
	We have $p_j > \min\{ \gamma, p_n \}$ for all $j\in J(s_n)$.
\end{corollary}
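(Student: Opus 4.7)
The plan is to derive the corollary as an immediate specialization of Lemma~\ref{lemma:boundary_jobs} applied to the last completed job $i = n$. All that is needed is to show $s_n - r_n > \gamma$, so that the $\min\{s_n - r_n, p_n\}$ appearing in the lemma dominates the $\min\{\gamma, p_n\}$ claimed in the corollary.

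First I would establish the inequality $s_n - r_n > \gamma$. Under the standing contradiction assumption, $\alg = c_n = s_n + p_n > 1 + \gamma$. Since $\opt = 1$, the optimal schedule must finish $n$ by time $1$, which forces $r_n + p_n \leq 1$. Subtracting these two inequalities cancels the $p_n$ terms and yields $s_n - r_n > \gamma$.

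Then I would invoke Lemma~\ref{lemma:boundary_jobs} with $i = n$, obtaining $p_j > \min\{s_n - r_n,\, p_n\}$ for every $j \in J(s_n)$. It remains to verify a two-case comparison: if $p_n \leq \gamma$, then $s_n - r_n > \gamma \geq p_n$, so the minimum on the right equals $p_n = \min\{\gamma, p_n\}$; if instead $p_n > \gamma$, then both $s_n - r_n$ and $p_n$ strictly exceed $\gamma = \min\{\gamma, p_n\}$, and so does their minimum. In either case $p_j > \min\{\gamma, p_n\}$, as required.

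I do not anticipate any real obstacle here; this corollary is essentially a bookkeeping step that exposes the threshold $\gamma$ (which will drive the bin-packing and efficiency arguments in the sequel) in place of the less convenient quantity $s_n - r_n$.
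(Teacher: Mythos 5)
Your proof is correct and follows exactly the same route as the paper: derive $s_n - r_n > \gamma$ from $\alg = s_n + p_n > 1+\gamma$ and $r_n + p_n \leq \opt = 1$, then apply Lemma~\ref{lemma:boundary_jobs} with $i = n$. The two-case comparison of minima that you spell out is left implicit in the paper but is precisely the needed bookkeeping.
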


In the following, we show two lemmas, one showing that if a job released very early is not scheduled, then all jobs processed at that time are (relatively) large;
the other showing that if a job is replaced, then the next time it is scheduled must be the completion time of a larger job.

\begin{lemma}[Irreplaceable Jobs at Arrival] \label{lemma:all_jobs_large}
	If a job $k$ is not scheduled at $r_k$ and $r_k < \alpha  p_k$, then $p_j\geq \frac{p_k}{1+\beta}$ for all $j\in J(r_k)$.
\end{lemma}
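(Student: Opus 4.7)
The plan is to prove the lemma by strong induction on the release order of jobs, arguing by contradiction. The base case is trivial since the earliest-released job faces no processing jobs and is scheduled immediately on an idle machine. For the inductive step, suppose $k$ satisfies the hypotheses, let $k^* \in J(r_k)$ be the smallest processing job, and assume for contradiction $p_{k^*} < p_k/(1+\beta)$. Since $k$ was not scheduled at $r_k$, all $m$ machines must be busy at $r_k$ (otherwise an idle machine would have picked up $k$). For the (non-)replacement of $k^*$ by $k$: condition~2 holds because the processed portion of $k^*$ is at most $r_k < \alpha p_k$, and condition~3 holds by the contradiction assumption, so condition~1 must fail. Therefore there is a pending job $k'$ at time $r_k$ with $p_{k'} > p_k$, and since $r_{k'} < r_k < \alpha p_k < \alpha p_{k'}$, $k'$ itself satisfies the size hypothesis of the lemma.

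I split the argument based on whether $k'$ was ever scheduled before $r_k$. If $k'$ was scheduled at $r_{k'}$ and later replaced at some time $t_1 \in (r_{k'}, r_k)$ by a replacer $r_1$, I claim $k^* \in J(t_1)$, which contradicts the ``replace the smallest'' rule since $p_{k^*} < p_{k'}$. To establish $k^* \in J(t_1)$: throughout $(t_1, r_k)$ the job $k'$ is pending and strictly larger than $k^*$, so $k^*$ cannot be the largest pending job in this window and hence cannot be scheduled (neither when a machine becomes idle nor as a replacer); thus $s_{k^*}(r_k) < t_1$ (strictly, by distinctness of start and release times), and $k^*$ processes continuously on $[s_{k^*}(r_k), r_k]$, so in particular $k^* \in J(t_1)$.

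In the alternative case, $k'$ was never scheduled before $r_k$, so $k'$ fits the lemma's hypothesis at $r_{k'}$. The induction hypothesis yields $p_j \geq p_{k'}/(1+\beta) > p_{k^*}$ for every $j \in J(r_{k'})$, so $k^* \notin J(r_{k'})$. Then $k^*$ must be scheduled at some moment $t \in (r_{k'}, r_k)$ to appear in $J(r_k)$ (either $t = r_{k^*}$ if $r_{k^*} > r_{k'}$, or a later moment if $k^*$ was previously pending or had been replaced before $r_{k'}$). But $k'$ remains pending throughout $(r_{k'}, r_k)$ and $p_{k'} > p_{k^*}$, so $k^*$ can never be the largest pending job and hence cannot be scheduled, a contradiction. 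The main technical delicacy is in the first case: one has to locate $s_{k^*}(r_k)$ relative to $t_1$ and invoke distinctness to rule out equality, so that the ``smallest processing job'' replacement rule delivers the contradiction; the other case is essentially a clean application of the inductive hypothesis combined with the LPT priority.
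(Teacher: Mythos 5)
Your proof is correct and takes a genuinely different route from the paper's. You argue by strong induction on release order and by contradiction, letting $k^*$ be the (assumed small) job in $J(r_k)$ and $k'$ the larger pending job that must exist once the replacement rule forces condition~1 to fail; you then split on whether $k'$ was ever placed on a machine before $r_k$, deriving a contradiction either with the ``replace-the-smallest'' rule at the moment $k'$ is replaced, or via the inductive hypothesis. The paper instead argues directly: it splits on whether $k$ is the largest pending job at $r_k$; in the easy case only condition~3 can fail, giving the bound on $p_j$ immediately, and in the other case it jumps to $k'$ defined as the earliest-released job of size at least $p_k$ that was not scheduled at its own release, applies the easy case to $k'$, and notes that all jobs in $J(r_{k'})$ (and any chain of their replacers) have size exceeding $r_k$, hence persist to $r_k$. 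Your route buys an explicit treatment of the sub-case where the larger pending job was once processed and then kicked off, which the paper's definition of $k'$ sidesteps implicitly; the paper's route is shorter because it never needs to locate $s_{k^*}(r_k)$ relative to the replacement time.

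One small clean-up is warranted. Your opening announces a split on ``whether $k'$ was ever scheduled before $r_k$'', but Case~1 then narrows to ``$k'$ was scheduled at $r_{k'}$'', which silently omits the possibility that $k'$ was first placed on a machine at a later idle event and then replaced. The argument survives unchanged if you instead take $t_1$ to be the last moment before $r_k$ at which $k'$ moved from processing to pending; indeed, by Lemma~\ref{lemma:reschedule_rule} together with $p_{k'} > p_k > r_k$, once $k'$ is replaced before $r_k$ it cannot be rescheduled until after $r_k$, so this $t_1$ is in fact the unique replacement of $k'$ before $r_k$ and the phrase ``$k'$ is pending throughout $(t_1,r_k)$'' needs no further qualification. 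Similarly, in Case~2 it is worth observing explicitly that $s_{k^*}(r_k) \ne r_{k'}$ by the distinctness assumption, so that $k^*\notin J(r_{k'})$ forces $s_{k^*}(r_k) > r_{k'}$ strictly.
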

\begin{proof}
	By our replacement rule, $k$ is not scheduled at $r_k$ either because $k$ is not the largest pending job at $r_k$, or $k$ is the largest pending job, but the minimum job in $J(r_k)$ is not replaceable.
	
	For the second case, since the minimum job $i$ in $J(r_k)$ is processed at most $r_k < \alpha p_k$, job $i$ must violate our third replacement rule, that is $p_i \geq \frac{p_k}{1+\beta}$.
	For the first case, let $k'$ be the first job of size at least $p_k$ that is not scheduled at its release time.
	Then we have $r_{k'} < r_k < \alpha  p_k < \alpha  p_{k'}$. Hence by the above argument every job in $J(r_{k'})$ is of size at least $\frac{p_k}{1+\beta} > r_k$. Thus every job in $J(r_{k})$ is also of size at least  $\frac{p_k}{1+\beta}$.
\end{proof}

\begin{lemma}[Reschedule Rule]\label{lemma:reschedule_rule}
	Suppose some job $k$ is replaced, then the next time $k$ is scheduled must be the completion time of a job $j$ such that $p_k < p_j \leq s_k$.
\end{lemma}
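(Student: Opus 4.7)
The plan is to dissect what must happen at the time $t$ when $k$ is next scheduled. The algorithm moves a pending job onto a machine only when that machine becomes idle, and a machine becomes idle only upon a job completion (a replacement hands the machine directly to the replacer, not back to the pending pool). I would therefore first argue that $t$ must coincide with the completion time of some job $j$, and that $k$ is the largest pending job at $t$.

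The easy half, $p_j \leq s_k$, then follows immediately: $j$ is processed continuously throughout $[t - p_j, t]$, so $p_j \leq t$; and $t$ is itself some start time of $k$ (either the final start $s_k$, or an earlier start if $k$ happens to be replaced and restarted again later), whence $p_j \leq t \leq s_k$.

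For the main inequality $p_j > p_k$, let $t_0$ denote the time $k$ was replaced, so that $k$ sits in the pending pool throughout $[t_0, t)$. I would split on whether the last start $t - p_j$ of $j$ lies before, at, or after $t_0$. If $t - p_j < t_0$, then $j$ was already processing at time $t_0^-$; since $k$ was the smallest processing job at $t_0$ by condition~1 of the replacement rule, distinctness of sizes forces $p_j > p_k$. If $t - p_j = t_0$, the distinctness-by-perturbation assumption on release and completion times rules out any other scheduling event at $t_0$, so $j$ must itself be $k$'s replacer, and condition~3 yields $p_j > (1+\beta) p_k > p_k$. If $t - p_j > t_0$, then $j$ begins its current run while $k$ is already pending; either $t - p_j$ is a completion time of another job and $j$ is the LPT-selected largest pending job, or $t - p_j = r_j$ and $j$ triggers a replacement with itself being the largest pending job by condition~1. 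In either sub-case, the presence of $k$ in the pending pool forces $p_j > p_k$.

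The main obstacle is simply making this case split airtight; in particular, excluding a fortuitous simultaneity between the replacement at $t_0$ and an unrelated completion is exactly what the global distinctness-by-perturbation assumption buys us, so no coincidental third type of scheduling event needs to be considered.
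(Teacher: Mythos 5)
Your proof is correct and follows essentially the same route as the paper's: you identify that $k$'s next start $t$ must be the completion time of some job $j$ (since a machine is handed to a pending job only at a completion, replacements being self-contained), you get $p_j \leq t \leq s_k$ from the fact that $j$ ran continuously for $p_j$ time ending at $t$, and you obtain $p_j > p_k$ by comparing when $j$'s final run began to the moment $k$ was kicked out. The only difference is cosmetic: the paper collapses your cases (b) and (c) into a single case ``$j$ was scheduled at or after the replacement time,'' invoking the greedy/replacement rules once, whereas you split out the boundary subcase $t - p_j = t_0$ and handle it via the distinctness perturbation; both are sound, and the paper's merged treatment is slightly more economical since the greedy rule already covers the situation where $k$ has just entered the pending pool.
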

\begin{proof}
	Suppose $k$ is replaced at time $t$ and rescheduled at time $t'$.
	Since $k$ can only replace other jobs at $r_k$, the next time $k$ is scheduled must be when some machine becomes idle. And this happens only if the job $j$ processed before $t'$ on this machine is completed.
	Moreover, since $k$ is pending from $t$ to $t'$, if $s_j \geq t$, i.e., $k$ is pending when $j$ is scheduled, then (by greedy scheduling rule) we have $p_j > p_k$; otherwise $j$ is being processed at time $t$, and we also have $p_j > p_k$ as $k$ is the smallest job among all jobs in $J(t)$ by the replacement rule. Hence, we have $s_k \geq  c_j \geq p_j > p_k$.
\end{proof}

We present the central lemma for our bin-packing argument as follows.
Intuitively, our bin-packing argument applies if there exists time $t_1$ and $t_2$ that are far apart, and the jobs in $J(t_1)$ and $J(t_2)$ are large (e.g. larger than $\frac{1}{3}$):
if $J(t_1)\cap J(t_2) = \emptyset$, then together with job $n$, we have found an infeasible set of $2m+1$ large jobs;
otherwise (since $t_1$ and $t_2$ are far apart) we show that the jobs in $J(t_1)\cap J(t_2)$ must be even larger, e.g. larger than $\frac{2}{3}$.

\begin{lemma}[Bin-Packing Constraints]\label{lemma:impossible_cases}
	Given non-idle times $t_1$, $t_2$ such that $t_1 < t_2$ and $n\notin J(t_1)\cup J(t_2)$, let $a=\min_{j\in J(t_1)}\{p_j\}$ and $b=\min_{j\in J(t_2)}\{p_j\}$, none of the following cases can happen:
	\begin{enumerate}
		\item[(1)] $a>\frac{1}{3}$, $b>\frac{2}{3(1+\beta)}$, $t_2 - t_1 > \frac{2}{3}$ and $p_n > \frac{1}{3}$;
		\item[(2)] $\min\{a,b,p_n\}>\frac{1}{2+\beta}$, and $t_2 - t_1 > 1-\min\{a,b,p_n\}$.
	\end{enumerate}
\end{lemma}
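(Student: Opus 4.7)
I will argue by contradiction in both cases, producing a set of jobs whose total size cannot be packed into $m$ bins of capacity $1$, contradicting $\opt=1$. Let $I:=J(t_1)\cap J(t_2)$ and $S:=J(t_1)\cup J(t_2)\cup\{n\}$. Because $t_1,t_2$ are non-idle, $|J(t_1)|=|J(t_2)|=m$ and hence $|S|=2m+1-|I|$. Every job in $S$ has size strictly greater than $1/3$ in Case~(1) (using $a,p_n>1/3$ and $b>2/(3(1+\beta))>1/3$) and strictly greater than $\mu:=\min\{a,b,p_n\}>1/(2+\beta)>1/3$ in Case~(2).

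When $I=\emptyset$, $|S|=2m+1$; in both cases three jobs of the guaranteed lower size sum to strictly more than $1$ (note $3/(2+\beta)>1$ for $\beta<1$), so each bin holds at most two of them. Thus $m$ bins accommodate at most $2m$ jobs, contradicting $|S|=2m+1$.

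When $I\neq\emptyset$, I attach to each $j\in I$ a \emph{witness} $w(j)$ of large size. If $j$ is processed continuously on a single machine over the whole interval $[t_1,t_2]$, set $w(j):=j$; then $p_j\geq t_2-t_1$, which exceeds $2/3$ in Case~(1) and $1-\mu$ in Case~(2) by hypothesis. Otherwise $j$ is replaced during $(t_1,t_2)$ by some job $k$; set $w(j):=k$. The replacement rule gives $p_k>(1+\beta)p_j$, and since $j\in J(t_2)$, $p_j\geq b$ yields $p_k>(1+\beta)b>2/3$ in Case~(1); similarly $p_j\geq\mu>1/(2+\beta)$ yields $p_k>(1+\beta)\mu>(1+\beta)/(2+\beta)=1-1/(2+\beta)>1-\mu$ in Case~(2). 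The witnesses are pairwise distinct: continuous-type witnesses are the distinct $j$'s themselves; replacement-type witnesses $k$ satisfy $r_k>t_1$, so cannot lie in $J(t_1)\supseteq I$, and distinct replacement events employ distinct replacers.

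Let $W:=\{w(j):j\in I\}$, so $|W|=|I|$. Every element of $W$ has size exceeding $2/3$ (Case~1) or $1-\mu$ (Case~2), hence cannot share a bin with any job in $S$ (whose sizes exceed $1/3$ or $\mu$, respectively). So $W$ occupies $|I|$ bins exclusively. The remaining $|S\setminus W|\geq |S|-|W|=2m+1-2|I|$ jobs each have size $>1/3$ (Case~1) or $>\mu$ (Case~2), and at most two of them fit per bin, requiring at least $\lceil(2m+1-2|I|)/2\rceil=m-|I|+1$ further bins. In total at least $m+1$ bins are required, contradicting $\opt=1$. The main obstacle is the replaced case: one must apply the replacement rule to push $p_k$ above the requisite threshold, and check that $k\notin J(t_1)$ (using $r_k>t_1$) and that different $j\in I$ give rise to different replacers, so that the count $|W|=|I|$ and the bound $|S\cap W|\leq|W|$ hold and the arithmetic yields exactly $m+1$ bins.
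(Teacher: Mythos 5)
Your proof is correct and takes a genuinely different route from the paper. The paper argues machine by machine: starting from each $j\in J(t_2)$ on machine $M$, it associates to $M$ either one job of size $>\tfrac{2}{3}$ (resp.\ $>1-\mu$ in Case~2) or two jobs of size $>\tfrac{1}{3}$ (resp.\ $>\mu$), using the replacement rule for the ``first job completing on $M$ after $t_1$'' when the job at $t_1$ was replaced, and checks that $n$ stays unassociated and no job is associated twice. You instead parameterize over the intersection $I=J(t_1)\cap J(t_2)$, attach to each $j\in I$ a witness (either $j$ itself if it runs uninterrupted across $[t_1,t_2]$, so $p_j\ge t_2-t_1$, or its replacer, whose size is inflated by the $(1+\beta)$ factor), and then do a direct cardinality count of bins over $W\cup J(t_1)\cup J(t_2)\cup\{n\}$. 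Both arguments use exactly the same two sources of largeness ($t_2-t_1$ large, replacement inflating sizes), but your bookkeeping is an explicit $|I|$-plus-$\lceil(2m+1-2|I|)/2\rceil$ count, which is arguably cleaner to state and verify than the paper's per-machine association and its ``associated at most once'' side conditions.

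One line is missing: the step ``$W$ occupies $|I|$ bins exclusively'' needs witness-vs-witness non-sharing in addition to witness-vs-$S$. Continuous-type witnesses lie in $I\subseteq S$, so they are covered by what you wrote; but a pair of replacement-type witnesses both outside $S$ needs its own check. In Case~1 both exceed $\tfrac{2}{3}$; in Case~2 both exceed $\tfrac{1+\beta}{2+\beta}$ and $\tfrac{2(1+\beta)}{2+\beta}>1$ since $\beta>0$. (Equivalently, one may first dispose of $\mu>\tfrac12$, which already gives $m+1$ jobs of size $>\tfrac12$ in $J(t_1)\cup\{n\}$, and then $2(1-\mu)\ge1$ handles all witness pairs.) With that sentence added the argument is complete.
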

\begin{proof}
	We show that if any of the cases happens, then we have the contradiction that $\opt > 1$.
	
	We first consider case (1).
	We show that we can associate jobs to machines such that every machine is associated with either a job of size larger than $\frac{2}{3}$, or two jobs of size larger than $\frac{1}{3}$.
	Moreover, we show that every job is associated once, while $n$ is not associated.
	Note that since $p_n > \frac{1}{3}$, such an association would imply the contradiction that $\opt > 1$.
	
	First, we associate every $j\in J(t_2)$	to the machine that it is processed on.
	If $p_j > \frac{2}{3}$ then we are done with this machine; otherwise (when $p_j \leq \frac{2}{3}$), we have $s_j(t_2) > t_1$ and we show that we can associate another job of size larger than $\frac{1}{3}$ to this machine.
	\begin{compactitem}
		\item If the job $i\in J(t_1)$ processed on this machine is not replaced, or $p_i\leq\frac{2}{3(1+\beta)}$, then we associate $i$ with this machine (it is easy to check that $i$ has not been associated before);
		\item otherwise the first job that completes after $t_1$ must be of size larger than $(1+\beta) p_i > \frac{2}{3}$, which also has not been associated before. Thus we associate it to this machine.
	\end{compactitem}
	In both cases we are able to do the association, as claimed.
	
	Next we consider case (2).
	Consider any job $i \in J(t_1)$, we have $p_i > \frac{1}{2+\beta}$.
	We apply an association argument similar as before: let $M$ be the machine that job $i$ is processed on.
	\begin{compactitem}
		\item If $i$ is replaced, then we associate the first job that completes on this machine after $i$ is replaced, which is of size larger than $\frac{1+\beta}{2+\beta} > 1-\min\{a,b,p_n\}$, to machine $M$; 
		\item otherwise if $p_i > 1-\min\{a,b,p_n\}$ then we associate $i$ to $M$;
		\item otherwise we know that $i$ completes before $t_2$, and we can further associate to $M$ the job in $J(t_2)$ processed on $M$.
	\end{compactitem}
	It is easy to check that every job is associated at most once.
	Hence every machine is associated with either a job of size larger than $1-\min\{a,b,p_n\}$, or two jobs of size larger than $\min\{a,b,p_n\}$, which (together with $p_n > \frac{1}{2+\beta}$) gives $\opt >1$, a contradiction.
\end{proof}

\subsection{Upper Bounding $p_n$: Bin-Packing Argument} \label{sec:large}

We show in this section how to apply the structural properties from Section~\ref{sec:structural} to provide an upper bound $\frac{1}{2+\alpha}$ on $p_n$.
Recall that we assume $\alg > 1 + \gamma$. We show that if $p_n > \frac{1}{2+\alpha}$, then Lemma~\ref{lemma:impossible_cases} leads us to a contradiction.
We first prove the following lemma (which will be further used in Section~\ref{sec:general_boundary}), under a weaker assumption, i.e., $p_n > \frac{1}{2+\beta}$.

\begin{lemma}\label{lemma:irreplaceable_n}
	If $p_n > \frac{1}{2+\beta}$, then job $n$ is never replaced.
\end{lemma}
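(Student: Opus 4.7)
The plan is to argue by contradiction using case~(2) of Lemma~\ref{lemma:impossible_cases} (Bin-Packing Constraints). Suppose $n$ is replaced by some job $j$ at time $r_j$. Replacement rule~(3) gives $p_j > (1+\beta) p_n$, which combined with $r_j + p_j \le 1$ yields $r_j < 1 - (1+\beta) p_n$. The assumption $\alg > 1+\gamma$ forces $s_n > 1 + \gamma - p_n$, so subtracting,
\[
s_n - r_j > \gamma + \beta p_n.
\]

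I would then take $t_1$ a moment after $r_j$ and $t_2 = s_n$ (or a moment before). Both are non-idle, since no machine can be idle while $n$ is pending. For the requirement $n\notin J(t_1)\cup J(t_2)$: $n$ has just been moved back to pending at $t_1$, and at $t_2$ it is either excluded as a starting job (if $t_2=s_n$) or still pending. To lower-bound the minimum processing sizes: rule~(1) forces $n$ to be the smallest of the $m$ processing jobs at $r_j$, so every other job in $J(r_j)$ has size $>p_n$, and $p_j > (1+\beta) p_n$, giving $a := \min_{i \in J(t_1)} p_i > p_n$. Corollary~\ref{corollary:boundary_jobs_n} yields $b := \min_{i \in J(t_2)} p_i > \min\{\gamma, p_n\}$. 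Since both $p_n > \frac{1}{2+\beta}$ and $\gamma > \frac{1}{2+\beta}$, this gives $\min\{a, b, p_n\} > \frac{1}{2+\beta}$, as required by case~(2).

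It then remains to check $t_2 - t_1 > 1 - \min\{a, b, p_n\}$. In the main subcase $p_n \le \gamma$, we have $b>p_n$ and so $\min\{a,b,p_n\}=p_n$; the inequality becomes $\gamma + \beta p_n > 1 - p_n$, i.e., $(1+\beta) p_n > \frac{1}{2} + \epsilon$. Plugging in $p_n > \frac{1}{2+\beta}$ and $\beta = \sqrt{2}-1$, the left side exceeds $\frac{1+\beta}{2+\beta} = 2 - \sqrt{2} \approx 0.586$, which beats $\frac{1}{2} + \frac{1}{20000}$ with a wide margin. In the remaining subcase $p_n > \gamma$, we use $b > \gamma = \frac{1}{2}-\epsilon$ and reduce the required inequality to $b > \frac{1}{2}+\epsilon - \beta p_n$, which follows from the ample slack $\beta p_n > \frac{\beta}{2+\beta} = 3 - 2\sqrt{2} \gg 2\epsilon$. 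All conditions of case~(2) are met, contradicting Lemma~\ref{lemma:impossible_cases}. The main obstacle is selecting $(t_1, t_2)$ compatible with both $n \notin J(t_1) \cup J(t_2)$ and non-idleness while still allowing the replacement rule and Corollary~\ref{corollary:boundary_jobs_n} to deliver the requisite bounds on $a$ and $b$; the final numerical check is then comfortable thanks to the choice $\beta = \sqrt{2}-1$.
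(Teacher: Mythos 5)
Your proof is correct and follows essentially the same route as the paper: assume $n$ is replaced by some job (the paper calls it $l$, you call it $j$), use the replacement rule and $\alg > 1+\gamma$ to derive $s_n - r_l > \gamma + \beta p_n > 1-p_n$, and then apply case~(2) of Lemma~\ref{lemma:impossible_cases} with the replacement time as $t_1$ and $s_n$ as $t_2$. The one genuine difference is a small technical refinement: you take $t_1$ to be a moment just after $r_j$ rather than $t_1 = r_l$ itself, which makes the hypothesis $n\notin J(t_1)$ of Lemma~\ref{lemma:impossible_cases} hold verbatim (the paper sets $t_1=r_l$ and then takes $a=p_n$, which implicitly places $n\in J(r_l)$; this still works because the replaced job $n$ falls into the first bullet of the association argument in the proof of that lemma, but your shift avoids the apparent hypothesis mismatch). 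Your case split on $p_n\le\gamma$ versus $p_n>\gamma$ is a bit more laborious than the paper's uniform observation that $s_n-r_n > 1-p_n > p_n$ gives $\min\{s_n-r_n,p_n\}=p_n$ directly in Lemma~\ref{lemma:boundary_jobs}, but both computations are valid and arrive at the same contradiction.
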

\begin{proof}
	Assume the contrary and consider the last time when $n$ is replaced.
	Note that by Fact~\ref{fact:irreplaceable_large_job}, we have $p_n < \frac{1}{2}$.
	Suppose $n$ is replaced by job $l$ at time $r_l$.
	Then by our replacement rule, we have $p_l\geq (1+\beta) p_n$.
	As $r_l + p_l \leq 1$ and $s_n + p_n > 1+\gamma$, we have
	\begin{equation*}
		s_n - r_l > (1+\gamma-p_n) - (1-p_l) > \gamma+\beta p_n > 1-p_n > p_n,
	\end{equation*}
	where the second last inequality holds since $\gamma > \frac{1}{2+\beta}$ and $p_n > \frac{1}{2+\beta}$.
	Since $r_n < r_l$, by Lemma~\ref{lemma:boundary_jobs}, we have $\min_{j\in J(s_n)}\{p_j\}>\min\{s_n-r_n, p_n\} =p_n$.
	Thus we can apply Lemma~\ref{lemma:impossible_cases}(2), with $t_1 = r_l$, $t_2 = s_n$, 
	$a=\min_{j\in J(t_1)}\{p_j\}=p_n$ and $b=\min_{j\in J(t_2)}\{p_j\}>p_n$, and derive a contradiction.
\end{proof}

\begin{lemma}[Upper Bound on Last Job]\label{lemma:large}
	We have $p_n \leq \frac{1}{2+\alpha}$.
\end{lemma}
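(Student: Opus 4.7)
The plan is to proceed by contradiction: assume $p_n > \frac{1}{2+\alpha}$ and derive an infeasible set of jobs via Lemma~\ref{lemma:impossible_cases}. Since $\alpha < \beta$ we have $p_n > \frac{1}{2+\alpha} > \frac{1}{2+\beta}$, so Lemma~\ref{lemma:irreplaceable_n} shows that $n$ is never replaced; combined with $c_n = s_n + p_n > 1 + \gamma$ and the optimality bound $r_n + p_n \leq 1$ this yields $s_n - r_n > \gamma$. Because $n$ stays pending throughout $(r_n,s_n)$, no machine can be idle in this interval, so both $r_n$ and $s_n$ are non-idle times with $|J(r_n)| = |J(s_n)| = m$ and $n \notin J(r_n) \cup J(s_n)$. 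Corollary~\ref{corollary:boundary_jobs_n} further gives $p_j > \min\{\gamma, p_n\} > \frac{1}{2+\alpha}$ for every $j \in J(s_n)$.

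With these bounds in hand, $b := \min_{j \in J(s_n)} p_j$ and $p_n$ already comfortably exceed both $\frac{2}{3(1+\beta)} = \frac{\sqrt{2}}{3}$ (the threshold for Lemma~\ref{lemma:impossible_cases}(1)) and $\frac{1}{2+\beta} = \sqrt{2} - 1$ (the threshold for Lemma~\ref{lemma:impossible_cases}(2)). It therefore remains to produce an earlier non-idle time $t_1 \leq r_n$ at which the processing jobs are also sufficiently large, and to verify that the gap $s_n - t_1$ meets the required threshold. The natural attempt $t_1 = r_n$ invokes Lemma~\ref{lemma:all_jobs_large} applied to $n$: provided $r_n < \alpha p_n$, every $j \in J(r_n)$ has $p_j \geq \frac{p_n}{1+\beta} > \frac{1}{(2+\alpha)\sqrt{2}}$, and a direct calculation with $\alpha = \frac{1}{200}$ shows this exceeds $\frac{1}{3}$.

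The principal difficulty is the gap. The guaranteed $s_n - r_n > \gamma$ is only barely short of both $\frac{2}{3}$ (needed for case (1)) and $1 - \frac{1}{2+\alpha}$ (needed for case (2)), so I have to recover a little extra slack. I would split on the size of $r_n$: if $r_n < \alpha p_n$, then $s_n - r_n > 1 + \gamma - (1+\alpha) p_n$, which exceeds $\frac{2}{3}$ whenever $p_n$ is bounded away from $1$, and Lemma~\ref{lemma:impossible_cases}(1) closes the argument. If instead $r_n \geq \alpha p_n$, I would follow the chain-of-pending-jobs reasoning inside the proof of Lemma~\ref{lemma:all_jobs_large}: either some earlier unscheduled job of size $\geq p_n$ exists (and the same analysis is applied to it in place of $n$), or the smallest job in $J(r_n)$ has itself been processed for at least $\alpha p_n$, which lets me slide $t_1$ back to roughly $r_n - \alpha p_n$, where the same $m$ jobs are still running, gaining exactly the missing slack in the gap. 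For the residual corner where $p_n$ is close to $1$, Corollary~\ref{corollary:boundary_jobs_n} strengthens to $p_j > \gamma$ on $J(s_n)$, and I would invoke Lemma~\ref{lemma:impossible_cases}(2) with its milder gap requirement instead.

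The reason the final improvement takes the form $\epsilon = \frac{1}{20000}$ rather than a larger constant is precisely this tight numerical balancing: the slack recovered in each sub-case has to absorb the shortfall in the gap for the specific parameters $\alpha = \frac{1}{200}$ and $\beta = \sqrt{2} - 1$, and the main obstacle in carrying out the plan is verifying that these inequalities line up simultaneously in every sub-case of the analysis.
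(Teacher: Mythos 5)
Your high-level strategy matches the paper's: assume $p_n > \frac{1}{2+\alpha}$, show $n$ is never replaced via Lemma~\ref{lemma:irreplaceable_n}, observe all of $J(s_n)\cup\{n\}$ is large via Corollary~\ref{corollary:boundary_jobs_n}, and finish with the bin-packing Lemma~\ref{lemma:impossible_cases}. However, there is a real gap in how you choose the earlier anchor $t_1$.

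The paper's essential step, which you skip, is a pigeonhole argument: the $m+1$ jobs in $J(s_n)\cup\{n\}$ all have size larger than $\frac{1}{2+\alpha}$, and $\opt$ has only $m$ machines, so some two of them, say $k$ and $j$ with $r_k < r_j$, must be scheduled on the same machine by $\opt$. This forces $r_k \leq 1 - p_k - p_j < 1 - \frac{2}{2+\alpha} = \frac{\alpha}{2+\alpha} < \alpha p_k$. That tiny bound on $r_k$ is what makes Lemma~\ref{lemma:all_jobs_large} applicable. In contrast, you take $t_1 = r_n$ directly, but the only bound on $r_n$ available is $r_n \leq 1 - p_n$, which with $p_n \approx \frac{1}{2+\alpha}$ is roughly $\frac{1}{2}$, and this is orders of magnitude larger than $\alpha p_n \approx 0.0025$. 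So the precondition $r_n < \alpha p_n$ that you rely on in the main branch simply does not hold in general.

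Your fallback for $r_n \geq \alpha p_n$ does not close the gap either. Sliding $t_1$ back to roughly $r_n - \alpha p_n$ buys at most $\alpha p_n$ of extra gap, whereas $r_n$ itself can be $\Omega(1)$, so the arithmetic does not balance. More importantly, even if sliding were free, you would still need a lower bound on the sizes of jobs in $J(t_1)$; the ``chain of pending jobs'' reasoning you gesture at establishes that the minimum processing job has been \emph{processed} for at least $\alpha p_n$, which says nothing about its \emph{size}. The paper avoids all of this by anchoring on the $k$ produced by the pigeonhole, and then splits cases on whether $k$ is scheduled at $r_k$ (invoking Lemma~\ref{lemma:all_jobs_large} plus case~(1) of Lemma~\ref{lemma:impossible_cases}) or not (showing $k$ must be replaced and using the replacement time $r_l$ with case~(2)). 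Without the pigeonhole you cannot reach either branch, and the argument does not go through.

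One smaller point: the paper first proves the intermediate bound $p_n \leq \frac{1+\beta}{2}$ before tackling $p_n \leq \frac{1}{2+\alpha}$. You wave at the large-$p_n$ corner via case~(2) of Lemma~\ref{lemma:impossible_cases}, and while the numbers you invoke are plausible in isolation, this corner still depends on the nonexistent bound $r_n < \alpha p_n$, so it inherits the same flaw.
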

\begin{proof}
	We first show a weaker upper bound: $p_n \leq \frac{1+\beta}{2}$.
	Assume the contrary that $p_n > \frac{1+\beta}{2} > \frac{1}{2}$.
	As shown in the proof of Lemma~\ref{lemma:boundary_jobs}, for all $j\in J(s_n)$, if $s_j > r_n$, we have $p_j > p_n > \frac{1+\beta}{2} > 1-\gamma$; otherwise $s_j < r_n$ and we have $p_j > s_n - r_n \geq (1+\gamma-p_n) - (1-p_n) = \gamma$.
	Among the $m+1$ jobs $J(s_n)\cup\{n\}$, there exist two jobs, say $k$ and $j$, that are scheduled on the same machine in $\opt$.
	Moreover, we have $s_k, s_j < r_n$, since otherwise one of them is larger than $1-\gamma$ and they cannot be completed in the same machine within makespan $1$.
	Let $k$ be the one with a smaller release time, i.e., $r_k < r_j$. Then we have $r_k \leq 1- p_k - p_j < 1 - 2\gamma <\alpha p_k$.
	
	Observe that $k$ is never replaced, as otherwise (by Lemma~\ref{lemma:reschedule_rule}, the reschedule rule) we have $s_k > p_k$, which implies 
	$r_n + p_n > s_k + p_n > p_k + p_n > \gamma + \frac{1+\beta}{2} > 1$, a contradiction.
	
	By Lemma~\ref{lemma:all_jobs_large}, we know that $s_k = r_k$ ($k$ is scheduled at its arrival time $r_k$), as otherwise the minimum job in $J(r_k)$ is of size at least $\frac{\gamma}{1+\beta}$.
	Thus we have $s_k \geq \frac{\gamma}{1+\beta}$, which is also a contradiction as
	$r_n + p_n > s_k + p_n > \frac{\gamma}{1+\beta} + \frac{1+\beta}{2} > 1$ (recall that $\gamma = \frac{1}{2}-\epsilon$ and $\beta = \sqrt{2}-1$).
	
	By $r_k + p_k + p_j \leq 1$ and $r_k + p_k + p_n > 1+\gamma$, we have $p_n > 2\gamma$.
	Hence $r_n < 1-2\gamma < \alpha p_n$.
	By Lemma~\ref{lemma:all_jobs_large}, we know that the minimum job in $J(r_n)$ is of size at least $\frac{p_n}{1+\beta} > \frac{1}{2}$.
	Hence we have the contradiction that there are $m+1$ jobs, namely $J(r_n)\cup\{n\}$, of size larger than $\frac{1}{2}$.
	
	Hence we have that $p_n \leq \frac{1+\beta}{2}$. Now assume that $p_n > \frac{1}{2+\alpha}$.
	
	By Lemma~\ref{lemma:irreplaceable_n}, we know that $n$ is never replaced.
	By Corollary~\ref{corollary:boundary_jobs_n}, all jobs in $J(s_n) \cup \{n\}$ are of size at least $\min\{\gamma, p_n\}\geq \frac{1}{2+\alpha}$.
	Let $k,j\in J(s_n) \cup \{n\}$ be scheduled on the same machine in $\opt$ such that
	$r_k \leq 1-p_k-p_j < 1- \frac{2}{2+\alpha} \leq \alpha p_k$.
	Note that different from the previous analysis (when $p_n > \frac{1+\beta}{2}$), it is possible that $n\in \{k,j\}$.
	
	By Lemma~\ref{lemma:all_jobs_large}, if $k$ is not scheduled at $r_k$, then each job $i \in J(r_k)$ has size $p_i \geq \frac{p_k}{1+\beta} \geq \frac{1}{(2+\alpha)\cdot(1+\beta)}>\frac{1}{3}$.
	Then we can apply Lemma~\ref{lemma:impossible_cases}(1) with $t_1 = r_k$ and $t_2 = s_n$ to derive a contradiction (observe that $t_2-t_1 = s_n - r_k > (1+\gamma-p_n) - (1- p_k-p_j) > \frac{3}{2+\alpha} - \frac{1+\beta}{2} \geq \frac{2}{3}$).
	
	Hence we know that $k$ is scheduled at time $r_k$ (thus we conclude that $k \neq n$).
	
	We show that $k$ must be replaced (say, by job $l$ at time $r_l$) , as otherwise by $r_k+p_k+p_n>1+\gamma$ and $r_k+p_k+p_j\leq 1$, we have $p_n > \gamma+p_j > \gamma + \frac{1}{2+\alpha} > \frac{1+\beta}{2}$, which is a contradiction.
	
	By Lemma~\ref{lemma:reschedule_rule}, we have $s_k > p_k \geq \frac{1}{2+\alpha}$.
	We also have that $p_n \leq 1-\frac{1}{2+\alpha}$, as otherwise $r_n \leq 1 - p_n < \frac{1}{2+\alpha} < s_k$ ($k$ is scheduled at $s_k$, when $n$ is pending), which implies $p_k > p_n > 1-\frac{1}{2+\alpha} > \frac{1}{2}$, contradicting Fact~\ref{fact:irreplaceable_large_job} (jobs larger than $\frac{1}{2}$ cannot be replaced).
	Since $p_l > (1+\beta)p_k$, we have
	\begin{align*}
		s_n - r_l &> (1+\gamma-p_n) - (1-(1+\beta) p_k) = (1+\beta) p_k +\gamma - p_n \\
		&> \frac{1+\beta}{2+\alpha} +\gamma - \frac{1+\alpha}{2+\alpha} > \frac{1+\alpha}{2+\alpha}.
	\end{align*}
	Then we can apply Lemma~\ref{lemma:impossible_cases}(2), with $t_1=r_l$, $t_2=s_n$, $a=p_k$, $b \geq \frac{1}{2+\alpha}>\frac{1}{2+\beta}$, to derive a contradiction.
\end{proof}

Given the upper bound $\frac{1}{2+\alpha}$ on $p_n$, we show the following stronger version of Lemma~\ref{lemma:irreplaceable_n} that any job of size larger than $\frac{1}{2+\beta}$ cannot be replaced.

\begin{corollary}[Irreplaceable Threshold]\label{corollary:not_replaceable}
	Any job of size larger than $\frac{1}{2+\beta}$ cannot be replaced.
\end{corollary}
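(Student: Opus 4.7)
The plan is a proof by contradiction following the template of Lemma~\ref{lemma:irreplaceable_n}, extended from the specific job $n$ to an arbitrary job $k$ of size exceeding $1/(2+\beta)$. Suppose such a $k$ is replaced by some job $l$ at time $r_l$. The third replacement rule then gives $p_l > (1+\beta) p_k > (1+\beta)/(2+\beta) > 1/2$, so Fact~\ref{fact:irreplaceable_large_job} implies that $l$ itself is never replaced; hence $l$ runs contiguously from $r_l$ to $c_l = r_l + p_l$, and combined with $r_l + p_l \le \opt = 1$, this yields $r_l < 1/(2+\beta)$. Because $k$ is the smallest job in $J(r_l)$ at the instant of replacement, every job in $J(r_l)$ has size at least $p_k > 1/(2+\beta)$. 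Moreover, Lemma~\ref{lemma:reschedule_rule} applied to $k$ forces $s_k > p_k$.

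I would then apply Lemma~\ref{lemma:impossible_cases}(2) with $t_1 = r_l$ and $t_2 = s_n$ to derive $\opt > 1$. The parameter $a := \min_{j \in J(r_l)} p_j = p_k$ is already controlled. By Corollary~\ref{corollary:boundary_jobs_n}, $b := \min_{j \in J(s_n)} p_j > \min\{\gamma, p_n\}$. The condition $n \notin J(r_l) \cup J(s_n)$ is immediate from the definition for $s_n$, and for $r_l$ it holds because $r_l < 1/(2+\beta)$ while $s_n > 1 + \gamma - p_n > 1$ (via Lemma~\ref{lemma:large} and $\gamma = 1/2 - \epsilon$). The key gap computation uses $s_n - r_l > (1+\gamma-p_n) - (1-p_l) = \gamma + p_l - p_n > \gamma + (1+\beta)p_k - p_n$ and, depending on whether $p_n \le p_k$ or $p_n > p_k$, reduces to either $\gamma > 1/(2+\beta)$ or $\gamma + p_k - p_n > 1/(2+\beta)$; both follow from $\gamma = 1/2-\epsilon$, $\beta = \sqrt{2}-1$, and the bound $p_n \le 1/(2+\alpha)$ from Lemma~\ref{lemma:large}, once the min-condition $\min\{a,b,p_n\} > 1/(2+\beta)$ is secured.

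The main obstacle is precisely that min-condition: it requires $p_n > 1/(2+\beta)$, which Lemma~\ref{lemma:large} does not guarantee (it only gives $p_n \le 1/(2+\alpha)$, and note $1/(2+\alpha) > 1/(2+\beta)$). The plan for handling the residual range $p_n \le 1/(2+\beta)$ is to switch to Lemma~\ref{lemma:impossible_cases}(1) for the same choice $t_1 = r_l$, $t_2 = s_n$: here $a = p_k > 1/(2+\beta) > 1/3$ is immediate, the time gap $s_n - r_l > 1+\gamma - p_n - 1/(2+\beta)$ comfortably exceeds $2/3$ when $p_n$ is small, and the bound on $b$ comes again from Corollary~\ref{corollary:boundary_jobs_n} (small $p_n$ forces $b > \gamma > 2/(3(1+\beta))$). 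The one case where neither part of Lemma~\ref{lemma:impossible_cases} applies cleanly is the thin sliver in which $p_n$ is simultaneously too small for~(2) and too small for the $p_n > 1/3$ hypothesis of~(1); there I would exploit that when $p_n$ is very small, $s_n$ is forced extremely close to $1+\gamma$, so the irreplaceable block $l$ of size $> 1/2$ combined with the $m$ jobs of $J(r_l)$ of size $> 1/(2+\beta) > 1/3$ produces a direct bin-packing infeasibility on the $m$ OPT machines. In all cases, we obtain $\opt > 1$, the desired contradiction.
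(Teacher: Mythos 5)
Your core argument for the regime $p_n > \frac{1}{2+\beta}$ matches the paper's: assume a job $k$ with $p_k > \frac{1}{2+\beta}$ is replaced by $l$, observe $p_l > (1+\beta)p_k$ forces $r_l < \frac{1}{2+\beta}$, note that $J(r_l)$ consists of $m$ jobs of size at least $p_k$, combine with Corollary~\ref{corollary:boundary_jobs_n} to bound $\min_{j \in J(s_n)} p_j$, and invoke Lemma~\ref{lemma:impossible_cases}(2) with $t_1 = r_l$, $t_2 = s_n$. You also correctly notice that the paper's chain of inequalities quietly uses $p_n > \frac{1}{2+\beta}$ (via $\min\{p_n,\gamma\} > \frac{1}{2+\beta}$), a fact that Lemma~\ref{lemma:large} alone does not provide. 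This is a genuine observation about the paper's presentation.

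However, your attempt to cover the residual range $p_n \le \frac{1}{2+\beta}$ contains a real error. You write that ``small $p_n$ forces $b > \gamma$,'' but Corollary~\ref{corollary:boundary_jobs_n} only gives $b = \min_{j\in J(s_n)} p_j > \min\{\gamma, p_n\}$, which equals $p_n$ when $p_n < \gamma$. So for $p_n$ in, say, $(\frac{1}{3}, \frac{1}{2+\beta}]$ you only get $b > p_n \le \frac{1}{2+\beta} \approx 0.414$, which does not reach the threshold $\frac{2}{3(1+\beta)} \approx 0.471$ required for Lemma~\ref{lemma:impossible_cases}(1). The subsequent ``thin sliver'' argument is also too weak: the $m$ jobs in $J(r_l)$ of size $> \frac{1}{3}$ together with $l$ of size $> \frac{1}{2}$ form only $m+1$ jobs, which pack comfortably into $m$ bins of size~$1$, so no bin-packing contradiction arises from that set alone.

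The paper's actual resolution of the gap you flagged is organizational, not a stronger proof: Lemma~\ref{lemma:replaceable_n} establishes $p_n > \frac{1}{2+\beta}$ by an independent efficiency argument (via Claim~\ref{claim:idle_after_r_k} and the Leftover Lemma) that nowhere invokes Corollary~\ref{corollary:not_replaceable}, and the corollary is only deployed in Section~\ref{sec:general_boundary} after that lower bound is in place. So the corollary should be read with $p_n > \frac{1}{2+\beta}$ as a standing hypothesis; you should not try to prove it in the range where it will later be shown vacuous, and the patch you proposed for that range does not go through.
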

\begin{proof}
	Assume the contrary that some job $j$ of size larger than $\frac{1}{2+\beta}$ is replaced (say, by job $k$ at $r_k$).
	Then we have $p_k>(1+\beta)p_j$ and $\min_{i\in J(r_k)}\{p_i\} = p_j > \frac{1}{2+\beta}$.
	On the other hand, by Corollary~\ref{corollary:boundary_jobs_n}, we have $\min_{i\in J(s_n)}\{p_i\} > \min\{ p_n,\gamma \} > \frac{1}{2+\beta}$.
	Since $p_n < \frac{1}{2+\alpha} < \gamma$, we have	
	\begin{equation*}
		s_n - r_k > (1+\gamma-p_n) - (1-\frac{1+\beta}{2+\beta}) = \gamma - p_n + \frac{1+\beta}{2+\beta} > \frac{1+\beta}{2+\beta}.
	\end{equation*}
	
	Hence we can apply the bin-packing argument to derive a contradiction: by Lemma~\ref{lemma:impossible_cases}(2), with $t_1 = r_k$, $t_2=s_n$, $a=p_j$ and $b>\frac{1}{2+\beta}$, we have a contradiction.
\end{proof}

\subsection{Lower Bounding $p_n$: Efficiency Argument}

Next we establish a lower bound on $p_n$, applying the Leftover Lemma.
First, observe that if $n$ is never replaced, then $n$ is pending from $r_n$ to $s_n$, where $r_n \leq 1-p_n$; if $n$ is replaced, then $n$ is pending from $r_l$ to $s_n$, where $r_l$ is the last time $n$ is replaced.
Since $r_l \leq 1-(1+\beta)\cdot p_n < 1-p_n$, in both cases $n$ is pending during time period $[1-p_n,s_n)$.
Hence we have the following fact.

\begin{fact}[Non-idle Period]\label{fact:no_idle_after_1-p_n}
	Every $t\in [1-p_n,s_n)$ is non-idle.
\end{fact}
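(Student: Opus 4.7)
The plan is to observe that the paragraph preceding the fact already contains the essential argument: it suffices to show that job $n$ is pending throughout $[1-p_n, s_n)$, because by the definition of ``idle'' from Section~\ref{sec:preli}, a time can be idle only when there is no pending job.

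First I would split into two cases according to whether $n$ is ever replaced. In the case where $n$ is never replaced, $n$ is pending continuously from $r_n$ until $s_n$. Since the optimal schedule must complete $n$ by time $\opt = 1$, we have $r_n + p_n \leq 1$, hence $r_n \leq 1 - p_n$, so $[1-p_n, s_n) \subseteq [r_n, s_n)$ and $n$ is pending on the whole interval. In the case where $n$ is replaced at least once, let $r_l$ be the release time of the job $l$ that last replaces $n$. By the third replacement rule we have $p_l > (1+\beta) p_n$, and since $r_l + p_l \leq 1$ this gives $r_l < 1 - (1+\beta) p_n < 1 - p_n$. Between $r_l$ and $s_n$, job $n$ must sit in the pending pool (it cannot be rescheduled and then replaced again by hypothesis of ``last'' replacement, and the algorithm only takes $n$ off pending by starting it, which happens at $s_n$), so again $[1-p_n, s_n) \subseteq [r_l, s_n)$ lies inside a window during which $n$ is pending.

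In either case $n$ is pending at every $t \in [1-p_n, s_n)$, so no machine can be idle at $t$, i.e. $t$ is non-idle. There is no real obstacle here; the only subtlety is bookkeeping the definition of idleness, which requires \emph{both} a non-processing machine \emph{and} an empty pending pool, and confirming in the replaced case that $n$ really does remain pending over the full interval from $r_l$ onward (this is immediate from $r_l$ being the \emph{last} replacement of $n$).
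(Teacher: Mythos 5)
Your proof is correct and follows essentially the same argument the paper gives in the paragraph immediately preceding the fact: split on whether $n$ is ever replaced, bound $r_n$ (resp.\ $r_l$) above by $1-p_n$, and conclude $n$ is pending throughout $[1-p_n,s_n)$, which rules out idleness by definition. Nothing further is needed.
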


As a warm-up, we show the following simple lower bound on $p_n$ using the Leftover Lemma.

\begin{lemma}[Simple Lower Bound]\label{lemma:tiny}
	We have $p_n > \frac{1}{3} - 2\alpha$.
\end{lemma}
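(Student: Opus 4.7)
The plan is to contradict $p_n \leq \frac{1}{3} - 2\alpha$ by applying the Leftover Lemma at $t := 1 - p_n$ and squeezing $\Delta_t$ between a lower bound extracted from the end of the algorithm's schedule and the upper bound given by Lemma~\ref{lemma:leftover}.

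First, I would use Fact~\ref{fact:no_idle_after_1-p_n} together with the greedy nature of \lptr\ (which never leaves a machine idle while a job is pending) to conclude that all $m$ machines are simultaneously processing throughout $[1-p_n, s_n)$. This gives a lower bound of $m(s_n - (1-p_n)) + p_n = m(\alg - 1) + p_n$ on the algorithm's total processing after time $1-p_n$, where the final $+p_n$ accounts for job $n$ being processed during $[s_n, \alg]$.

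Second, I would combine this with Observation~\ref{observation:processing_after_t}, which upper-bounds the same quantity by $mp_n + \Delta_{1-p_n}$, and use $\alg > 1 + \gamma$ to deduce
\[
\Delta_{1-p_n} \;>\; m(\gamma - p_n) + p_n.
\]
Then invoking the Leftover Lemma yields $\Delta_{1-p_n} \leq \tfrac{(1-p_n)m}{4} + \W_{1-p_n}$. For the waste term, observe that whenever some job $j$ replaces another the waste created is less than $\alpha p_j$ by the third replacement rule; since each job acts as a replacer at most once (only at its release time) and $\sum_{j} p_j \leq m$ (because $\opt = 1$ on $m$ machines), we obtain $\W_{1-p_n} \leq \alpha m$.

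Chaining the two bounds on $\Delta_{1-p_n}$ and rearranging gives $p_n(3m-4) > 4m\bigl(\gamma - \tfrac{1}{4} - \alpha\bigr)$, so for $m \geq 2$ we get $p_n > \tfrac{4}{3}\bigl(\gamma - \tfrac{1}{4} - \alpha\bigr) = \tfrac{1}{3} - \tfrac{4(\epsilon + \alpha)}{3}$. Since $\epsilon = \tfrac{1}{20000} < \tfrac{\alpha}{2}$, this is strictly larger than $\tfrac{1}{3} - 2\alpha$, contradicting the hypothesis (the $m=1$ case is trivial as LPT is optimal on a single machine). The only mild technical nuisance I anticipate is the bookkeeping of the lower-order term arising from the mismatch between the $+p_n$ in the processing lower bound and the $mp_n$ in the upper bound; the gap between $\epsilon$ and $\alpha$ is precisely what absorbs it.
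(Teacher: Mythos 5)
Your overall strategy matches the paper's: apply the Leftover Lemma at $t = 1-p_n$, lower-bound the algorithm's useful processing after $t$, upper-bound it via Observation~\ref{observation:processing_after_t}, and combine. However, there is a gap in the key lower bound. You write that all $m$ machines processing throughout $[1-p_n, s_n)$, plus job $n$ after $s_n$, gives a lower bound of $m(s_n-t)+p_n$ on ``the algorithm's total processing after time $1-p_n$.'' But Observation~\ref{observation:processing_after_t} concerns \emph{useful} processing, and machines being busy during $[1-p_n,s_n)$ is not the same as useful processing: some of that machine-time is waste (work on jobs later replaced). The correct lower bound is $m(s_n - t) - (\W_1 - \W_t) + p_n$, i.e., one must subtract the waste located in $[t, s_n)$, which is $\W_1 - \W_t$ since no waste occurs after time $1$ --- exactly the term the paper subtracts. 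Your derived intermediate claim $\Delta_{1-p_n} > m(\gamma-p_n)+p_n$ therefore does not follow as written; the correct version is $\Delta_{1-p_n} > m(\gamma-p_n)+p_n - (\W_1 - \W_t)$.

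Luckily, the omitted term cancels in the final chain: the corrected lower bound combined with the Leftover Lemma ($\Delta_t \le \tfrac{1}{4}tm + \W_t$) produces $\W_1 - \W_t + \W_t = \W_1 \le \alpha m$ on the right-hand side, so you land on the very same numerical inequality $m(\gamma-p_n)+p_n < \tfrac{(1-p_n)m}{4} + \alpha m$ that you wrote down. So your final arithmetic and conclusion are correct, but the derivation as stated has a real conceptual hole --- if you tried to fill in the details, step one would fail. Two smaller points: the bound ``waste created $<\alpha p_j$'' comes from the \emph{second} replacement rule (processed portion of $k$ less than $\alpha p_j$), not the third; and your dismissal of $m=1$ as trivial ``because LPT is optimal on a single machine'' is not correct --- \lptr\ can waste processing via replacements, so optimality on one machine is not obvious. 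In fact you don't need a side argument: the paper's version of the chain (without your extra $+p_n$) divides cleanly by $m$ and yields the bound $p_n > \tfrac{4}{3}(\gamma - \tfrac14 - \alpha)$ uniformly for all $m\geq 1$, and alternatively the corrected version of your own inequality shows that for $m=1$ the assumption $\alg > 1+\gamma$ already forces $p_n<0$, a contradiction, so the lemma is vacuous there.
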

\begin{proof}
	Let $t = 1-p_n$.
	Since there is no waste after time $1$, the total waste located after time $t$ is $\W_1 - \W_t$.
	Since no machine is idle during time $[ 1-p_n, s_n )$, the total processing our algorithm does after time $t$ is at least $m(s_n - t) - (\W_1 - \W_t)$.
	On the other hand, the total processing our algorithm does after time $t$ is upper bounded by $m(1-t) + \Delta_t$ (Observation~\ref{observation:processing_after_t}).
	Applying Lemma~\ref{lemma:leftover} (the Leftover Lemma) on $\Delta_t$, we have
	\begin{equation*}
		m(s_n-t) < m(1-t) + (\W_1 - \W_t) + \Delta_t \leq m(1-t) + \W_1 + \frac{1}{4}tm
		\leq m(1-t) + \alpha m + \frac{1}{4}tm.
	\end{equation*}
	
	Note that the last inequality follows since (by our replacement rule) each job $j$ can only create a waste of size at most $\alpha\cdot p_j$,
	and the total size of jobs is at most $m$.
	Thus we have
	\begin{equation*}
		\textstyle	\alg = s_n + p_n \leq 1+\alpha + \frac{1-p_n}{4}+p_n = \frac{5}{4} +\alpha +\frac{3}{4}p_n,
	\end{equation*}
	which implies that  (recall that we assume $\alg > 1+\gamma = \frac{3}{2}-\epsilon$)
	\begin{equation*}
		\textstyle p_n \geq \frac{4}{3}\cdot(\alg - \frac{5}{4} - \alpha) > \frac{4}{3}\cdot(\frac{3}{2}-\epsilon - \frac{5}{4} - \alpha) = \frac{1}{3} - \frac{4}{3}(\epsilon+\alpha) > \frac{1}{3} - 2\alpha,
	\end{equation*}
	where the last inequality holds by our choices of parameter, i.e., $\epsilon = \frac{1}{20000}$ and $\alpha = \frac{1}{200}$.
\end{proof}

Observe that the Leftover Lemma provides tighter upper bounds for smaller values of $t$.
Thus in the above proof, if we can find a smaller $t$ such that there is no (or very little) idle time from $t$ to $s_n$, then we can obtain a stronger lower bound on $p_n$.

\begin{lemma}[Lower Bound on Last Job]\label{lemma:replaceable_n}
	We have $p_n > \frac{1}{2+\beta}$.
\end{lemma}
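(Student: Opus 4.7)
The plan is to sharpen the argument of Lemma~\ref{lemma:tiny} by replacing $t = 1 - p_n$ with a (potentially much) smaller value of $t$, exploiting the fact that the Leftover Lemma gives tighter bounds for smaller $t$. Define $t^{\ast}$ to be the last idle time at or before $s_n$, with the convention $t^{\ast} = 0$ if no such time exists. By Fact~\ref{fact:no_idle_after_1-p_n} we know $t^{\ast} < 1 - p_n$, and by construction $(t^{\ast}, s_n)$ is entirely non-idle.

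I would then rerun the calculation of Lemma~\ref{lemma:tiny} with $t = t^{\ast}$. The non-idleness of $(t^{\ast}, s_n)$ implies the algorithm's total processing over that interval is at least $m(s_n - t^{\ast}) - (\W_1 - \W_{t^{\ast}})$, while Observation~\ref{observation:processing_after_t} and the Leftover Lemma together give an upper bound of $m(1 - t^{\ast}) + \tfrac{1}{4} t^{\ast} m + \W_{t^{\ast}}$. Combining and using $\W_1 \leq \alpha m$ yields
\[
s_n \;\leq\; 1 + \alpha + \tfrac{t^{\ast}}{4}, \qquad \text{and hence} \qquad p_n \;>\; \gamma - \alpha - \tfrac{t^{\ast}}{4}.
\]
Assuming for contradiction that $p_n \leq \tfrac{1}{2+\beta}$, this forces $t^{\ast} \geq T_0 := 4\bigl(\gamma - \alpha - \tfrac{1}{2+\beta}\bigr)$, a concrete positive constant given the chosen parameters $\gamma = \tfrac{1}{2}-\tfrac{1}{20000}$, $\alpha = \tfrac{1}{200}$, $\beta = \sqrt{2}-1$.

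It remains to rule out the case $t^{\ast} \geq T_0$ by a bin-packing argument. At time $t^{\ast}$ no job is pending (by idleness), so the first event in $(t^{\ast}, s_n)$ must be the arrival of some new job $l$ at a time $r_l > t^{\ast}$; since the interval is then kept busy until $s_n$, this $l$ together with the jobs in $J(s_n)$ and possibly $n$ itself should form an infeasible collection. I plan to invoke Corollary~\ref{corollary:boundary_jobs_n} to show that every job in $J(s_n)$ has size exceeding $\min\{\gamma, p_n\}$, Lemma~\ref{lemma:all_jobs_large} to control the sizes of jobs in $J(t^{\ast})$ (or of the replacer $l$, via Lemma~\ref{lemma:reschedule_rule} and the replacement rule), and then feed the pair $(t_1, t_2) = (r_l, s_n)$, whose separation is at least on the order of $T_0$, into Lemma~\ref{lemma:impossible_cases} --- case (2) if the sizes cluster above $\tfrac{1}{2+\beta}$, otherwise case (1) after absorbing the $(1+\beta)$-blowup from the replacement rule.

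The main obstacle will be the second step: certifying that the key jobs relevant at time $t^{\ast}$ (jobs in $J(t^{\ast})$, the triggering arrival $l$, the eventual replacer of $n$ if it exists) are all sufficiently large to satisfy the hypotheses of Lemma~\ref{lemma:impossible_cases}. This will likely require splitting into subcases depending on whether $n$ has been released and/or replaced by time $t^{\ast}$, and on whether the small jobs in $J(t^{\ast})$ are themselves replaced during $(t^{\ast}, s_n)$ (in which case Lemma~\ref{lemma:reschedule_rule} provides the requisite size amplification). The numerical slack afforded by $T_0$ and by the explicit choices of $\alpha, \beta, \epsilon$ should be comfortably sufficient, but the bookkeeping between the efficiency bound and the bin-packing bound is where care is required.
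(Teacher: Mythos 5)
Your high-level plan --- use the efficiency argument with a smaller $t$, then rule out large $t$ by bin-packing --- is the right spirit, but the specific anchor you chose (the last idle time $t^{\ast}$) does not give enough slack, and the paper in fact uses a different, stronger anchor.

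Concretely, your efficiency step gives $p_n > \gamma - \alpha - t^{\ast}/4$, so to conclude $p_n > \tfrac{1}{2+\beta}$ you need $t^{\ast} < T_0 := 4(\gamma-\alpha-\tfrac{1}{2+\beta}) \approx 0.32$. You then propose to rule out $t^{\ast}\ge T_0$ by bin-packing with $(t_1,t_2)=(r_l,s_n)$. But the only lower bound you have on the gap is $s_n - r_l \approx s_n - t^{\ast} > (1+\gamma-p_n) - (1-p_n) = \gamma \approx 0.5$; Fact~\ref{fact:no_idle_after_1-p_n} only gives $t^{\ast} \le 1-p_n$, and with $p_n \le \tfrac{1}{2+\beta}$ you cannot exclude $t^{\ast}$ being close to $1-p_n$. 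Lemma~\ref{lemma:impossible_cases}(2) requires $t_2-t_1 > 1-\min\{a,b,p_n\} \ge \tfrac{1+\beta}{2+\beta} \approx 0.586$, and case (1) requires $t_2 - t_1 > \tfrac23$; neither threshold is met by $\gamma \approx 0.5$. There is a second difficulty: at a marginal idle time $t^{\ast}$ there is an idle machine, so $|J(t^{\ast})| \le m-1$, and the triggering job $l$ is scheduled immediately on that machine, so Lemma~\ref{lemma:all_jobs_large} (which only speaks about jobs that are \emph{not} scheduled at arrival with $r_k < \alpha p_k$) gives you no size control over $J(r_l)$. So the infeasible collection you hope to assemble simply isn't available.

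The paper's proof uses a genuinely different anchor. It picks two jobs $k,j \in J(s_n)\cup\{n\}$ that OPT schedules on the same machine, giving $r_k < 1-2p_n$ --- already roughly $p_n$ smaller than your $t^{\ast} \le 1-p_n$. It does not require $[r_k,s_n]$ to be idle-free; instead Claim~\ref{claim:idle_after_r_k} shows the total idle time in $[r_k,s_n]$ is at most $3\alpha m$, which is enough slack in the efficiency calculation. Proving that claim is the bulk of the work: bounding idle in $[r_k,s_k]$ by a per-machine accounting (each idle stretch is short because the pending $k$ would otherwise replace something), and then a case analysis for $t\in(s_k,s_n)$ that pits an efficiency bound $r_n - s_k > 6\alpha+2\epsilon$ against a bin-packing bound $r_n - s_k < 6\alpha+2\epsilon$. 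That interlocking of the two arguments inside the same claim is what your proposal is missing; a single pass of efficiency followed by a single pass of bin-packing at the last idle time does not have the numbers to close.
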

\begin{proof}
	Assume for contrary that $p_n \leq \frac{1}{2+\beta}$.
	Then by Corollary~\ref{corollary:boundary_jobs_n}, we have $p_j > \min\{\gamma,p_n\} = p_n$ for all job $j\in J(s_n)$.
	Hence at least two jobs $k,j\in J(s_n)\cup\{n\}$ are scheduled on the same machine in $\opt$, such that $r_k < 1-2p_n$.
	We prove the following claim\footnote{We remark that the proof of Claim~\ref{claim:idle_after_r_k} relies on the fact that $p_n$ is not too small. Hence Lemma~\ref{lemma:tiny} (the warm-up lower bound) is necessary for achieving the improved lower bound (Lemma~\ref{lemma:replaceable_n}).}, which enables us to use a refined efficiency argument, i.e., apply the Leftover Lemma on a earlier time $t = r_k$.
	For continuity of presentation, we defer its proof to the end of this subsection.
	\begin{claim}\label{claim:idle_after_r_k}
		The total idle time $\I_{s_n} - \I_{r_k}$ during time period $[r_k, s_n]$ is at most $3 \alpha\cdot m$.
	\end{claim}
	
	By the above claim, the total processing $\alg$ does after time $t = r_k$ is at least $m(s_n -t) - (\W_1 - \W_t) - 3\alpha m$.
	On the other hand, by Observation~\ref{observation:processing_after_t} and Lemma~\ref{lemma:leftover} we have
	\begin{equation*}
		\textstyle m(s_n -t) - (\W_1 - \W_t) - 3\alpha m \leq m(1-t) + \Delta_t \leq m(1-t) + \frac{1}{4}tm + \W_t.
	\end{equation*}
	
	Rearranging the inequality, we have (recall that $t = r_k < 1- 2p_n$)
	\begin{equation*}
		\textstyle 1+\gamma < \alg =s_n+p_n \leq 1+\frac{1-2p_n}{4}+4\alpha +p_n \leq \frac{5}{4} + 4\alpha+\frac{\beta}{2(2+\beta)},
	\end{equation*}
	which is a contradiction by our choice of parameters.
\end{proof}

It remains to prove Claim~\ref{claim:idle_after_r_k}.

\begin{proofof}{Claim~\ref{claim:idle_after_r_k}}
	Recall that we assume $\frac{1}{3}-2\alpha < p_n \leq \frac{1}{2+\beta}$, and there exist two jobs $k,j\in J(s_n)\cup\{n\}$ (of size at least $p_n$) scheduled on the same machine in $\opt$, such that $r_k < r_j$.
	
	Observe that $s_k - r_k > (1+\gamma-p_n-p_k)-(1-p_k-p_j) > \gamma$.
	We first upper bound the total idle time in $[r_k, s_k)$ by $3\alpha\cdot m$, and then show that there is no idle time after time $s_k$ (until time $s_n$).
	
	\begin{lemma}\label{lemma:idle_after_r_k_}
		The total idle time in $[r_k, s_k]$ is at most $3 \alpha\cdot m$.
	\end{lemma}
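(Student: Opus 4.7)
The plan is to bound the idle time by combining a simple localization with a counting argument. First I would observe that any idle instant $t \in [r_k, s_k)$ must occur while $k$ itself is being processed. Since $r_k \le t < s_k$ and the final run of $k$ begins only at $s_k$, at every such $t$ job $k$ is either pending or processing; the pending possibility is ruled out because an idle instant carries no pending job. Moreover, no machine can be idle at the time $v_i$ of a replacement: condition~1 of the replacement rule requires $k$ to be the smallest of $m$ processing jobs, which fails if any machine is idle; equivalently, the LPT rule would send the arriving job to the idle machine rather than trigger a replacement. So if $[u_1, v_1), \dots, [u_\ell, v_\ell)$ denote the non-final processing intervals of $k$ during $[r_k, s_k)$ with $k$ replaced at $v_i$ by a job $k'_i$, the replacement rule yields $v_i - u_i < \alpha\,p_{k'_i}$.

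Next I aggregate. Since $k$ occupies one machine during each $[u_i, v_i)$, at most $m-1$ machines can be idle at any instant in that interval, so
\[
\I_{s_k} - \I_{r_k} \;\le\; (m-1)\sum_{i=1}^\ell (v_i - u_i) \;<\; \alpha\,(m-1)\sum_{i=1}^\ell p_{k'_i}.
\]
To obtain the target bound of $3\alpha m$, it therefore suffices to show $\sum_{i=1}^\ell p_{k'_i} \le \frac{3m}{m-1}$.

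This last step is where I expect the main obstacle to lie. I would exploit two pieces of structure: each $p_{k'_i} > (1+\beta)\,p_k \ge (1+\beta)\,p_n > \frac{\sqrt{2}}{3} - O(\alpha)$ by Lemma~\ref{lemma:tiny}, and each replacer is released in $[r_k, s_k) \subseteq [r_k, 1]$ so that \opt must fit it in this window on some \opt machine. Combined with the large jobs already known to exist --- namely $k$, $j$, $n$, and the remaining members of $J(s_n)$, each of size at least $p_n > \frac{1}{3} - 2\alpha$ --- a bin-packing argument in the style of Lemma~\ref{lemma:impossible_cases} should show that too many replacers produce a configuration of large jobs incompatible with \opt $=1$ on $m$ machines. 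The delicate point is that some of the $k'_i$ may themselves belong to $J(s_n)$, so the infeasible set must be selected carefully to avoid double counting; I would parametrize by the \opt machine on which each job sits and do the case analysis from there.
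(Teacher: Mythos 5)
Your proposal diverges from the paper at the aggregation step, and the divergence introduces a gap that I don't think can be closed. You bound the total idle time by summing over $k$'s non-final processing intervals $[u_i,v_i)$ and charging each such interval with a factor of $m-1$, reducing the problem to showing $\sum_i p_{k'_i} \le \frac{3m}{m-1}$, i.e.\ that the total size of all jobs that ever replace $k$ is roughly at most $3$. But this is not true and cannot be obtained by bin-packing: $k$ can be replaced $\Theta(m)$ times. Each rescheduling of $k$ is triggered by the completion of some job $j_i$ of size $> p_k$ (Lemma~\ref{lemma:reschedule_rule}), and such completions can occur on all $m$ machines, giving up to $\Theta(m)$ rescheduling events; correspondingly there can be $\Theta(m)$ distinct replacers, each of size roughly $(1+\beta)p_k \approx 0.46$, all packable into $\opt$ (two per machine), so $\sum_i p_{k'_i}$ can be $\Theta(m)$. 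Your estimate $(m-1)\sum_i(v_i-u_i)$ would then be $\Theta(\alpha m^2)$, far above $3\alpha m$. The root cause is that the factor $m-1$ wildly overcounts: during a typical $[u_i,v_i)$ only a few machines are idle, but your approach cannot see that.

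The paper's proof avoids this by working per machine rather than per processing interval of $k$. Fix a machine $M$ and let $a_1$ be its first idle moment after $r_k$; then $k$ is processing elsewhere, and $k$ is replaced at some $r_l < a_1+\alpha$ (since otherwise its processed portion would exceed $\alpha$). At $r_l$ all $m$ machines are busy (replacement requires $m$ processing jobs), and the job on $M$ has size $> p_k$ (since $k$ is the minimum at that moment), so $M$ remains busy until some completion time $> a_1 + p_k$; even if that job is later replaced, the replacer is larger and only extends the busy stretch. Thus, on $M$, all idle time in $[a_1, a_1+p_k]$ fits inside $(a_1, r_l)$, of total length $< \alpha$, and by Fact~\ref{fact:no_idle_after_1-p_n} there is no idle after $1-p_n < \frac{2}{3}+2\alpha$, so $M$ has at most $\lceil (\frac{2}{3}+2\alpha)/p_k \rceil \le 3$ such windows, hence $< 3\alpha$ idle time on $M$ and $< 3\alpha m$ overall. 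The moral difference is that the paper uses the ``cooling-off'' of length $> p_k$ after each idle window to control the idle time, whereas your scheme has no handle on how many times $k$ is replaced and therefore no way to cap $\sum_i p_{k'_i}$.
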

	\begin{proof}
		As $k$ is released at $r_k$ while (eventually) scheduled at $s_k$, we know that if there is an idle period $[a,b]\subseteq [r_k,s_k)$,
		then $k$ must be processed during time period $[a,b]$ and replaced after time $b$.
		Since $k$ cannot be replaced if it is processed $\alpha$, immediately we have $b-a < \alpha$.
		
		Now consider any fixed machine $M$.
		Suppose $a_1 \geq r_k$ is the first time machine $M$ becomes idle.
		We know that $k$ is being processed on some other machine at time $a_1$ (as no job is pending), and replaced at some $r_l < a_1+\alpha$.
		Then the job processed on $M$ at time $r_l$ must be of size larger than $p_k$, which implies that there is no idle period in $[r_l, a_1+p_k]$.
		
		By Fact~\ref{fact:no_idle_after_1-p_n}, there is no idle time after $1-p_n < \frac{2}{3}+2\alpha$.
		Hence the number of idle periods on machine $M$ during time period $[r_k, s_k)$ is at most $\lceil \frac{\frac{2}{3}+2\alpha}{p_k} \rceil \leq 3$, which implies that the total idle time in $[r_k, s_k)$ (on all $m$ machines) is at most $3\alpha\cdot m$.
	\end{proof}
	
	Next we show that every time $t\in (s_k,s_n)$ is non-idle.
	
	Suppose otherwise, let $t\in (s_k,s_n)$ be the last idle time before $s_n$.
	Since there is no pending job at time $t$, we know that $n$ is either being processed at time $t$, or is not released.
	
	For the first case, since $t<s_n$, we know that $n$ is replaced after time $t$.
	Let $r_l$ be the last time $n$ is replaced, we have $r_l \leq 1-p_l < 1-(1+\beta)p_n$.
	Note that $n$ is pending from $r_l$ to $s_n$, which is of length
	\begin{equation*}
		s_n - r_l > (1+\gamma - p_n) - (1-(1+\beta)p_n) > \gamma + \beta\cdot p_n.
	\end{equation*}
	
	Since $s_k < t < r_l$, the total processing our algorithm does after time $r_k$ is
	$m(s_n-r_l) + m(s_k - r_k) - 4\alpha\cdot m > m$, contradicting $\opt = 1$.
	
	Now we consider the second case, i.e., $s_k < t < r_n$.
	By the same reasoning, we know that $n$ is never replaced, i.e., $n$ is pending from $r_n$ to $s_n$, which is of length larger than $\gamma$.
	
	Recall that we have $s_k - r_k > \gamma$, and since $k\in J(s_n)$, we have $p_k > s_n - r_n > \gamma$.
	
	\begin{claim}
		We have $r_n-s_k > 6\alpha + 2\epsilon$ (by the efficiency argument).
	\end{claim}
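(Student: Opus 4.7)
Our plan is to apply Lemma~\ref{lemma:leftover} at time $t = r_k$, compare the resulting upper bound on $\alg$'s processing after $r_k$ against a lower bound derived by carefully accounting for idle and waste in $[r_k, s_n]$, and then use the constraint $\alg > 1+\gamma$ together with a sharp upper bound on $r_k$ to isolate $r_n - s_k$. If $r_n - s_k > \gamma$ the claim is immediate since $\gamma \gg 6\alpha + 2\epsilon$, so we may assume $r_n - s_k \leq \gamma$ throughout.

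The key point in the lower bound is the idle time in the middle interval $[s_k, r_n]$. Since $k \in J(s_n)$ is processed continuously from $s_k \leq r_n$ through $s_n$, we have $p_k > s_n - r_n > \gamma \geq r_n - s_k$, so the machine processing $k$ is busy throughout $[s_k, r_n]$; hence the idle in this interval is at most $(m-1)(r_n - s_k)$. Combined with Lemma~\ref{lemma:idle_after_r_k_} ($\leq 3\alpha m$ idle in $[r_k, s_k]$), zero idle in $[r_n, s_n]$ (since $n$ is pending), the global waste bound $\W_{s_n} \leq \alpha m$, and the fact that exactly $p_n$ is processed in $[s_n, \alg]$, the processing $\alg$ does after $r_k$ is at least $m(s_n - r_k) - 4\alpha m - (m-1)(r_n - s_k) + p_n$.

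By Observation~\ref{observation:processing_after_t} and Lemma~\ref{lemma:leftover}, this quantity is also at most $m(1-r_k) + \tfrac{r_k m}{4} + \alpha m$. Equating, substituting $\alg = s_n + p_n > 1+\gamma$, and rearranging yields
\[
(r_n - s_k) + p_n \;>\; \tfrac{m}{m-1}\bigl(\gamma - \tfrac{r_k}{4} - 5\alpha\bigr) \;\geq\; \gamma - \tfrac{r_k}{4} - 5\alpha.
\]
To bound $r_k$ sharply, we use the case-(b) fact $p_k > \gamma$ together with $p_j \geq p_n$ and $r_k + p_k + p_j \leq \opt = 1$, giving $r_k < 1 - \gamma - p_n$. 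Substituting and simplifying yields $r_n - s_k > \tfrac{5\gamma}{4} - \tfrac{3 p_n}{4} - \tfrac{1}{4} - 5\alpha$, which is decreasing in $p_n$ and hence minimized at $p_n = \tfrac{1}{2+\beta} = \sqrt{2} - 1$, evaluating to $\tfrac{9}{8} - \tfrac{3\sqrt{2}}{4} - \tfrac{5\epsilon}{4} - 5\alpha \approx 0.0393$, comfortably above $6\alpha + 2\epsilon \approx 0.0301$ for $\alpha = \tfrac{1}{200}$ and $\epsilon = \tfrac{1}{20000}$.

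The main obstacle is obtaining the sharp upper bound $r_k < 1 - \gamma - p_n$ by exploiting the case-(b) fact $p_k > \gamma$ rather than merely the generic $p_k \geq p_n$: the weaker substitution $r_k < 1 - 2p_n$ would give only $r_n - s_k > \gamma - \tfrac{1}{4} - \tfrac{p_n}{2} - 5\alpha$, which at $p_n = 1/(2+\beta)$ is about $0.018 < 6\alpha + 2\epsilon$. The upgrade from $p_n$ to $\gamma$ in the $r_k$ bound, together with the $(m-1)$-rather-than-$m$ idle accounting in $[s_k, r_n]$ enabled by $k$'s continuous processing, is precisely what provides the needed slack.
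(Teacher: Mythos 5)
Your proof is correct and follows the same route as the paper: apply Observation~\ref{observation:processing_after_t} and Lemma~\ref{lemma:leftover} at $t = r_k$, use Lemma~\ref{lemma:idle_after_r_k_} for the idle in $[r_k,s_k]$, bound the idle in $[s_k,r_n]$ by (a multiple of) $r_n-s_k$, note that $[r_n,s_n]$ is non-idle, and crucially use $p_k > \gamma$ together with $p_j \geq p_n$ to get the sharp bound $r_k < 1-\gamma-p_n$ before doing the arithmetic. Your two refinements — the $(m-1)$ factor for the idle in $[s_k,r_n]$ and adding $p_n$ for the work after $s_n$ — are both valid but are immediately cancelled when you relax $\tfrac{m}{m-1}\geq 1$, so they buy nothing over the paper's simpler accounting (indeed, by separately charging $\alpha m$ for waste before and after $r_k$ you pay $2\alpha m$ where the paper pays $\W_1\leq\alpha m$ once, so your final constant $\approx 0.039$ is actually slightly looser than the $\approx 0.044$ implicit in the paper's computation, though still comfortably above $6\alpha+2\epsilon$). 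The only presentational difference is that you derive the lower bound on $r_n-s_k$ directly rather than assuming the contrary, which is cleaner given that the paper's statement of its contrary assumption appears to contain a typo ($s_n - r_k \leq 1+6\alpha$ rather than the intended $r_n - s_k \leq 6\alpha + 2\epsilon$).
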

	\begin{proof}
		Assume the contrary that $s_n-r_k \leq 1 + 6\alpha$, we know that the total idle time after $s_k$ is at most $r_n - s_k \leq 1+6\alpha -2\gamma =6\alpha+2\epsilon$.
		Then we have (by Observation~\ref{observation:processing_after_t}, Lemma~\ref{lemma:leftover}:
		\begin{align*}
			\alg = s_n + p_n
			& \leq 1 + \alpha+ \frac{1}{4}r_k + 3\alpha + 6\alpha + 2\epsilon + p_n \\
			& \leq 1 + 10\alpha+2\epsilon + \frac{1}{4}(1-p_n-\gamma) + p_n \\
			& \leq 1 + ( 10\alpha +\frac{9}{4}\epsilon + \frac{1}{8} + \frac{3}{4(2+\beta)}) \leq 1+\gamma,
		\end{align*}
		contradicting our assumption that $\alg > 1+\gamma$.
	\end{proof}
	
	On the other hand, we have the following contradicting claim.
	
	\begin{claim}
		We have $r_n-s_k < 6\alpha + 2\epsilon$ (mainly by the bin-packing argument).
	\end{claim}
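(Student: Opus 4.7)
The plan is to prove the claim by contradiction: suppose $r_n - s_k \geq 6\alpha + 2\epsilon$; I will derive $\opt > 1$ via a bin-packing argument, contradicting $\opt = 1$.

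First I extract a strong lower bound on $p_k$. Combining $s_n + p_n = \alg > 1 + \gamma$ with $r_n + p_n \leq 1$ gives $s_n > 1 + \gamma - p_n$, while the assumption yields $s_k \leq r_n - (6\alpha + 2\epsilon) \leq 1 - p_n - (6\alpha + 2\epsilon)$. Since $k \in J(s_n)$, we get $p_k \geq s_n - s_k > \gamma + 6\alpha + 2\epsilon = \tfrac{1}{2} + \epsilon + 6\alpha$; in particular $p_k > \tfrac{1}{2} > \tfrac{1}{2+\beta}$. By Corollary~\ref{corollary:not_replaceable}, $k$ is never replaced. Moreover, $s_k > r_k$, since otherwise $c_k = r_k + p_k \leq 1 < s_n$ would contradict $k \in J(s_n)$. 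Because $s_k - r_k > \gamma$ (established in the setup of Claim~\ref{claim:idle_after_r_k}), Lemma~\ref{lemma:boundary_jobs} applied to $k$ guarantees that every $j \in J(s_k)$ has size $p_j > \min\{s_k - r_k, p_k\} > \gamma > \tfrac{1}{2} - \epsilon$.

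The bin-packing step then considers the jobs in $J(s_k) \cup J(s_n) \cup \{n\}$: this set contains $m + 1$ jobs of size $> \gamma$ (namely $J(s_k)$ together with $k$, since $k \in J(s_n) \setminus J(s_k)$), up to $m - 1$ further jobs in $J(s_n) \setminus \{k\}$ of size $> p_n$, and $n$ of size $> p_n > \tfrac{1}{3} - 2\alpha$ (using Lemma~\ref{lemma:tiny}). The crucial geometric observations are: (i) $k$'s OPT-bin admits no other $>\gamma$ job, because $p_k + \gamma > 1$; (ii) any two $>\gamma$ jobs sharing a bin leave residual capacity $< 1 - 2\gamma = 2\epsilon$, too small for any $>p_n$ job; and (iii) every job in the overlap $O := J(s_k) \cap J(s_n)$ has size $\geq s_n - s_k > \tfrac{1}{2} + \epsilon + 6\alpha$ by the same arithmetic that bounded $p_k$, so it plays the same bin-exclusive role as $k$. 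Combining these constraints with pigeonhole on the $2m + 1 - |O|$ distinct jobs then forces some OPT-bin's workload to exceed~$1$.

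The main obstacle is completing the accounting in the sub-regime $p_n \leq \tfrac{1}{3}$, where a bin containing only $>p_n$ jobs can hold up to three such jobs. Here one classifies the $m$ bins into $|O| + 1$ ``super-large'' bins (each housing $k$ or a job in $O$, plus at most one additional $>p_n$ job), $a$ bins holding two $>\gamma$ jobs from $J(s_k) \setminus O$ (and nothing else), $b$ bins holding one $>\gamma$ from $J(s_k) \setminus O$ plus at most one $>p_n$, and $c$ bins holding only $>p_n$ jobs; the counting identities $2a + b = m - |O|$, $a + b + c + (|O| + 1) = m$ together with the demand $b + 3c + |O| + 1 \geq m - |O| + 1$ force negative values once the parameter choices $\alpha = \tfrac{1}{200}$, $\beta = \sqrt{2} - 1$, $\epsilon = \tfrac{1}{20000}$ are invoked. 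In the easier sub-regime $p_n > \tfrac{1}{3}$, three $>p_n$ jobs already sum to $> 1$, so the $c$-bin bound tightens to $2c$ and the same identities immediately give $-1 \geq 0$, completing the contradiction.
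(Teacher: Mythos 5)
Your opening arithmetic ($p_k \ge s_n - s_k > \tfrac{1}{2}+\epsilon+6\alpha$, hence $k$ is irreplaceable and $J(s_k)$ consists of $m$ jobs of size $> \gamma$) matches the paper and is correct. But the remainder of your plan — that a pure bin-packing/counting argument on $J(s_k)\cup J(s_n)\cup\{n\}$ already forces $\opt>1$ — does not hold, and this is exactly the reason the claim is labelled ``\emph{mainly} by the bin-packing argument.'' Work the counting identities through: from $a+b+c = m-|O|-1$ and $2a+b=m-|O|$ you get $b+2c=m-|O|-2$, so the fitting constraint $(|O|+1)+b+3c \ge m-|O|$ reduces merely to $c+|O|\ge 1$, which is easily satisfiable; and even with the tightened constraint $(|O|+1)+b+2c\ge m-|O|$ (the case $p_n>\tfrac13$) you only get $|O|\ge 1$, not $-1\ge 0$. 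A concrete feasible packing exists: take $|O|=0$, one bin with $k$ plus one $>p_n$ job, $\lfloor m/2\rfloor$ bins with two $>\gamma$ jobs each, and one bin with three jobs of size just above $p_n > \tfrac13-2\alpha$ (which sum to $<1-6\alpha<1$). So the collection of jobs you exhibit \emph{does} pack into $m$ unit bins, and no contradiction arises from sizes alone.

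What the paper does instead is use the bin-packing step only to extract a \emph{structural} conclusion, not a contradiction: it builds a per-machine association (either one job $>\tfrac{2}{3}+2\alpha$, or a pair $a\notin J(s_n)$ of size $>\tfrac13+4\alpha$ and $b\in J(s_n)$ of size $>\tfrac13-2\alpha$), and the counting then forces that in $\opt$ three jobs of size in $(\tfrac13-2\alpha,\tfrac13+4\alpha)$ from $J(s_n)$ share a machine. That in turn forces one such job $x$ to be released before $6\alpha<s_k$; because $x$ is processed at the last idle time $t\in(s_k,r_n)$ but must be replaced later (else $p_x>\gamma$), the replacement rule gives $t\le 1-(1+\beta)p_x < 1-(1+\beta)(\tfrac13-2\alpha)$. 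This new upper bound on $t$ is then fed back into the Leftover-Lemma/efficiency argument, and it is \emph{that} inequality — not bin packing — that produces $\alg\le 1+\gamma$. Your plan omits this second (efficiency) stage entirely, so the hard sub-regime $p_n\le \tfrac13$ is left unresolved. (A minor side note: to conclude $k$ is never replaced you invoke Corollary~\ref{corollary:not_replaceable}, whose proof presupposes $p_n>\tfrac{1}{2+\beta}$ — the very thing currently being established; Fact~\ref{fact:irreplaceable_large_job} with $p_k>\tfrac12$ is the safe citation here.)
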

	\begin{proof}
		Observe that $k$ is processed from $s_k$ to $s_n$.
		Hence we have $p_k \geq s_n - s_k$.
		If $p_k < \frac{1}{2}+6\alpha$, then we are done as $r_n - s_k \leq p_k - (s_n - r_n) < \frac{1}{2}+6\alpha - \gamma = 6\alpha + \epsilon$.
		
		Assume for contrary that $r_n-s_k \geq 6\alpha + 2\epsilon$, we have $p_k \geq \frac{1}{2}+6\alpha$. 
		We apply an association argument similar to Lemma~\ref{lemma:impossible_cases} to show that on every machine $M$, we can find
		\begin{compactitem}
			\item either a job of size larger than $\frac{2}{3}+2\alpha$, or
			\item a job $a\notin J(s_n)$ of size $p_a > \frac{1}{3}+4\alpha$, and a  job $b\in J(s_n)$ (of size $p_b > \frac{1}{3}-2\alpha$),
		\end{compactitem}
		that are completed on $M$, and none of them is $n$.
		
		Let $x\in J(s_k)$ and $y\in J(s_n)$ be processed on $M$.
		By Lemma~\ref{lemma:boundary_jobs} we have $p_x > \gamma > \frac{1}{3}+4\alpha$ and $p_y > p_n > \frac{1}{3}-2\alpha$.
		If $x\neq y$, then we are done with this machine, i.e., $a$ is the first job completed after $s_k$ on $M$ and $b=y$.
		Note that we have $a\neq b$ as either $a=x$ or $a$ is a replacer, which cannot be processed at $s_n > 1$.
		
		If $x=y$, then we are also done if $p_x > \frac{2}{3}+2\alpha$.
		
		Now suppose $p_x \leq \frac{2}{3}+2\alpha$. Since $x$ is processed at $s_n > 1$, we know that the job $z$ processed before $s_x$ must be completed. Moreover, since $k$ (of size $>\frac{1}{2}$) is pending during $[r_k,s_k]$, we have
		\begin{equation*}
			p_z \geq (s_n - r_k) - p_x > 2\gamma + 6\alpha + 2\epsilon - \frac{2}{3} - 2\alpha = \frac{1}{3}+4\alpha.
		\end{equation*}
		Hence we have found $a = z$ and $b = x$ completed on $M$ and the association is completed.
		
		Thus in $\opt$ (which completes all jobs before time $1$), three jobs of size in $(\frac{1}{3}-2\alpha, \frac{1}{3}+4\alpha)$ from $J(s_n)$ are scheduled on the same machine, which means that at least one of them, say $x$, is released before $6\alpha < s_k$.
		Observe that we also have $p_n \leq \frac{1}{3}$, as $\min_{i\in J(s_n)}\{p_i\} > p_n$.
		
		Recall that $t\in (s_k,r_n)$ is the last idle time before $r_n$.
		As there is no pending jobs at time $t$, $x$ must be processed at time $t$, but replaced later (as otherwise $p_x > s_n - r_n > \gamma$).
		Hence we know that $t \leq 1-(1+\beta)p_x < 1-(1+\beta)(\frac{1}{3}-2\alpha)$, which implies that total idle time in $[r_k, s_n]$ can be upper bounded by
		\begin{equation*}
			(t-s_k)m < 1-(1+\beta)(\frac{1}{3}-2\alpha)-\gamma.
		\end{equation*}
		
		Hence we have the following contradiction:
		\begin{align*}
			\alg &= s_n + p_n \leq 1 + \alpha + \frac{1}{4}r_k + (1 - (1+\beta)(\frac{1}{3} - 2\alpha) -\gamma) + p_n \\
			& \leq 1 +\alpha + \frac{1}{4}\cdot \frac{1}{2} + \frac{3}{4}\cdot \frac{1}{3} + (1 - (1+\beta)(\frac{1}{3} - 2\alpha) -\gamma)
			\leq 1+\gamma,
		\end{align*} 
		where the second inequality holds since $r_k \leq 1-p_k-p_j\leq \frac{1}{2}-p_n$, and $p_n \leq \frac{1}{3}$.
	\end{proof}
	
	As the two claims are contradicting, there is no idle time during $(s_k,s_n)$.
\end{proofof}

\subsection{A Hybrid Argument} \label{sec:general_boundary}

We have shown that assuming $\alg > 1+\gamma$, the size of the last job $n$ can be bounded as $\frac{1}{2+\beta} < p_n \leq \frac{1}{2+\alpha}$.
In the remaining part of this section, we use a hybrid argument to show that we can either use the bin-packing argument to find a set of infeasible large jobs; or derive a contradiction using the efficiency argument.

\medskip\noindent
\textbf{General Framework.}
Given that $\frac{1}{2+\beta} < p_n \leq \frac{1}{2+\alpha} < \gamma$, we have $s_n = \alg - p_n > 1$.
Thus we have $s_j\neq r_j$ for all $j\in J(s_n)$ (as they are processed at time $s_n > 1$).
Moreover, by Corollary~\ref{corollary:boundary_jobs_n}, we have $\min_{j\in J(s_n)}\{p_j\} > \min\{\gamma,p_n\} > \frac{1}{2+\beta}$.
In other words, there exists a set $J(s_n)\cup\{n\}$ of large jobs, each of which is never replaced (by Corollary~\ref{corollary:not_replaceable}), and none of them is scheduled at its release time.
We know that at least two of them, say $k$ and $j$ (assume $k$ is released earlier), are scheduled on the same machine in $\opt$.
Since $s_k - r_k > (1+\gamma-p_n-p_k) - (1-p_k-p_j) \geq \gamma$,
we know that $k$ is pending from $r_k$ to $s_k$, which is a period of length $\gamma$.
Thus either our algorithm finishes a lot of processing during this period (then we can use the efficiency argument), or
there are many idle and waste periods during this period (then we can use the bin-packing argument to find another $m$ large jobs, e.g., larger than $\frac{1}{3}$) .

We first show that $r_k$ and $s_n$ cannot be too far apart.

\begin{lemma}\label{lemma:s_n-r_k_less_1}
	We have $s_n - r_k \leq 1$.
\end{lemma}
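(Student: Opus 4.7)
The plan is to argue by contradiction: assume $s_n - r_k > 1$ and derive a contradiction by combining the efficiency argument (via the Leftover Lemma) with the bin-packing argument.

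First I would apply Observation~\ref{observation:processing_after_t} together with Lemma~\ref{lemma:leftover} at $t = r_k$. The useful processing $\alg$ does in $[r_k, s_n]$ is at most $m(1 - r_k) + \tfrac{1}{4} r_k m + W_{r_k}$, and simultaneously at least $m(s_n - r_k) - W_{[r_k, s_n]} - I_{[r_k, s_n]}$. Using $W_{r_k} + W_{[r_k, s_n]} \leq W_{s_n} \leq \alpha m$ (since the total waste is at most $\alpha$ times the total job size, which is at most $m$), these combine to $s_n \leq 1 + \tfrac{1}{4}r_k + \alpha + I_{[r_k, s_n]}/m$. The assumption $s_n - r_k > 1$ then forces $I_{[r_k, s_n]} > m(\tfrac{3}{4}r_k - \alpha)$.

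Second, I would localize where idle time can appear in $[r_k, s_n]$. Because $p_k > 1/(2+\beta)$, job $k$ cannot be replaced (Corollary~\ref{corollary:not_replaceable}), and since $k$ is pending throughout $[r_k, s_k]$, no time there is idle. Similarly, since $p_n > 1/(2+\beta)$, Corollary~\ref{corollary:not_replaceable} makes $n$ irreplaceable, so $n$ is pending throughout $[r_n, s_n]$ and no time there is idle either. Idle time is therefore confined to $[s_k, r_n]$, giving $I_{[r_k, s_n]} \leq m\,(r_n - s_k)^+$, and using $s_k > r_k + \gamma$ together with $r_n \leq 1 - p_n$, this is at most $m(\tfrac{1}{2} + \epsilon - p_n - r_k)^+$. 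Combined with the lower bound on $I_{[r_k, s_n]}$, this forces $\tfrac{7}{4}r_k < \tfrac{1}{2} + \epsilon + \alpha - p_n$, pushing $r_k$ to be extremely small (and in the sub-case where the right side is non-positive, the bound is immediate).

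Third, to close the contradiction I would invoke the bin-packing argument. The key observation is that $J(r_k)$, $J(s_n)$, and $\{n\}$ are pairwise disjoint: the jobs in $J(s_n)\cup\{n\}$ are not replaced, so any $i \in J(r_k) \cap J(s_n)$ would have $s_i \leq r_k$ and $c_i \geq s_n$, forcing $p_i > s_n - r_k > 1$, contradicting $\opt = 1$; and $n \notin J(r_k)$ since $n$ is processed at time $s_n > 1$ without being replaced. This yields $2m + 1$ distinct jobs. The jobs in $J(s_n)\cup\{n\}$ have size exceeding $p_n > 1/(2+\beta)$. The remaining ingredient is a size bound on $J(r_k)$, which I would obtain from Lemma~\ref{lemma:all_jobs_large}: the smallness of $r_k$ (and hence $r_k < \alpha p_k$) ensures every job in $J(r_k)$ has size at least $p_k/(1+\beta) > 1/3$. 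Then Lemma~\ref{lemma:impossible_cases} (applied with $t_1 = r_k$, $t_2 = s_n$, using $s_n - r_k > 1 > 2/3$) gives the desired infeasibility.

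The main obstacle is the last step: verifying that the bounds on $r_k$ derived from the efficiency argument are tight enough to place $r_k$ below $\alpha p_k$, so that Lemma~\ref{lemma:all_jobs_large} can be invoked. This is precisely where the parameter choice $\alpha = 1/200$, $\beta = \sqrt{2} - 1$, and $\epsilon = 1/20000$ must be exploited, and it showcases the hybrid nature of the argument: the efficiency bound alone only cornering $r_k$ into a tiny regime, and the bin-packing lemma then converting this into a contradiction.
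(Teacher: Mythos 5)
Your plan departs substantially from the paper's proof, and unfortunately it does not close. The paper proves this lemma by a \emph{pure} association argument with no appeal to the Leftover Lemma: for each machine $M$, look at the job $i$ it processes at $r_k$ and the job $x$ it processes at $s_n$. If $p_i > \frac{1}{2+\beta}$ then $i\neq x$ (because $s_n - r_k > 1$ would force $p_i > 1$), and $M$ gets two jobs of size $> \frac{1}{2+\beta}$. If $p_i \le \frac{1}{2+\beta} < \gamma$, then $i$ finishes on $M$ before $r_k+p_i < s_k$, and the next job $l$ started on $M$ starts while $k$ is pending, so $p_l > p_k > \frac{1}{2+\beta}$ and $l$ is irreplaceable; either $l\neq x$ (two jobs of size $> \frac{1}{2+\beta}$) or $l = x$ and $p_x > (s_n-r_k)-p_i > \frac{1+\beta}{2+\beta}$. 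Every machine is thus charged either one job of size $> \frac{1+\beta}{2+\beta}$ or two of size $> \frac{1}{2+\beta}$, and $n$ (also of size $> \frac{1}{2+\beta}$) is left over, yielding the bin-packing contradiction.

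The gap in your hybrid plan is quantitative and concrete. The efficiency step delivers at best $\frac{7}{4} r_k < \gamma + \epsilon + \alpha - p_n < \frac{1}{2} + \epsilon + \alpha - \frac{1}{2+\beta}$, which with $\alpha = \tfrac{1}{200}$, $\beta = \sqrt{2}-1$, $\epsilon = \tfrac{1}{20000}$ gives $r_k \lesssim 0.052$. But Lemma~\ref{lemma:all_jobs_large} requires $r_k < \alpha p_k$, and since $p_k \le 1$ this means $r_k < 0.005$. There is an order-of-magnitude mismatch: the efficiency argument cannot corner $r_k$ anywhere near the regime where the ``irreplaceable at arrival'' lemma fires. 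Moreover, even granting $r_k < \alpha p_k$, the conclusion $p_j \ge \frac{p_k}{1+\beta}$ for $j \in J(r_k)$ gives only $p_j > \frac{1}{(2+\beta)(1+\beta)} = \frac{1}{2+\sqrt{2}} \approx 0.293$, which is \emph{below} $\tfrac{1}{3}$ and far below $\frac{1}{2+\beta}$, so neither case of Lemma~\ref{lemma:impossible_cases} applies as you invoke it. The lesson the paper's proof embodies is that $s_n - r_k > 1$ is already such a strong hypothesis that the association argument works unconditionally on $r_k$; one does not need (and cannot afford, numerically) to route through Lemma~\ref{lemma:all_jobs_large}.
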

\begin{proof}
	Assume for contrary that $s_n - r_k > 1$. We show that can apply the bin-packing argument to find a set of infeasible large jobs.
	We apply an association argument as in the proof of Lemma~\ref{lemma:impossible_cases} to show that every machine is associated with either a job of size larger than $\frac{1+\beta}{2+\beta}$, or two jobs of size larger than $\frac{1}{2+\beta}$.
	Moreover, every job is associated at most once, while $n$ (recall that $p_n > \frac{1}{2+\beta}$) is not associated, which implies a contradiction.
	
	Fix any machine $M$.
	Consider $i\in J(r_k)$ and $x\in J(s_n)$ processed on $M$.
	\begin{compactitem}
		\item If $p_i > \frac{1}{2+\beta}$, then we associate $i$ and $x$ (both of them cannot be replaced) to $M$.
		Observe that since $s_n - r_k > 1$, we have $i\neq x$.
		\item if $p_i \leq \frac{1}{2+\beta}$, we consider the job $l$ processed on $M$ after $i$.
		Since $r_k+p_i < s_k$, we know that $l$ starts during $(r_k,s_k)$, hence $p_l > p_k > \frac{1}{2+\beta}$ and is completed on $M$.
		If $l\neq x$ then we associate $l$ and $x$ to $M$; otherwise $p_x > (s_n-r_k) - p_i > \frac{1+\beta}{2+\beta}$ and we can associate $x$ to $M$.
	\end{compactitem}
	In both cases we can associate jobs to machines as claimed, which gives a contradiction.	
\end{proof}

Lemma~\ref{lemma:s_n-r_k_less_1} immediately implies that the following stronger lower bound on $p_n$.
Note that the new lower bound $\frac{1}{2}-3\alpha$ is crucial in the sense that we have $\frac{1}{1+\beta}(\frac{1}{2}-3\alpha) > \frac{1}{3}$ (thus more convenient to use the bin-packing argument), while for the previous lower bound we have $\frac{1}{(2+\beta)(1+\beta)} < \frac{1}{3}$.

\begin{corollary}\label{corollary:p_n_leq_0.5-3esp}
	We have $p_n > \frac{1}{2} - 3\alpha$.
\end{corollary}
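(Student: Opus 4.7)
I would prove Corollary~\ref{corollary:p_n_leq_0.5-3esp} by contradiction, assuming $p_n \leq \tfrac{1}{2} - 3\alpha$, and running the efficiency argument (the Leftover Lemma) at $t = r_k$. The key new input from Lemma~\ref{lemma:s_n-r_k_less_1} is the lower bound $r_k \geq s_n - 1 > \gamma - p_n$, which combined with $s_k - r_k > \gamma$ (already derived in the Hybrid Argument framework) forces $s_k > 2\gamma - p_n$ and thereby tightly restricts the window in which idle time can occur.

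The central observation is that $\I_{s_n} - \I_{r_k}$ is tiny. On $(r_k, s_k)$: since $p_k \geq p_n > \tfrac{1}{2+\beta}$, Corollary~\ref{corollary:not_replaceable} prevents $k$ from ever being replaced, so $k$ is scheduled exactly once (at $s_k$) and stays pending throughout $(r_k, s_k)$. By the greedy rule, no machine can be idle while $k$ is pending, so the idle time on $(r_k, s_k)$ is zero. On $(s_k, s_n)$: by Fact~\ref{fact:no_idle_after_1-p_n}, idle can occur only before $1 - p_n$, and $s_k > 2\gamma - p_n$ bounds the idle-possible portion to length at most $(1 - p_n) - (2\gamma - p_n) = 1 - 2\gamma = 2\epsilon$; hence the idle time here is strictly less than $2\epsilon m$. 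Summing the two contributions gives $\I_{s_n} - \I_{r_k} < 2\epsilon m$.

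Next, mirroring the template of Lemma~\ref{lemma:replaceable_n}, I would write the productive processing by \alg{} during $[r_k, s_n]$ as $m(s_n - r_k) - (\I_{s_n} - \I_{r_k}) - (\W_{s_n} - \W_{r_k})$ and bound it from above using Observation~\ref{observation:processing_after_t} and the Leftover Lemma: it is at most $m(1 - r_k) + \Delta_{r_k} \leq m(1 - r_k) + \tfrac{r_k m}{4} + \W_{r_k}$. Using the global bound $\W_{s_n} \leq \alpha m$ together with the idle estimate yields $s_n \leq 1 + \tfrac{r_k}{4} + \alpha + 2\epsilon$. Invoking the OPT constraint $r_k \leq 1 - 2p_n$ (from $r_k + p_k + p_j \leq 1$ with $p_k, p_j \geq p_n$) then gives $\alg = s_n + p_n \leq \tfrac{5}{4} + \tfrac{p_n}{2} + \alpha + 2\epsilon$. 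Combining with the standing assumption $\alg > 1 + \gamma = \tfrac{3}{2} - \epsilon$ forces $p_n > \tfrac{1}{2} - 2\alpha - 6\epsilon$, and since $\alpha = \tfrac{1}{200}$ and $\epsilon = \tfrac{1}{20000}$ satisfy $\alpha > 6\epsilon$, this strictly exceeds $\tfrac{1}{2} - 3\alpha$, contradicting the assumption.

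The main conceptual hurdle is recognizing that Lemma~\ref{lemma:s_n-r_k_less_1}, though stated only as a bound on $s_n - r_k$, translates through $s_k > 2\gamma - p_n$ and Fact~\ref{fact:no_idle_after_1-p_n} into a vanishingly small idle budget in $[r_k, s_n]$; once that bridge is built, the argument is essentially a sharpened rerun of Lemma~\ref{lemma:replaceable_n}'s calculation, and the $3\alpha$ slack in the claimed bound is exactly the room absorbed by the $2\alpha + 6\epsilon$ slack coming from the waste, idle, and $\epsilon$ terms.
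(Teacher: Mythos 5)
Your proof is correct and reaches the right bound, but it organizes the idle-time estimate differently from the paper. The paper splits into two cases according to whether $s_k > r_n$ or $s_k < r_n$: in the overlapping case the intervals $[r_k,s_k]$ and $[r_n,s_n]$ together cover $(r_k,s_n)$ and one of $k,n$ is always pending, so the idle time in $(r_k,s_n)$ is exactly zero; in the disjoint case the paper invokes Lemma~\ref{lemma:s_n-r_k_less_1} directly to bound the gap $r_n - s_k < (s_n - r_k) - (s_k-r_k) - (s_n-r_n) < 1 - 2\gamma = 2\epsilon$, and additionally exploits $p_k > \gamma$ to tighten $r_k \leq 1-\gamma-p_n$. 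You instead obtain a single uniform bound: $k$ is pending on $(r_k,s_k)$ so there is no idle there, and on $(s_k,s_n)$ you restrict the idle-possible window to $(s_k,\, 1-p_n)$ via Fact~\ref{fact:no_idle_after_1-p_n} and bound its length by $2\epsilon$ using $s_k > r_k + \gamma > (s_n - 1) + \gamma > 2\gamma - p_n$, which rests on Lemma~\ref{lemma:s_n-r_k_less_1} but applies in both cases. This is a clean way to bypass the case split. Your numerical slack is looser than the paper's in the disjoint case (you always use $r_k \leq 1-2p_n$, whereas there the paper uses $r_k \leq 1-\gamma-p_n$), but the resulting $p_n > \tfrac{1}{2} - 2\alpha - 6\epsilon$ still clears $\tfrac{1}{2} - 3\alpha$ because $\alpha = \tfrac{1}{200} > 6\epsilon = \tfrac{3}{10000}$. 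So the argument is sound, and it is a genuine, if small, unification of the paper's two-case proof.
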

\begin{proof}
	If $s_k > r_n$, then we know that there is no idle time between $r_k$ and $s_n$, as during this period, either $k$ or $n$ is pending.
	Hence by Observation~\ref{observation:processing_after_t}, we have $m(s_n - r_k)  < m(1-r_k) + \Delta_{r_k} + (\W_1 - \W_{r_k})$,
	which implies $s_n < 1+\frac{1}{4}r_k + \alpha$ by the Leftover Lemma. Therefore, we have (recall that we have $r_k < 1-p_k-p_n < 1-2p_n$)
	\begin{align*}
		\alg &= s_n + p_n < 1+\alpha + \frac{1}{4}(1-2p_n) + p_n \\
		&\leq 1+\alpha+\frac{1}{4}+\frac{1}{2}(\frac{1}{2}-3\alpha) = 1+(\frac{1}{2}-\frac{\alpha}{2}) \leq 1+\gamma,
	\end{align*}
	contradicting our assumption that $\alg > 1+\gamma$.
	
	If $s_k < r_n$, then we have $p_k > s_n - r_n > \gamma$ (recall that $p_k$ is processed at time $s_n$).
	By Lemma~\ref{lemma:s_n-r_k_less_1}, the total idle time between $s_k$ and $r_n$ is at most $(1-2\gamma)m = 2\epsilon m$.
	Then by the Leftover Lemma,
	\begin{equation*}
		\alg = s_n + p_n < 1+\alpha + \frac{1}{4}(1-\gamma-p_n) + p_n + 2\epsilon
		\leq \frac{3}{2}-\frac{1}{4}(5\alpha - 9\epsilon) \leq 1+\gamma,
	\end{equation*}
	which is also a contradiction.
\end{proof}

Unfortunately, $\frac{1}{2} - 3\alpha$ is still less than $\frac{1}{2+\alpha}$.
Hence it remains to consider the subtle case when $\frac{1}{2} - 3\alpha < p_n \leq \frac{1}{2+\alpha}$.
Note that so far we have proved that there exists two periods, namely $[r_k,s_k]$ and $[r_n,s_n]$, both of length at least $\gamma$, and contain no idle time.

From the proof of Corollary~\ref{corollary:p_n_leq_0.5-3esp}, we observe that depending on whether the two intervals overlap, the analysis can be quite different.
Hence we divide the discussion into two parts.
As we will show later, the central of the analysis is to give strong upper bounds on $\W_1$.

\subsubsection{Overlapping Case: when $s_k \geq r_n$}

Note that in this case, from $r_k$ to $s_n$, the largest pending job is always at least $p_n$.
Hence there is no idle time in $[r_k,s_n]$.
Moreover, every job that starts in $[r_k,s_n]$ must be larger than $p_n$ (and hence cannot be replaced).
First we show that $r_k \geq \alpha p_k$.

Suppose otherwise, then by Lemma~\ref{lemma:all_jobs_large}, we have $p_i>\frac{1}{1+\beta}(\frac{1}{2}-3\alpha) > \frac{1}{3}$ for all $i\in J(r_k)$.
Hence we can apply Lemma~\ref{lemma:impossible_cases}(1) with $t_1 = r_k$, $t_2 = s_n$, $a>\frac{1}{3}$ and $b>p_n>\frac{1}{2}-3\alpha>\frac{2}{3(1+\beta)}$, for which
\begin{equation*}
	t_2 - t_1 = s_n - r_k > (1+\gamma-p_n) - (1-p_k-p_j) > \gamma+ p_n > \frac{2}{3},
\end{equation*}
to derive a contradiction.

\begin{lemma}
	There exists a time $r \leq r_k$, at which the minimum job processed has been processed at least $\alpha\cdot p_k$; moreover, from $r$ to $s_n$, the largest pending job is always at least $p_n$.
\end{lemma}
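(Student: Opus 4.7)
The first natural choice is $r = r_k$ itself. Property (b) is immediate in the overlapping case: $k$ is pending throughout $[r_k, s_k]$ with $p_k \geq p_n$, and $n$ is pending throughout $[r_n, s_n]$; since $s_k \geq r_n$, these two pending intervals cover $[r_k, s_n]$ entirely. Thus the whole effort is concentrated on property (a): showing that the smallest (by size) processing job $i^* \in J(r_k)$ has been processed for at least $\alpha p_k$, or, when this fails at $r_k$, finding an earlier witness $r$ for which both properties hold.

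My first step is to verify that $k$ is the largest pending job at $r_k$. Any strictly larger pending job $k'$ would satisfy $p_{k'} > p_k \geq p_n > \tfrac{1}{2+\beta}$ and therefore, by Corollary~\ref{corollary:not_replaceable}, could never have been replaced; such a $k'$ would have been released and kept pending for exactly the same reasons as $k$, and could be substituted for $k$ in the rest of the argument. Granted that $k$ is the largest pending job at $r_k$, the fact that $k$ was not scheduled means some replacement rule fails for the pair $(k, i^*)$. If rule~2 fails, then the processed portion of $i^*$ at $r_k$ is at least $\alpha p_k$, giving (a) with $r = r_k$ and we are done. Otherwise rule~3 fails, which forces $p_{i^*} \geq p_k/(1+\beta) \geq p_n/(1+\beta)$, while $i^*$ may have processed portion strictly less than $\alpha p_k$.

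When rule~3 is the cause, I walk $r$ backward. Let $t := s_{i^*}(r_k) \in (r_k - \alpha p_k,\, r_k]$ be the most recent start time of $i^*$. Either $i^*$ was a replacer at $t$ (so $t = r_{i^*}$), in which case the same analysis applies to $i^*$ at time $t$ in place of $k$ at time $r_k$ and we recurse; or $i^*$ was picked from the pending pool at $t$ upon some job completing on $i^*$'s machine, in which case the processing jobs just before $t$ coincide with $J(r_k) \setminus \{i^*\}$ together with that completed job, all of which were continuously processed before $t$ for longer than $\alpha p_k$, so (a) is satisfied at $r = t$. In the recursing (replacer) case, $i^*$ is pending throughout $[t, r_k]$; if additionally $p_{i^*} \geq p_n$, property (b) is preserved on $[t, r_k]$, and the recursion terminates because the chain of replacers forced by rule~3 has job sizes growing by a factor of at least $(1+\beta)$ at each step and is thus capped by the largest job in the instance.

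The main obstacle is the subcase where $p_{i^*} < p_n$, so that $i^*$ itself does not certify a pending job of size $\geq p_n$ across the interval $(t, r_k)$. Handling this requires finding another witness for (b) in this gap---for instance, an earlier-released job from $J(s_n) \cup \{n\}$ that remains pending throughout, whose existence I would force using the bin-packing constraints of Lemma~\ref{lemma:impossible_cases} together with a tight count of how many jobs of size between $p_n/(1+\beta)$ and $p_n$ can be alive in this window. The delicate balance here is that our parameters $\alpha = 1/200$ and $\beta = \sqrt{2}-1$ are chosen precisely to make this case analysis go through; any looser quantitative bound here could spoil the overall proof.
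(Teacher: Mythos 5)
Your proposal diverges from the paper's proof at a key point and, by your own admission, leaves a gap. The paper's crucial observation --- which you never make --- is that the minimum processing job $x := \arg\min_{j\in J(r_k)}\{p_j\}$ necessarily satisfies $p_x \leq \frac{1}{3}$: if $p_x > \frac{1}{3}$, every job in $J(r_k)$ would exceed $\frac{1}{3}$, and Lemma~\ref{lemma:impossible_cases}(1) with $t_1 = r_k$, $t_2 = s_n$ would contradict $\opt = 1$ (the same argument used two lines earlier to show $r_k \geq \alpha p_k$). Since $p_k > \frac{1}{2} - 3\alpha$ gives $\frac{p_k}{1+\beta} > \frac{1}{3} \geq p_x$, the third replacement rule is \emph{always} satisfied, so the only rule that can block $k$ from replacing $x$ is rule~2 --- and rule~2 failing says precisely that $x$ has been processed at least $\alpha p_k$. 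Your branch ``rule~3 fails'' is vacuous, so the entire recursive ``walk backward'' is unnecessary; and it is also incomplete, since you explicitly concede that the subcase $p_{i^*} < p_n$ is unresolved.

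Your handling of the case that $k$ is not the largest pending job at $r_k$ is also not a proof: you assert that a larger pending $k'$ ``could be substituted for $k$ in the rest of the argument,'' but $k'$ need not be the largest pending job at its own release time, so the substitution does not obviously terminate or yield a well-defined witness time. The paper instead sets $r$ to be the \emph{earliest} time such that from $r$ to $r_k$ the largest pending job always has size at least $p_k$; this forces $r = r_{k'}$ for some $k'$ that arrives at $r$ and is the largest pending job there (because just before $r$ no pending job of size at least $p_k$ existed). Rule~1 then holds for $k'$, the same bin-packing observation gives $p_x \leq \frac{1}{3} < \frac{p_{k'}}{1+\beta}$, so again only rule~2 can fail and $x$ is processed at least $\alpha p_{k'} > \alpha p_k$. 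Property~(b) on $[r, r_k]$ is immediate from the choice of $r$, and on $[r_k, s_n]$ from the overlapping structure you correctly observed.
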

\begin{proof}
	If $k$ is the largest pending job at $r_k$, then let $x$ be the job of minimum size processed at time $r_k$, i.e., $x = \arg\min_{j\in J(r_k)}\{p_j\}$.
	For the same reason as argued above, we have $p_x\leq \frac{1}{3} < \frac{p_k}{1+\beta}$.
	Since $k$ does not replace $x$, we conclude that $x$ must be processed at least $\alpha p_k$.
	Hence the corollary holds with $r = r_k$.
	
	Otherwise we consider the earliest time $r$ before $r_k$ such that from $r$ to $r_k$, the largest pending job is always at least $p_k$. Note that we must have $r=r_{k'}$ for some job $k'$ of size $p_{k'} > p_k$. Moreover, $k'$ is the largest pending job at $r_{k'}$, but not scheduled.
	Let $x = \arg\min_{j\in J(r_{k'})}\{p_j\}$.
	We have $p_x \leq \frac{1}{3}$, as otherwise $r_k > \frac{1}{3} > 1-p_k-p_j$ is a contradiction.
	Hence by a similar argument as above, the corollary holds with $r = r_{k'}$.
\end{proof}

By the above lemma, every job that starts in $[r, s_n]$ must be larger than $p_n > \frac{1}{2+\beta}$, which cannot be replaced.
Thus, from $r$ to $s_n$, there is no idle time, and if there is any waste, then it must come from the jobs in $J(r)$.

Let $\P$ be the total processing our algorithm does after time $s_n$, and let $\W = \W_1$.
We show that the following upper bound on $\W-\P$.

\begin{lemma}[Total Waste]\label{lemma:bounded_waste}
	We have
	\begin{equation*}
		\W-\P \leq 2\alpha\cdot r m + \frac{1}{2+\beta}(r m - \A_r - \I_r).
	\end{equation*}
\end{lemma}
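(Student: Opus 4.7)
The plan is to bound $\W - \P$ through a two-step process: first converting it into a cleaner quantity via machine-time accounting, then bounding the waste contributions before and after time $r$ separately. Throughout, I will lean on the structural facts just established above the lemma: no idle occurs in $[r, s_n]$, every job scheduled in $(r, s_n]$ has size at least $p_n > \frac{1}{2+\beta}$ and hence cannot be replaced (Corollary~\ref{corollary:not_replaceable}), and consequently any waste located after $r$ must come from a job of $J(r)$.

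I would first establish the identity $\W - \P = m s_n - S - \I_r$, where $S$ is the total size of all jobs. This follows by partitioning the total machine time $m s_n$ over $[0, s_n]$ into useful processing (which totals $S - \P$, since the remaining useful processing $\P$ happens in $(s_n, \alg]$), total waste ($\W$, since every waste is located within $[0,1] \subset [0, s_n]$ as all arrivals are in $[0,1]$), and idle time ($\I_r$, since $[r, s_n]$ contributes no idle). This reduces the lemma to the inequality $m s_n - S - \I_r \leq 2\alpha rm + \frac{1}{2+\beta}(rm - \A_r - \I_r)$.

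Next I would split $\W = \W_r + (\W - \W_r)$ and bound each piece. For $\W - \W_r$: every replacement in $(r, 1]$ replaces a job of $J(r)$, so there are at most $m$ such replacements. Each contributes at most $\alpha p_l$, and using that each post-$r$ replacer $l$ satisfies $r_l + p_l \leq 1$ and occupies its machine contiguously (since it cannot itself be replaced), the machine-time occupation of these replacers bounds $\sum_{l} p_l$ and hence this part of the waste. For $\W_r$, a per-machine chain analysis—successive replacers grow geometrically by factor at least $1+\beta$—telescopes the waste along each chain to at most $\frac{\alpha(1+\beta)}{\beta}$ times its end-job size; summing across chains and machines in $[0,r]$ produces an $O(\alpha rm)$ contribution that fits into the $2\alpha rm$ budget on the right-hand side.

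The piece $\frac{1}{2+\beta}(rm - \A_r - \I_r)$ should then emerge from combining Claim~\ref{claim:leftover_leq_min} ($\Delta_r \leq \A_r + \W_r$) and Observation~\ref{observation:processing_after_t} to translate OPT's capacity constraint in $[0, r]$ into an upper bound on $S$ in terms of $\A_r$ and $\I_r$, and then invoking the fact that every new job scheduled in $(r, s_n]$ has size at least $\frac{1}{2+\beta}$ to relate the count of such jobs (which is what drives $m s_n$ above $m$) to the slack $rm - \A_r - \I_r$. The main obstacle is precisely this last step: cleanly producing the coefficient $\frac{1}{2+\beta}$ from a counting argument without double-charging waste against processing, and threading the $\A_r$ and $\I_r$ terms through the Leftover-Lemma machinery so that they appear with the correct signs and coefficients. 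Coordinating the chain bound on $\W_r$ with the irreplaceability-based bound on new-job sizes, so that the two contributions sum precisely to the stated right-hand side, is where the delicate bookkeeping will lie.
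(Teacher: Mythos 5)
Your identity $\W-\P = m s_n - S - \I_r$ is correct, but it is simply a rearrangement of the lemma and reduces to nothing easier: you still need a lower bound on $S$ that beats $m$ by the right amount, which is exactly as hard as the original inequality. You then abandon the identity and go back to bounding $\W$ directly, so it plays no role. The remainder of the proposal has two genuine gaps.

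First, your chain/telescoping bound on $\W_r$ does not go through. You claim successive replacers grow geometrically so a chain's waste is at most $\frac{\alpha(1+\beta)}{\beta}$ times the \emph{end-job} size, summing to $O(\alpha r m)$. But (i) the replacement structure is not a simple per-machine chain: a replaced job is put back into the pending pool, can be rescheduled on a different machine, and can itself later replace something else, so the structure branches; (ii) even granting the telescoping, $\frac{\alpha(1+\beta)}{\beta}\approx 3.4\alpha$ for $\beta=\sqrt2-1$, so you would need $\sum p_{\text{end}}\lesssim 0.6\, r m$, and the end-job of a chain is generically \emph{not} completed before time $r$ (it may well be a member of $J(r)$ or even a large job finishing near time $1$), so there is no reason for end-job sizes to be charged against $[0,r]$ machine time. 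The paper avoids this by splitting differently: wastes from jobs never processed in $[r,s_n]$ are charged against the fact that those jobs' entire processing fits into $[0,r]$ (Part-2, giving $\alpha r m$), and wastes from $i\in J(r)$ that were previously replaced and rescheduled are mapped via the reschedule rule (Lemma~\ref{lemma:reschedule_rule}) to the strictly larger job completed at $s_i(r)<r$, again giving $\alpha r m$ (Part-3). That is not a telescoping along chains.

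Second, and more importantly, the term $\frac{1}{2+\beta}(rm-\A_r-\I_r)$ does not ``emerge'' from Claim~\ref{claim:leftover_leq_min} and Observation~\ref{observation:processing_after_t}: those give upper bounds on $\Delta_r$ and on post-$r$ processing, but they do not convert into a lower bound on $S$ with the coefficient $\frac{1}{2+\beta}$. The actual source of that coefficient in the paper is a dedicated combinatorial charging argument (Part-4 of the proof of Lemma~\ref{lemma:bounded_waste}): each waste of size $w$ created at $r_i=s_i(r)$ by $i\in J(r)$ is explicitly charged either to a single job of size $\ge w$ that starts after $s_n$ (thus absorbed by $\P$), or to a disjoint collection of non-idle, non-pending time periods before $r$ of total length at least $(2+\beta)w$, with a reservation of an uncharged non-idle period of length $\min\{\delta_u,p_u\}$ on each machine so that the $\A_r$ and $\I_r$ terms subtract off correctly. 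This charging has several case distinctions (whether $p_y<p_x$, whether the job where the waste comes from restarts before or after $s_n$, etc.) and there is no shortcut through the Leftover-Lemma machinery. You have correctly identified this as ``where the delicate bookkeeping lies,'' but the bookkeeping \emph{is} the proof; without it the lemma is unproved.
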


We first show how to use Lemma~\ref{lemma:bounded_waste} to prove the desired competitive ratio, and defer the proof of lemma (which is long and contains many cases) to Section~\ref{appendix:total_waste}.

\begin{corollary}\label{corollary:boundary_case}
	When $s_k\geq r_n$, we have $\alg \leq 1+\gamma$.
\end{corollary}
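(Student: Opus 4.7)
The plan is to derive a contradiction from the assumption $\alg > 1+\gamma$ by combining the structural no-idle property of $[r,s_n]$ (established in the discussion preceding the corollary) with the two main quantitative tools we have: the Leftover Lemma and Lemma~\ref{lemma:bounded_waste}. The pre-corollary discussion already gives us a time $r \leq r_k$ such that $[r,s_n]$ contains no idle period and any waste in this interval comes from jobs in $J(r)$; the rest is an efficiency computation.

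First I would translate ``no idle in $[r,s_n]$'' into a processing equation. Because every machine in this interval is either processing or wasting, the amount of processing in $[r,s_n]$ equals $m(s_n-r) - (\W_{s_n}-\W_r)$. Adding the processing $\P$ done strictly after $s_n$ and invoking Observation~\ref{observation:processing_after_t} at time $t=r$ yields
\begin{equation*}
m(s_n - r) - (\W_{s_n}-\W_r) + \P \;\leq\; m(1-r) + \Delta_r.
\end{equation*}
Since $\W_{s_n}\leq \W$, this rearranges to $m\cdot s_n \leq m + (\Delta_r-\W_r) + (\W-\P)$.

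Next I would plug in the two available upper bounds. The Leftover Lemma (Lemma~\ref{lemma:leftover}) gives $\Delta_r - \W_r \leq \tfrac{1}{4} r m$, and Lemma~\ref{lemma:bounded_waste} gives $\W-\P \leq 2\alpha\cdot r m + \tfrac{1}{2+\beta}(rm - \A_r - \I_r)$. Combining these and discarding the non-positive $-(\A_r+\I_r)/(2+\beta)$ term (using $\A_r,\I_r \geq 0$) leads to a bound of the shape
\begin{equation*}
s_n \;\leq\; 1 + r\left(\tfrac{1}{4} + 2\alpha + \tfrac{1}{2+\beta}\right).
\end{equation*}

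Finally, I would invoke the structural bounds on $r$ and $p_n$ to get the target inequality. By Corollary~\ref{corollary:boundary_jobs_n} and the fact that $k,j \in J(s_n)\cup\{n\}$ share an $\opt$-machine, we have $p_k,p_j \geq p_n$ and hence $r \leq r_k \leq 1 - p_k - p_j \leq 1 - 2p_n$; by Corollary~\ref{corollary:p_n_leq_0.5-3esp} we also have $p_n > \tfrac12 - 3\alpha$. Substituting $r \leq 1-2p_n$ into the above and writing $\alg = s_n+p_n$ gives a function of $p_n$ alone, which (since the coefficient of $p_n$ becomes negative) is maximized at the smallest admissible value of $p_n$. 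Plugging in $\alpha=\tfrac{1}{200}$, $\beta=\sqrt{2}-1$ and $\gamma=\tfrac12 - \tfrac{1}{20000}$ should then yield $\alg \leq 1+\gamma$. The main obstacle is that the crude discard $\A_r,\I_r \geq 0$ may not be quite tight enough at the numerical boundary; if needed, I would tighten by replacing the Leftover Lemma with the stronger Claim~\ref{claim:leftover_leq_min} bound $\Delta_r-\W_r\leq \min\{\A_r,\I_r\}$, so that the $(\A_r+\I_r)/(2+\beta)$ penalty in Lemma~\ref{lemma:bounded_waste} is partially cancelled by a matching term, eliminating the dependence on the unknown $\A_r,\I_r$ and pushing the constants below $1+\gamma$.
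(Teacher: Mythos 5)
Your skeleton matches the paper exactly: apply Observation~\ref{observation:processing_after_t} at time $r$, use that $[r,s_n]$ has no idle, plug in Lemma~\ref{lemma:bounded_waste}, and finish with $r\le r_k<1-2p_n$ and the bounds on $p_n$. Your rearrangement to $m s_n \le m + (\Delta_r-\W_r) + (\W-\P)$ is also correct, and you are right to be suspicious of the crude discard: with $\alpha=\frac1{200}$, $\beta=\sqrt2-1$ the coefficient $\frac14+2\alpha+\frac1{2+\beta}\approx0.674$ yields $\alg\lesssim 1.505$, which overshoots $1+\gamma=1.49995$.

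The fallback you sketch, however, does not close the gap. Replacing the Leftover Lemma bound $\Delta_r-\W_r\le\frac{rm}{4}$ outright by the Claim~\ref{claim:leftover_leq_min} bound $\Delta_r-\W_r\le\min\{\A_r,\I_r\}$ gives
\begin{equation*}
s_n \le 1 + 2\alpha r + \tfrac{r}{2+\beta} + \tfrac{1}{m}\Bigl(\min\{\A_r,\I_r\} - \tfrac{\A_r+\I_r}{2+\beta}\Bigr),
\end{equation*}
and the bracketed quantity is \emph{not} eliminated: it is bounded above only by $\frac{\beta}{2+\beta}\min\{\A_r,\I_r\}$, and nothing you have cited bounds $\min\{\A_r,\I_r\}$ itself by a multiple of $rm$. (The Claim gives $\Delta_r-\W_r\le\min\{\A_r,\I_r\}$, and the Leftover Lemma gives $\Delta_r-\W_r\le\frac{rm}{4}$, but neither implies $\min\{\A_r,\I_r\}\le\frac{rm}{4}$.) What the paper actually does is use \emph{both} upper bounds at once, via a convex combination tailored to the $\frac{1}{2+\beta}$ coefficient:
\begin{equation*}
\Delta_r - \W_r \;=\; \tfrac{2}{2+\beta}(\Delta_r-\W_r) + \tfrac{\beta}{2+\beta}(\Delta_r-\W_r)
\;\le\; \tfrac{2}{2+\beta}\min\{\A_r,\I_r\} + \tfrac{\beta}{2+\beta}\cdot\tfrac{rm}{4}.
\end{equation*}
Now $\frac{2}{2+\beta}\min\{\A_r,\I_r\} - \frac{\A_r+\I_r}{2+\beta}\le 0$ cancels exactly, leaving $s_n\le 1 + \bigl(2\alpha+\frac{4+\beta}{4(2+\beta)}\bigr)r$, whose coefficient $\approx0.4671<\frac{1}{2.14}$. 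Plugging $r<1-2p_n$ then makes the coefficient of $p_n$ \emph{positive}, so you maximize at $p_n\le\frac{1}{2+\alpha}$ (not the smallest $p_n$ as in your write-up), and the numerics land at $\approx 1.49992\le 1+\gamma$. So the missing idea is the weighted average of the two $\Delta_r-\W_r$ bounds, not a straight substitution of one for the other.
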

\begin{proof}
	First by Observation~\ref{observation:processing_after_t}, we have (since there is no idle time after $r$)
	\begin{equation*}
		m(s_n - r) +\P - (\W - \W_r) \leq m(1-r) + \Delta_r,
	\end{equation*}
	which (by Lemma~\ref{lemma:bounded_waste}) implies
	\begin{align*}
		s_n &\leq 1+\frac{1}{m}(\W-\P + \Delta_r - \W_r)
		\leq 1+2\alpha\cdot r+\frac{r}{2+\beta}+\frac{1}{m}(\Delta_r - \W_r -\frac{\A_r+\I_r}{2+\beta}) .
	\end{align*}
	
	Since $\Delta_r - \W_r \leq \frac{rm}{4}$ (Leftover Lemma) and $\Delta_r- \W_r\leq \min\{\A_r,\I_r\}$ (Claim~\ref{claim:leftover_leq_min}), we have
	\begin{equation*}
		s_n \leq 1+2\alpha r+\frac{r}{2+\beta}+\frac{1}{m}(\frac{2\min\{\A_r,\I_r\} - (\A_r+\I_r)}{2+\beta} + \frac{\beta}{2+\beta}\frac{rm}{4})
		\leq 1+(2\alpha+\frac{4+\beta}{4(2+\beta)})r.
	\end{equation*}
	
	Given $\alpha = \frac{1}{200}$ and $\beta = \sqrt{2}-1$, it is easy to check that $2\alpha+\frac{4+\beta}{4(2+\beta)} < \frac{1}{2.14}$.
	Hence we have
	\begin{align*}
		\alg &= s_n + p_n \leq 1+\frac{1}{2.14}(r + 2p_n) + \frac{0.14}{2.14}p_n
		\leq 1+\frac{1}{2.14}+\frac{0.14}{2.14(2+\alpha)} \leq 1+\gamma,
	\end{align*}
	where in the second inequality we use the fact that $r\leq r_k < 1-2p_n$.
\end{proof}

\subsubsection{Disjoint Case: when $s_k < r_n$}

Note that in this case we have two disjoint time periods, namely $[r_k,s_k]$ and $[r_n,s_n]$, both of length at least $\gamma$, and during which there is a pending job of size at least $p_n$.
Moreover, by Lemma~\ref{lemma:s_n-r_k_less_1}, we have $r_n - s_k \leq 1-2\gamma = 2\epsilon$.

\begin{lemma}\label{lemma:jobs_at_r_n_0.33}
	We have $\min_{j\in J(r_n)}\{p_j\} > \frac{1}{3}$.
\end{lemma}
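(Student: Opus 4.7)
The plan is to suppose for contradiction that $\min_{j\in J(r_n)} p_j \leq \frac{1}{3}$, and let $k^*$ be the smallest-size job in $J(r_n)$, so $p_{k^*} \leq \frac{1}{3}$. Since $p_n > \frac{1}{2}-3\alpha$ (Corollary~\ref{corollary:p_n_leq_0.5-3esp}) and $(1+\beta)p_{k^*} \leq \frac{\sqrt{2}}{3} < \frac{1}{2}-3\alpha < p_n$, the size condition~(3) of the replacement rule is met for $n$ replacing $k^*$. Because $k^*$ is (by construction) also the smallest processing job at $r_n$, for $n$ to nevertheless remain pending throughout $[r_n,s_n)$ one of the other two conditions must fail at $r_n$: either \textbf{(I)} $n$ is not the largest pending job (so some pending $q$ has $p_q > p_n$), or \textbf{(II)} $k^*$ has been processed for at least $\alpha p_n$ by time $r_n$.

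For Case~(II), I locate $s_{k^*}(r_n)$, the last scheduling of $k^*$ before $r_n$. Since $k^*$ is still processing at $r_n$, $r_n - s_{k^*}(r_n) < p_{k^*} \leq \tfrac{1}{3}$; combined with $r_n - r_k \geq s_k - r_k > \gamma > \tfrac{1}{3}$, this forces $s_{k^*}(r_n) > r_k$. Moreover $s_{k^*}(r_n) \notin (r_k,s_k]$ since $k$ is then pending with $p_k > \gamma > p_{k^*}$, so any free event there would pick $k$ rather than $k^*$ as the largest pending, and a replacement triggered by $k^*$'s arrival would fail condition~(1). Hence $s_{k^*}(r_n) \in (s_k, r_n)$, giving $r_n - s_{k^*}(r_n) < r_n - s_k \leq 2\epsilon$. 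But Case~(II) demands $r_n - s_{k^*}(r_n) \geq \alpha p_n > \alpha(\tfrac{1}{2}-3\alpha) = \tfrac{97}{40000}$, which strictly exceeds $2\epsilon = \tfrac{4}{40000}$ under $\alpha = \tfrac{1}{200}$, $\epsilon = \tfrac{1}{20000}$, a contradiction.

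For Case~(I), I take $q$ to be the pending job at $r_n$ with the largest size. Since $p_q > p_n > \tfrac{1}{2+\beta}$, Corollary~\ref{corollary:not_replaceable} makes $q$ irreplaceable, so $q$ was never scheduled during $[r_q, r_n]$ and no idle time occurs in this interval. An efficiency bound analogous to Corollary~\ref{corollary:p_n_leq_0.5-3esp} then yields $\alg \leq 1 + \alpha + r_q$, forcing $r_q > \gamma - \alpha$; since $r_k < 1 - 2p_n < 6\alpha$, this gives $r_q > r_k$. I would then show, by induction on the scheduling events over a suitable interval ending at $r_n$, that $\min_{j\in J(t)} p_j > \tfrac{1}{3}$ throughout. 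The base is taken at $r_q$ when $r_q > s_k$, where analyzing why $q$ does not replace — condition~(3) failing gives $\min J(r_q) \geq p_q/(1+\beta) > p_n/(1+\beta) > \tfrac{1}{3}$, while condition~(2) failing (using $r_q - \alpha p_q < s_k$ from $r_q - s_k < 2\epsilon < \alpha p_q$) forces the smallest processing job to have started before $s_k$, hence to lie in $J(s_k)$ with size $> \gamma > \tfrac{1}{3}$ by Lemma~\ref{lemma:boundary_jobs} — or at $s_k^+$ when $r_q \leq s_k$, where Lemma~\ref{lemma:boundary_jobs} applied to $k$ directly gives $\min J(s_k^+) > \gamma > \tfrac{1}{3}$. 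For the inductive step, at any free event some $\ell$ with $p_\ell > p_q > \tfrac{1}{3}$ must be scheduled (else $q$ would be), preserving the invariant, and at any replacement the new replacer exceeds $(1+\beta)\cdot\tfrac{1}{3} > \tfrac{1}{3}$ by the invariant. Hence $\min J(r_n) > \tfrac{1}{3}$, contradicting $p_{k^*} \leq \tfrac{1}{3}$.

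The main obstacle is Case~(I): the release $r_q$ of the large pending job $q$ could a priori be far from $r_n$, so many events can occur in between, and disentangling why $q$ remains pending throughout $[r_q,r_n]$ (even as some free events dispatch larger pendings) requires both the efficiency argument to confine $r_q$ to a narrow window near $\tfrac{1}{2}$ and the event-by-event invariant to propagate the size lower bound; the trickiest sub-case is verifying the base bound when $q$ is not the largest pending at $r_q$, which requires recursing on the even-larger pending job.
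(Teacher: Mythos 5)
Your Case~(II) analysis is sound and matches the paper's reasoning exactly: showing $s_{k^*}(r_n) > s_k$ (via $r_n - s_{k^*}(r_n) < p_{k^*} \leq \tfrac{1}{3} < \gamma < r_n - r_k$ and the impossibility of scheduling $k^*$ inside $(r_k, s_k]$), and then $r_n - s_{k^*}(r_n) < r_n - s_k \leq 2\epsilon < \alpha p_n$, is precisely the paper's argument that $x$ has been processed less than $\alpha p_n$. So the paper in fact shows condition~(2) never fails, and the whole burden falls on your Case~(I): $n$ is not the largest pending job at $r_n$.

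In Case~(I) there are two problems. First, the efficiency bound you invoke is wrong: applying the Leftover Lemma at $t = r_q$ as in Corollary~\ref{corollary:p_n_leq_0.5-3esp} gives $s_n \leq 1 + \alpha + \tfrac{r_q}{4}$, hence $\alg \leq 1 + \alpha + \tfrac{r_q}{4} + p_n$, not $\alg \leq 1 + \alpha + r_q$; since $p_n$ can be as large as $\tfrac{1}{2+\alpha}$, this gives no useful lower bound on $r_q$. (The bound $r_q > r_k$ is still true, but for a different reason: $q$ pending on $[r_q, r_n]$ and irreplaceable cannot be pending when $k^*$, which is smaller, gets scheduled, so $r_q > s_{k^*}(r_n) > s_k > r_k$.) Second—and this is the genuine gap—your argument hinges on $q$ being the largest pending job at $r_q$, and you explicitly defer the sub-case where it is not to an unexecuted ``recursion on the even-larger pending job.'' That recursion is exactly where the work is, and the paper short-circuits it elegantly: instead of $q$, take $i$ to be the job with \emph{earliest release time} among all jobs of size $> p_n$ released in $[s_{k^*}(r_n), r_n]$ and not scheduled at arrival. (Such a job exists because $q$ itself qualifies: $q$ is irreplaceable and pending at $r_n$, so it was never scheduled, and $r_q \geq s_{k^*}(r_n)$ since $k^*$ was the largest pending job when scheduled.) Minimality of $r_i$ then forces $i$ to be the largest pending job at $r_i$: any larger pending job at $r_i$ would itself have been released in $[s_{k^*}(r_n), r_i)$ and never scheduled, contradicting minimality. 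With condition~(1) secured for $i$, and $k^* \in J(r_i)$ (since $s_{k^*}(r_n) < r_i < r_n$) giving $\min_{j \in J(r_i)} p_j \leq p_{k^*} \leq \tfrac{1}{3} < \tfrac{p_i}{1+\beta}$ and processed portion $r_i - s_{k^*}(r_n) < r_n - s_k < \alpha p_n < \alpha p_i$, all three replacement conditions hold and $i$ should have replaced, a direct contradiction—no event-by-event invariant needed. I'd recommend replacing your invariant-propagation plan with this ``first non-scheduled large job'' device, which both closes your acknowledged gap and shortens the proof considerably.
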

\begin{proof}
	Let $x$ be the minimum job in $J(r_n)$. Suppose $p_x\leq \frac{1}{3}<\frac{p_n}{1+\beta}$.
	Then we have $s_x(r_n) > s_k$, which means that $x$ is processed at most $r_n-s_k<\alpha p_n$.
	Hence we know that $n$ is not the largest pending job at $r_n$, as otherwise $x$ would be replaced.
	
	Then we consider the first job $i$ of size $p_i>p_n$ that is released in $[s_x(r_n),r_n]$ and not scheduled.
	Since $x$ is the largest pending job at $s_x(r_n)$, we know that $i$ must be the largest pending job at $r_i$. 
	We also know that $\min_{j\in J(r_i)}\{p_i\}\leq p_x < \frac{p_i}{1+\beta}$, and the minimum job $y$ has been processed less than $\alpha p_i$, which is impossible as $i$ should have replaced $y$.
\end{proof}

Let $\P$ be the total processing our algorithm does after time $r_n+\gamma$, and let $\W = \W_1$.
We show the following lemma, which is an analogy (weaker) version of Lemma~\ref{lemma:bounded_waste} in previous section.

\begin{lemma}[Total Waste]\label{lemma:bounded_waste_weaker}
	We have
	\begin{equation*}
		\W - (\W_{r_n} - \W_{s_k}) -\P \leq (\frac{1}{2} + \alpha)\cdot r_k\cdot m+8\alpha^2 m.
	\end{equation*}
\end{lemma}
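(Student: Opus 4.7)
The plan is to mirror the strategy behind Lemma~\ref{lemma:bounded_waste} from the overlapping case, but adapted to the two disjoint pending windows $[r_k, s_k]$ and $[r_n, s_n]$. Since the gap $(s_k, r_n)$ between them has length at most $2\epsilon$ by Lemma~\ref{lemma:s_n-r_k_less_1}, we purposely exclude its waste via the $\W_{r_n} - \W_{s_k}$ term and decompose the quantity of interest as $\W - (\W_{r_n} - \W_{s_k}) = \W_{s_k} + (\W_1 - \W_{r_n})$, bounding each piece separately.

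For $\W_{s_k}$, I would follow the template of Lemma~\ref{lemma:bounded_waste} applied around time $r_k$. Every waste before $s_k$ is charged to a replacer $j$ with $r_j < s_k$, contributing at most $\alpha p_j$. I would split these replacers into those arriving before $r_k$ and those arriving in $[r_k, s_k]$. For the first group, a bookkeeping of machine time, idle periods, and pending jobs up to $r_k$ (analogous to the overlapping-case proof, with the coefficient relaxed because we no longer enjoy the large-jobs structure from the start) contributes at most $(\tfrac{1}{2} + \alpha) r_k m$. For replacers arriving in $[r_k, s_k]$, the key observation is that $k$ is the largest pending job with $p_k > p_n > \tfrac{1}{2} - 3\alpha > \tfrac{1}{2+\beta}$, so any such replacer $j$ must satisfy $p_j > p_k > \tfrac{1}{2+\beta}$, and by Corollary~\ref{corollary:not_replaceable} is itself irreplaceable. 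This caps the number and combined contribution of such replacers, fitting into the additive $8\alpha^2 m$ residual.

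For $\W_1 - \W_{r_n}$, observe first that no waste is created after $s_n$, since by the minimum counter-example assumption no jobs arrive after $s_n$. Hence this term equals the waste accrued in $(r_n, s_n]$. Each such waste is triggered by a replacer $j$ arriving in this window, and by Lemma~\ref{lemma:jobs_at_r_n_0.33} together with the fact that $n$ is pending with $p_n > \tfrac{1}{2} - 3\alpha$, any replaced job has size $>\tfrac{1}{3}$, forcing $p_j > (1+\beta)/3$. Each such replacer $j$, being large, must itself be scheduled and processed entirely within $(r_j, \alg]$, and its processing lies after $r_n + \gamma$ whenever $r_j$ is close enough to $s_n$; the resulting $p_j$ contribution to $\P$ dominates the $\alpha p_j$ charge to the waste. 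The residual, arising from replacers whose processing straddles $r_n + \gamma$, is absorbed into $O(\alpha^2 m)$.

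The main obstacle I expect is making the charging for the second piece precise. One needs to confirm that every replacer $j$ in $(r_n, s_n]$ is indeed processed (not later replaced, which is ruled out by $p_j > \tfrac{1}{2+\beta}$ and Corollary~\ref{corollary:not_replaceable}) and that its processing is largely counted in $\P$; the delicate cases concern replacers arriving near $r_n$ whose processing ends just before $r_n + \gamma$, and jobs in $J(r_n)$ whose completion or replacement times straddle $r_n + \gamma$. Coordinating these two accountings at the boundaries $r_k$, $s_k$, $r_n$, and $r_n + \gamma$ without double-counting or losing any unit of waste is what makes the proof technical.
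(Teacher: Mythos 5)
Your decomposition $\W - (\W_{r_n} - \W_{s_k}) = \W_{s_k} + (\W_1 - \W_{r_n})$ is arithmetically fine, and you correctly locate the structural facts the proof hinges on (Lemma~\ref{lemma:jobs_at_r_n_0.33}, Corollary~\ref{corollary:not_replaceable}, $p_n > \tfrac12 - 3\alpha$, jobs replaced in $[r_n,s_n]$ must be in $J(r_n)$). However, you have dropped the device that actually makes the accounting close: the paper splits $\P$ via $\P' := \sum_{j\notin J(r_k):\, s_j > s_n}p_j$, bounds $\W_{r_k}(r_k) - \P'$ by $(\tfrac12+\alpha)r_k m + 8\alpha^2 m$, and then shows the \emph{remaining} wastes (created in $[r_k,s_k]\cup[r_n,s_n]$) are entirely offset against $\P - \P'$. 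Without this offset, your plan has two concrete gaps.

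First, you claim the wastes from replacers arriving in $[r_k,s_k]$ "fit into the additive $8\alpha^2 m$ residual" because each such replacer $j$ has $p_j > p_k > \tfrac{1}{2+\beta}$ and is irreplaceable. Irreplaceability of $j$ does not bound how many such $j$ there are or how much waste they create. With $\opt = 1$ one can have up to $2m$ jobs of size $> \tfrac{1}{2+\beta}$, and each replacement wastes up to $\alpha p_j$, so the total can be $\Theta(\alpha m)$, far exceeding $8\alpha^2 m$. In the paper this waste is not made small at all; it is instead charged machine-by-machine against $\P - \P'$ via the job $x\in J(s_n)$ on that machine (one shows $c_x - (r_n+\gamma) > 2\alpha$, which is at least the waste the associated $i \in J(r_n)\setminus(J(s_k)\cup J(s_n))$ can create). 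The $8\alpha^2 m$ additive term comes from elsewhere — from bounding $\sum_{i\in R} p_i \le 4\alpha m$ for the set $R$ of replacers not in $J(r_k)$, and a $4\alpha^2 m$ slack in the Part-3 charging — not from replacers in $[r_k,s_k]$.

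Second, for $\W_1 - \W_{r_n}$, your claim that a replacer $j$'s "processing lies after $r_n + \gamma$" and hence "dominates the $\alpha p_j$ charge" does not follow: $p_j$ is only guaranteed to exceed $p_n > \tfrac12 - 3\alpha$, which is \emph{less than} $\gamma = \tfrac12 - \epsilon$, so for $r_j$ close to $r_n$ the processing of $j$ after $r_n+\gamma$ can be close to zero (or even zero). The paper avoids this by not reasoning about the replacer's own processing at all; it instead uses the completion time of the irreplaceable job $x\in J(s_n)$ processed on the same machine, which by the structure of the schedule must finish late enough that $c_x - (r_n+\gamma) > 2\alpha$, enough to cover the at-most-twice-replaced job $i$'s waste. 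You will also need to be careful that your cut at $\W_{s_k}$ is in terms of the \emph{final} schedule, whereas the paper cuts at $\W_{r_k}(r_k)$ measured at run-time $r_k$; this distinction matters because jobs in $J(r_k)$ replaced after $r_k$ contribute to one but not the other, which is part of what $\P'$ is there to absorb.
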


For continuity of presentation, we defer its proof to Section~\ref{appendix:total_waste}.

\begin{lemma}\label{lemma:separate_r_k_case}
	When $s_k < r_n$, we have $\alg \leq 1+\gamma$.
\end{lemma}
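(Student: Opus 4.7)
The plan is to derive a contradiction from $\alg > 1+\gamma$ by combining the efficiency argument (Observation~\ref{observation:processing_after_t} and the Leftover Lemma) with the waste upper bound of Lemma~\ref{lemma:bounded_waste_weaker}. The disjoint case furnishes two non-idle intervals $[r_k,s_k]$ and $[r_n,s_n]$, each of length at least $\gamma$, separated by $r_n-s_k\le 2\epsilon$, while Corollary~\ref{corollary:p_n_leq_0.5-3esp} and Lemma~\ref{lemma:large} pin $p_n\in(\tfrac{1}{2}-3\alpha,\tfrac{1}{2+\alpha}]$ and $r_k<1-2p_n\le 6\alpha$.

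First I would dispose of the easy subcase $r_k<\alpha p_k$ by bin-packing. In that case, Lemma~\ref{lemma:all_jobs_large} gives $\min_{j\in J(r_k)}p_j\ge \tfrac{p_k}{1+\beta}>\tfrac{1}{3}$, while $p_n>\tfrac{1}{2}-3\alpha>\tfrac{2}{3(1+\beta)}$, $p_n>\tfrac{1}{3}$, and $s_n-r_k>\gamma+p_n>\tfrac{2}{3}$; Lemma~\ref{lemma:impossible_cases}(1) applied at $(t_1,t_2)=(r_k,s_n)$ then contradicts $\opt=1$. So we may assume $r_k\ge \alpha p_k$, pinning $r_k$ to a $\Theta(\alpha)$-window. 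For the main subcase, I would apply Observation~\ref{observation:processing_after_t} and the Leftover Lemma at $t=r_k$ to get $m\cdot\alg\le m+(\I_\alg-\I_{r_k})+\W+\tfrac{1}{4}r_k m$, bounding $\I_\alg-\I_{r_k}\le 2\epsilon m+(m-1)p_n$ via the two non-idle intervals and $r_n-s_k\le 2\epsilon$. Then Lemma~\ref{lemma:bounded_waste_weaker}, together with $\W_{r_n}-\W_{s_k}\le 2\epsilon m$, gives $\W\le 2\epsilon m+(\tfrac{1}{2}+\alpha)r_k m+8\alpha^2 m+\P$. Substituting the identity $\P = m(\alg-r_n-\gamma)-(\I_\alg-\I_{s_n})-(\W-\W_{r_n+\gamma})$ and dropping nonnegative terms causes the $\alg$-terms to cancel, yielding an inequality of the form $r_n+\gamma\le 1+4\epsilon+(\tfrac{3}{4}+\alpha)r_k+8\alpha^2$. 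A parallel efficiency argument at $t=r_n$ (which produces $\alg\le 1+\tfrac{m-1}{m}p_n+\tfrac{\W}{m}+\tfrac{r_n}{4}$) plus the same waste substitution should then combine with $r_k\le 1-2p_n$, $p_n>\tfrac{1}{2}-3\alpha$, and $p_n\le \tfrac{1}{2+\alpha}$ to collapse into $\alg\le 1+\gamma$, contradicting the standing hypothesis.

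The main obstacle is the numerical balancing: each inequality leaves residuals of order $\alpha$, $\alpha^2$, and $\epsilon$, and a naive combination only bounds $r_n$ from above without closing the circle on $\alg$. The key is to use both pivots $r_k$ and $r_n$ simultaneously so that the $(\tfrac{1}{2}+\alpha)r_k$ contribution from Lemma~\ref{lemma:bounded_waste_weaker}, after substituting $r_k\le 1-2p_n$, produces a $p_n$-coefficient that exactly overpowers the $(m-1)p_n$ idle slack in $[s_n,\alg]$. The specific choice $\alpha=\tfrac{1}{200}$, $\beta=\sqrt{2}-1$, $\epsilon=\tfrac{1}{20000}$ is calibrated so that after all substitutions the residual is strictly negative, sealing the contradiction.
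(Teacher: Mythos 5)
Your opening is sound and matches the paper: you dispose of the $r_k<\alpha p_k$ subcase via Lemma~\ref{lemma:all_jobs_large} and Lemma~\ref{lemma:impossible_cases}(1), apply Observation~\ref{observation:processing_after_t} and the Leftover Lemma at $t=r_k$, and invoke Lemma~\ref{lemma:bounded_waste_weaker} together with $\W_{r_n}-\W_{s_k}\le 2\epsilon m$. Your intermediate inequality $r_n+\gamma\le 1+4\epsilon+(\tfrac34+\alpha)r_k+8\alpha^2$ is essentially the paper's inequality in disguise (the paper works with $2\gamma m+\P$ directly instead of introducing $r_n$, but the content is the same). One bookkeeping caveat: to get the clean form you must pair $\Delta_{r_k}$ with $\W_{r_k}$ via $\Delta_{r_k}-\W_{r_k}\le\tfrac14 r_k m$, and the idle term from step~3 must cancel against the $-\left(\I_\alg-\I_{s_n}\right)$ inside the $\P$ identity; the looser bound $\I_\alg-\I_{r_k}\le 2\epsilon m+(m-1)p_n$ you wrote would leave a residual $\tfrac{m-1}{m}p_n$ on the wrong side.

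The genuine gap is in how you close. The ``parallel efficiency argument at $t=r_n$'' cannot finish the proof, for two independent reasons. First, once you substitute $\P=m(\alg-r_n-\gamma)-(\I_\alg-\I_{s_n})-(\W-\W_{r_n+\gamma})$ into $\W\le\P+\cdots$ and plug the result into $\alg\le 1+\tfrac{m-1}{m}p_n+\tfrac{\W}{m}+\tfrac{r_n}{4}$, the $\alg$-terms cancel identically, so the chain yields a constraint on $r_n$ rather than an upper bound on $\alg$. Second, even setting aside the cancellation, the Leftover term $\tfrac{r_n}{4}$ at $t=r_n\approx\tfrac12$ is about $0.13$, and $\tfrac{m-1}{m}p_n$ is about $0.5$; the contribution $(\tfrac12+\alpha)r_k\le(\tfrac12+\alpha)\cdot 6\alpha\approx 0.015$ from the waste lemma does not come close to ``overpowering'' the $(m-1)p_n$ idle slack as you claim. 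The numerics simply do not balance.

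The fix is much simpler and is what the paper does. You do not need a second pivot at $r_n$ at all: observe that $r_n>s_k>r_k+\gamma$ (since $s_k-r_k>\gamma$), so your own inequality $r_n+\gamma\le 1+4\epsilon+(\tfrac34+\alpha)r_k+8\alpha^2$ forces $r_k+2\gamma<1+4\epsilon+(\tfrac34+\alpha)r_k+8\alpha^2$, hence $(\tfrac14-\alpha)r_k<6\epsilon+8\alpha^2$, i.e.\ $r_k<\tfrac{6\epsilon+8\alpha^2}{1/4-\alpha}\approx 0.00204$. But $\alpha p_k>\alpha(\tfrac12-3\alpha)\approx 0.00243$, contradicting $r_k\ge\alpha p_k$. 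That single line replaces your entire second-pivot argument and is, modulo the algebraic packaging, exactly the paper's proof.
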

\begin{proof}
	Since there is no idle time in $[r_k,s_k]\cup[r_n,s_n]$, by Observation~\ref{observation:processing_after_t}, we have
	\begin{align*}
		m(1-r_k) &\geq 2\gamma m+\P-(\W - (\W_{r_n} - \W_{s_k}))
		\geq (1-2\epsilon)m - (\frac{1}{2} + \alpha)\cdot r_k\cdot m+8\alpha^2m,
	\end{align*} 
	which implies $r_k \leq \frac{2\epsilon+8\alpha^2}{\frac{1}{2}-\alpha} < \alpha p_k$, contradicting the assumption $r_k\geq \alpha p_k$.
\end{proof}

\section{Bounding Total Waste}\label{appendix:total_waste}

In this section, we prove Lemma~\ref{lemma:bounded_waste} and~\ref{lemma:bounded_waste_weaker}, which give upper bounds on the total waste $\W$.

\begin{proofof}{Lemma~\ref{lemma:bounded_waste}}
	We upper bound the total waste by partitioning it into four parts.
	
	\paragraph{Part-1: wastes created after $r$.}
	If any job $i\in J(r)$ is replaced, then $p_i \leq \frac{1}{2+\beta}$ and hence can only be scheduled after $s_n$.
	Hence this part of wastes can be upper bounded by $\sum_{i\in J(r): s_i>s_n}p_i$.
	
	\paragraph{Part-2: wastes created by jobs never processed from $r$ to $s_n$.}
	As $s_n - r > 1 - r$, the total processing time of jobs that are never processed in $[r, s_n]$ is at most $r\cdot m$.
	Hence the total waste created by these jobs is at most $\alpha\cdot r\cdot m$.
	
	Note that excluding \textbf{Part-1} and \textbf{Part-2}, the wastes created by jobs that are ever scheduled from $r$ to $s_n$ can only be created by jobs in $J(r)$, as other jobs are larger than $\frac{1}{2+\beta}$ and will not be replaced (hence cannot be a replacer before $r$).
	
	\paragraph{Part-3: wastes created by $i\in J(r)$ at $r_i < s_i(r)$.}
	As $i$ is processed at $r$, we know that $i$ is replaced during $[r_i, s_i(r)]$, and rescheduled at $s_i(r)$.
	Note that the waste is at most $\alpha\cdot p_i$, since it is replaced by $i$. Consider the job completed at $s_i(r)$, by Lemma~\ref{lemma:reschedule_rule} we know that it is larger than $p_i$. 
	Thus we can construct a one to one mapping from this kind of wastes to jobs completed before $r$. 
	Moreover, each waste is bounded by $\alpha$ fraction of its image.
	Hence the total waste of this part is at most $\alpha\cdot r\cdot m$.
	
	\paragraph{Part-4: wastes created by $i\in J(r)$ at $r_i = s_i(r)$.}
	We will show that this part of wastes can be upper bounded by $\sum_{i\notin J(r): s_i>s_n}p_i + \frac{1}{2+\beta}(r m - \A_r - \I_r)$.
	Let $t$ be the last idle time before $r$. We have $\A_r = \A_t = \sum_{u\in J(t)}\min\{\delta_u, p_u\}$, where $\delta_u$ is the total pending time of $u$ before time $t$.
	
	Consider any waste (of size) $w$ created by $i \in J(r)$ at time $r_i = s_i(r)$ on machine $M$.
	
	We interpret the waste as the time period $[r_i-w,r_i]$ on $M$.
	In the following, we charge the waste to a job of size at least $w$ that is not in $J(r)$ and starts after $s_n$, or to	a set of time periods of total length at least $(2+\beta)w$ located before $r$.
	We show that (over all wastes) every job and time period will be charged at most once, and none of them overlaps with the idle periods.
	Moreover, we show that on every machine $M'$, we can find a non-idle time period before $r$ of length at least $\min\{\delta_u,p_u\}$ that is not charged, where $u\in J(t)$ is the job processed on $M'$.
	(It is easy to check whether the charged time period overlap with the idle periods. However, it is more involved to check the disjointness with pending periods, which is the main focus of our charging argument.) 
	Note that such a charging argument gives an upper bound $\sum_{l \notin J(r):s_l>s_n}p_l + \frac{1}{2+\beta}(r m - \A_r -\I_r)$ on the total wasted in this part.
	
	Let $x$ be the minimum job in $J(r)$, and $y$ be the job where $w$ comes from.
	Recall that we have $p_x \leq \frac{1}{3}$ and $x$ is processed at least $\alpha\cdot p_k$, i.e., $r_x = s_x(r) \leq r - \alpha\cdot p_k$.
	Now we present our charging argument.
	\begin{enumerate}
		\item
		If $r-r_i \geq (1+\beta)w$, then we can charge $w$ to itself, together with the time period $[r_i, r]$ on machine $M$.
		It is easy to see that the charged periods are of total length at least $(2+\beta)w$, and disjoint from the idle periods.
		Define $u(M) \in J(t)$ as the job processed on $M$ at the last marginal idle time before time $r$.
		If $u = i$, then since $s_u(t) = r_u$, we have $\delta_u = 0$; otherwise $s_u(t) \leq r_i-w$ and $\delta_u$ is at most the total length of non-idle periods before $r_i - w$.
		In both cases we find a non-idle time period, namely $T_{u(M')}$, of length at least $\delta_u$ that is not charged.
		
		\item 
		If $p_y < p_x$ (which implies $y\notin J(r)$) and $s_y > s_n$, then let $w'$ be the first waste from $y$ that has not been charged.
		\begin{figure}[h]
			\centering
			\vspace*{-10pt}
			\includegraphics[width = 0.75\textwidth]{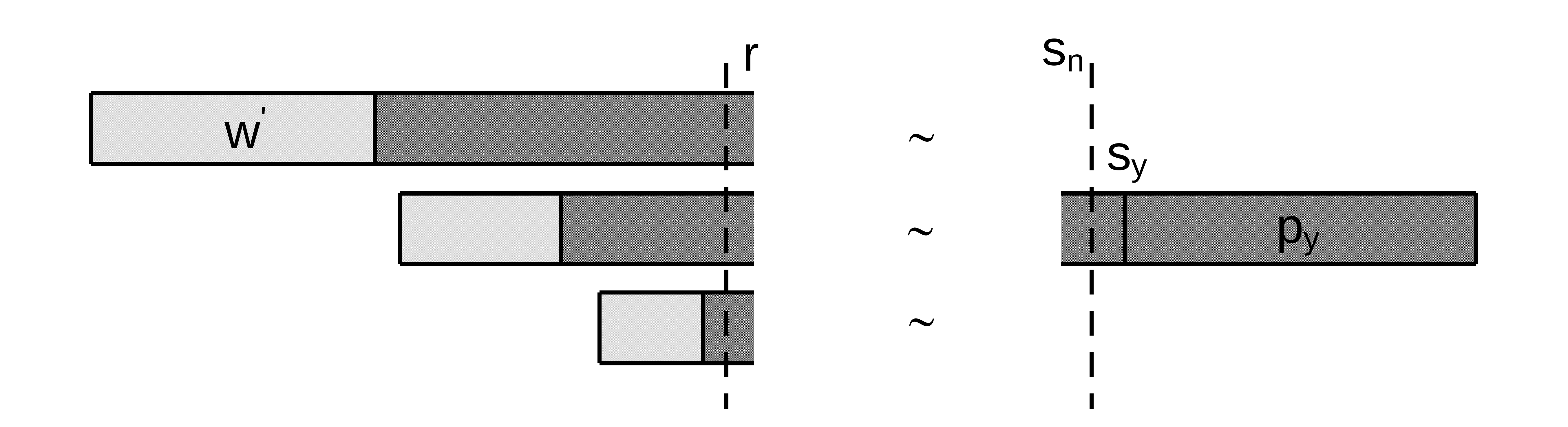}
			\vspace*{-15pt}
		\end{figure}
		
		We charge $w'$ to $p_y > w'$ (processed after $s_n$), and charge the remaining wastes from $y$, which is of total size at most $(1+\beta)w'$, to themselves, $w'$, and the processing (before $r$) of the job that creates $w'$.
		Note that the charged periods are of total length at least $(3+2\beta)w' > (2+\beta)(1+\beta)w'$, and disjoint from the idle periods.
		Since every charged period is either processing of jobs in $J(r)$, or the wastes they create, applying the same argument as above, on every machine $M'$ that has a charged waste, we can find a non-idle time period of length at least $\delta_{u(M')}$ that is not charged.
		
		\item
		If $p_y < p_x$ and $s_y < s_n$, then we know that $y$ is rescheduled (say, on machine $M'$) after $r_i$, and completed before $r$.
		Hence we have $r-r_i > p_y > w$. Moreover, by Lemma~\ref{lemma:reschedule_rule}, the job $z$ completed at $s_y$ on $M'$ must be larger than $p_y$.
		\begin{figure}[H]
			\centering
			\vspace*{-15pt}
			\includegraphics[width = 0.4\textwidth]{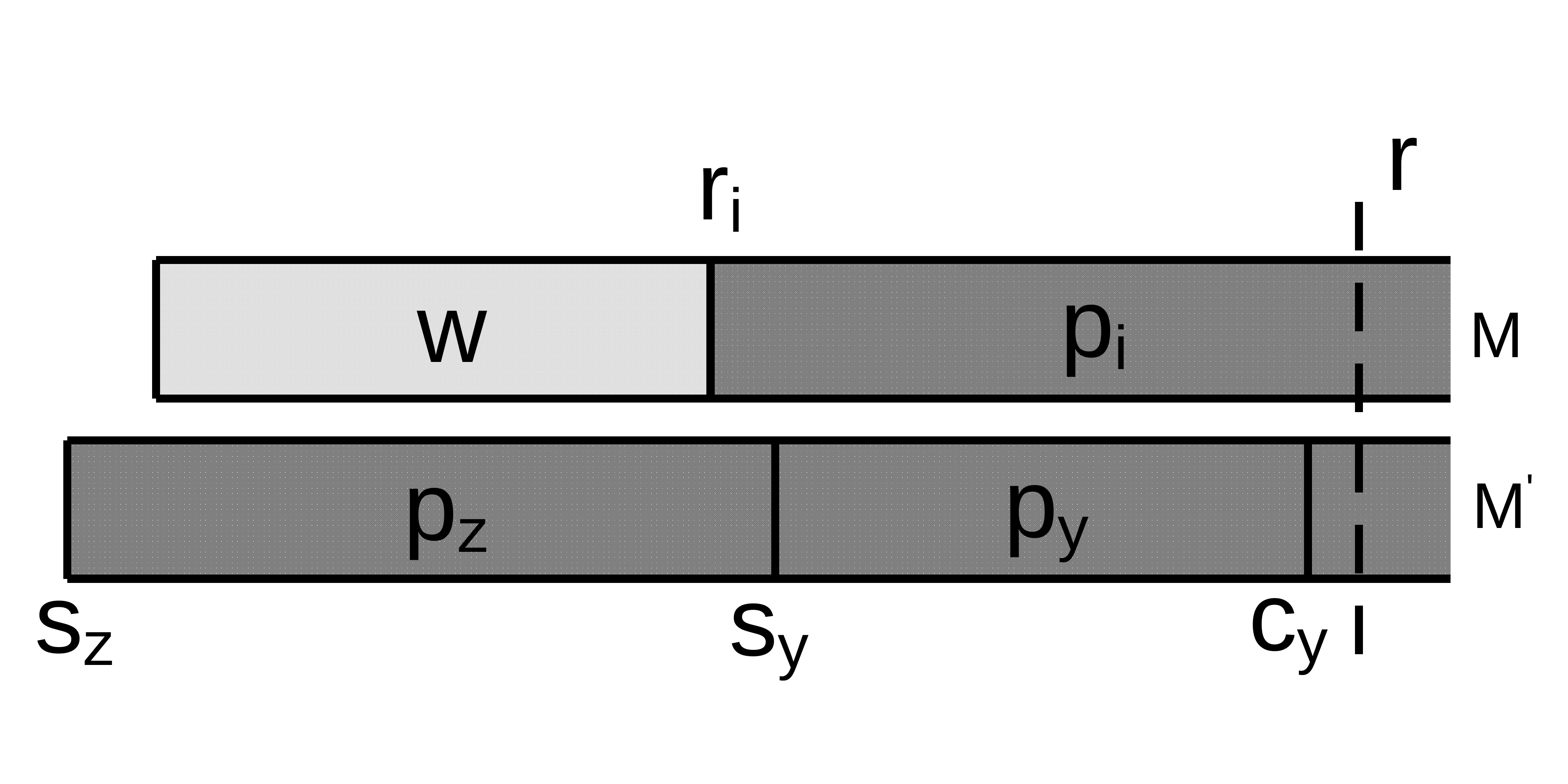}
			\vspace*{-15pt}
		\end{figure}
		
		We charge $w$ to $w$, $[r_i,r]$ on machine $M$, and $[s_y, c_y]$ on machine $M'$.
		The total length of charged periods is at least $w+2p_y>3w$.
		As before, a non-charged non-idle time period of length $\delta_{u(M)}$ can be found on $M$.
		Now consider $u(M')$.
		\begin{compactitem}
			\item If $p_z > p_{u(M')}$, then $[s_z, c_z]$ is the desired non-charged non-idle time on $M'$;
			\item otherwise we know that $u(M')$ is not pending during $[s_y,c_y]$, as $y$ is smaller than $u(M')$ but is scheduled and completed. Hence $T_{u(M')}$ is the desired period on $M'$.
		\end{compactitem}
		
		\item
		If $p_y > p_x$, then we know that $r_i < s_x(r) \leq r-\alpha p_k$, as otherwise $x$ would be replaced instead of $y$.
		Moreover, the next time $y$ is scheduled (say, on machine $M'$) must be before $s_x(r)$, as $x$ cannot be scheduled while $y$ is pending, and we have $p_z > p_y$ for the job $z$ completed before $y$ on $M'$.
		\begin{figure}[H]
			\centering
			\vspace*{-15pt}
			\includegraphics[width = 0.5\textwidth]{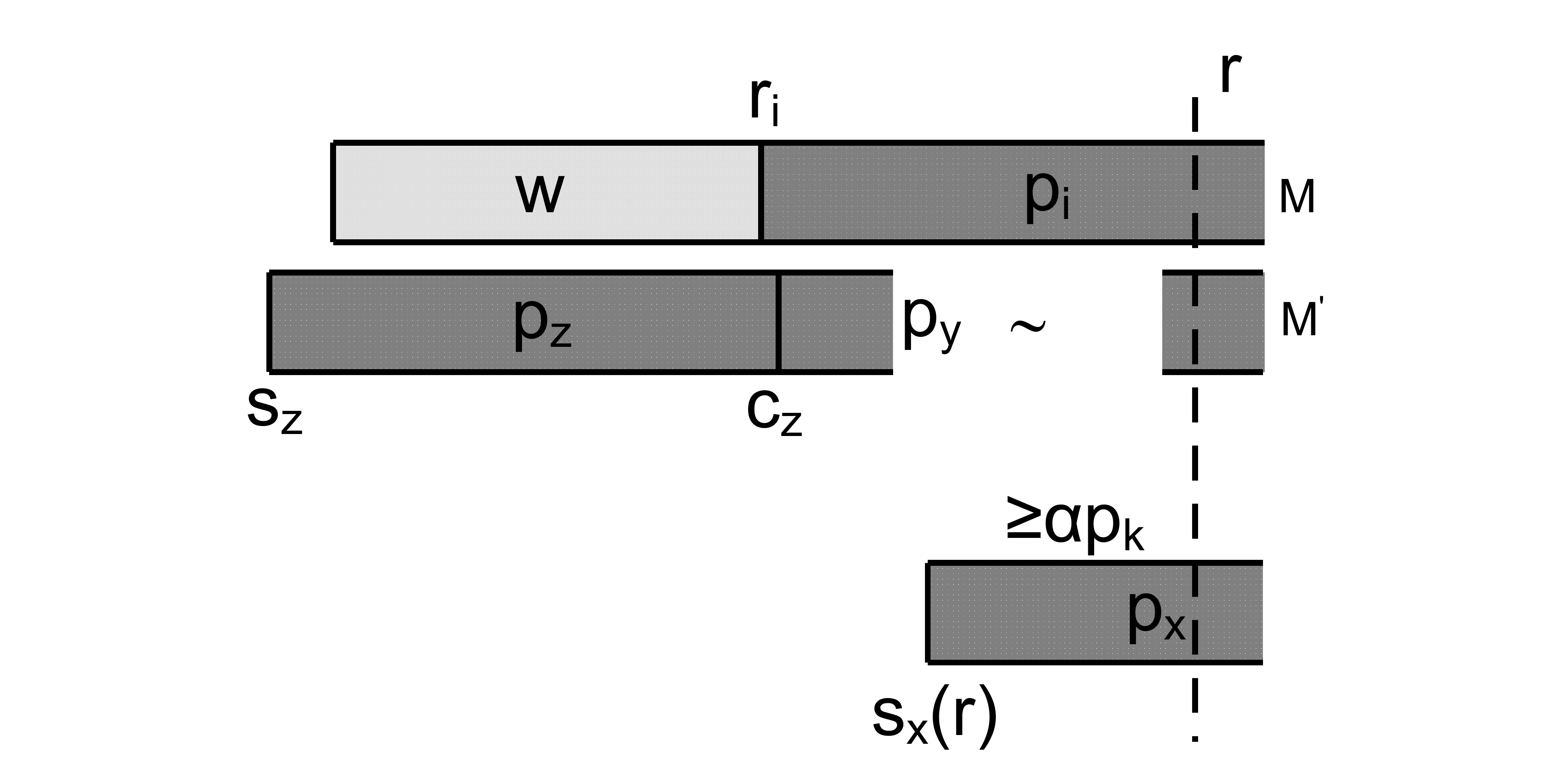}
			\vspace*{-10pt}
		\end{figure}
		
		We charge $w$ to $w$, $[r_i, r]$ on $M$, and $[s_z, r_i]$ on $M'$.
		The charged periods are of total length $w+r-s_z > w + p_z + \alpha p_k > (2+\beta)w$, and are disjoint from idle periods.
		As before, a size $\delta_{u(M)}$ non-charged non-idle time period can be found on $M$.
		Now consider $u(M')$.
		\begin{compactitem}
			\item If $[s_z, r_i]\cap T_{u(M')} = \emptyset$, then $T_{u(M')}$ is the desired period;
			\item otherwise we know that $u(M')\neq z$. Moreover, $u(M')\neq y$, as $y$ is not pending during $[s_z, r_i]$, which means that $u(M')$ is a job processed after $y$. First observe that $y$ cannot be replaced after $c_z$, as otherwise the replacer $l$ must be of size larger than $(1+\beta)p_y > (1+\beta)w > r-r_i$, which implies $\delta_{u(M')} = 0$: either $u_{M'}$ is the replacer, or the replacer of the replacer, etc. Hence $[c_z,c_y]$ is the desired non-charged non-idle period on $M'$: we have $p_y > p_{u(M')}$ (if $u_{M'}$ is ever replaced, then we use Lemma~\ref{lemma:reschedule_rule}, otherwise we use the fact that $u_{M'}$ is pending at $c_z$).
		\end{compactitem}
	\end{enumerate}
	
	Combing the four cases above, we have
	\begin{align*}
		\W \leq &\sum_{i: s_i>s_n}p_i + 2\alpha\cdot r m + \frac{1}{2+\beta}(r m - \A_r -\I_r)
		< \P + 2\alpha\cdot r m + \frac{1}{2+\beta}(r m - \A_r -\I_r),
	\end{align*}
	as claimed.
\end{proofof}

\begin{proofof}{Lemma~\ref{lemma:bounded_waste_weaker}}
	Note that $\W - (\W_{r_n} - \W_{s_k})$ is the total wastes located at $[0,s_k]\cup[r_n,s_n]$.
	Define $\P' = \sum_{j\notin J(r_k) : s_j > s_n}p_j$ to be the total processing of jobs not processed at $r_k$ that start after $s_n$.
	We show that $\W_{r_k}(r_k) - \P' \leq (\frac{1}{2} + 2\alpha)\cdot r_k\cdot m$ and $(\W-\W_{r_n})+(\W_{s_k}-\W_{r_k}(r_k)) \leq \P -\P'$, combing the two upper bounds we have the lemma.
	
	We first show that $(\W-\W_{r_n})+(\W_{s_k}-\W_{r_k}(r_k)) \leq \P -\P'$.
	Note that by definition $(\W-\W_{r_n})+(\W_{s_k}-\W_{r_k}(r_k))$ is at most the total size of wastes created in $[r_k,s_k]\cup[r_n,s_n]$.
	Moreover, any such waste $w$ must come from jobs in $J(r_k)\cup J(r_n)\backslash (J(s_k)\cup J(s_n))$, as otherwise the job is irreplaceable.
	
	Consider any $i\in J(r_n)\backslash (J(s_k)\cup J(s_n))$ processed on machine $M$.
	Let $x\in J(s_n)$ be processed on $M$, then we know that the job completed at $s_x$ must be larger than $\frac{1}{3}$.
	Hence we have $c_x - (r_n+\gamma) > (s_k+\frac{1}{3}+p_x)-(r_n+\gamma) > \frac{1}{3}-\epsilon-3\alpha>2\alpha$, i.e., the contribution of $i$ to $\P-\P'$ is larger than to $(\W-\W_{r_n})+(\W_{s_k}-\W_{r_k}(r_k))$ (note that $i$ can be replaced at most twice).
	
	Moreover, we have $|J(r_n)\backslash (J(s_k)\cup J(s_n))|\leq 4\alpha m$, as otherwise the total processing after $r_k$ is larger than $2\gamma m + 4\alpha(\frac{1}{3}-\epsilon-3\alpha)m - \alpha m > m$, contradicting $\opt =1$.
	This fact will be used later.
	
	Now consider any $i\in J(r_k)\backslash (J(r_n)\cup J(s_k)\cup J(s_n))$.
	Then we have $s_i>s_n$. Hence the contribution of $i$ to $\P-\P'$ is larger than to $(\W-\W_{r_n})+(\W_{s_k}-\W_{r_k}(r_k))$.
	
	Next we show that $\W_{r_k} - \P' \leq (\frac{1}{2} + \alpha)\cdot r_k\cdot m + 9\alpha^2 m$.
	The strategy is similar to the proof of Lemma~\ref{lemma:bounded_waste} (but simpler).
	We partition $\W_{r_k}$ into three parts. 
	
	\paragraph{Part-1: wastes created by $J(r_k)$.}
	Let $R$ be the set of jobs $i\notin J(r_k)$ that replaced some other job at $r_i$.
	We show that $\sum_{i\in R}p_j\leq 4\alpha m$, which implies that the total wastes created by jobs not in $J(r_k)$ is at most $4\alpha^2 m$.
	It suffices to show that on every machine $M$, we can find a set of jobs completed on $M$ that are of total size at least $1-4\alpha$ and not in $R$.
	Fix any machine $M$ and consider $j_1\in J(s_k)$ and $j_2\in J(s_n)$ processed on $M$.
	Note that we have $p_{j_1}>\gamma$, $p_{j_2}>p_n$, and both jobs cannot be replaced.
	If $j_1\neq j_2$, then we find two jobs completed on $M$ of total size larger than $\gamma + p_n > 1-4\alpha$ that are not in $R$;
	otherwise we know that $j_1$ did not replace any job (as it is processed at $s_n > 1$).
	Then we know that the total size of $j_1$ and the job completed at $s_{j_1}$ is larger than $(s_n-s_k)+\min\{ s_k-r_k,\gamma \} > 1-4\alpha$, which completes the analysis.
	
	\paragraph{Part-2: wastes created by $i\in J(r_k)$ at $r_i < s_i(r_k)$.}
	Applying a similar argument as in the proof of Lemma~\ref{lemma:bounded_waste} (Part-3), the total waste of this part can be upper bounded by $\alpha\cdot r_k\cdot m$.
	
	\paragraph{Part-3: wastes created by $i\in J(r_k)$ at $r_i = s_i(r_k)$.}
	Note that there is at most one waste on every machine.
	Consider any waste of size $w$ created at $r_i$ on machine $M$.
	Suppose $w$ comes from job $x$.
	\begin{compactitem}
		\item If $w\leq \frac{r_k}{2}$, then we can charge $w$ to machine $M$;
		\item if $x\in J(r_k)$, then we can charge $w$ to $M$ and the machine that processes $x$ at $r_k$;
		\item if $s_x > s_n$, then the contribution of $x$ to $\W_{r_k} - \P'$ is non-positive;
		\item otherwise we have $x\in J(r_n)\backslash (J(s_k)\cup J(s_n))$. As $|J(r_n)\backslash (J(s_k)\cup J(s_n))|\leq 4\alpha m$, the total waste of this part is at most $4\alpha^2 m$.
	\end{compactitem}
	Hence in total we have
	\begin{equation*}
		\W_{r_k} - \P' \leq 4\alpha^2 m + \alpha\cdot r_k\cdot m + \frac{1}{2}\cdot r_k\cdot m+ 4\alpha^2 m,
	\end{equation*}
	as claimed.
\end{proofof}

\section{Breaking $\frac{5-\sqrt{5}}{2}$ on Two machines with Restart} \label{sec:two_machine}

We prove Theorem~\ref{theorem:two_machines} in this section, that is, we show that running our algorithm with $\beta = \alpha=0.2$ on two machines, i.e., $m=2$, achieves a competitive ratio at most $1.38$, strictly better than the best possible competitive ratio $\frac{1}{2}(5-\sqrt{5}) \approx 1.382$ for the problem without restart on two machines (for deterministic algorithms)~\cite{tcs/NogaS01}.

As before, we adopt the \emph{minimum counter-example} assumption~\cite{tcs/NogaS01}, i.e., we consider the instance with the minimum number of jobs such that $\alg > 1.38$ and $\opt = 1$, and proceed to derive a contradiction.

\subsection{Overview of Techniques} \label{sec:two-machine-overview}

Our analysis for the two-machine case follows the same framework as the general case: we use a bin-packing argument to upper bound the size of last job, and use an efficiency argument to lower bound the size of last job, and finally use a hybrid argument to handle the boundary case.
In this section, we will overview two technical ingredients that are specifically developed for the two-machine case, namely, a structural result that upper bounds the number of times that large jobs are replaced, and a refined efficiency argument.
Similar to the general case, most of the difficulties arise when the last job has medium size. 
For concreteness, readers may consider the size of the last job $n$ as slightly larger than $\gamma = 0.38$, say, $p_n = 0.4$.

Recall that for the two-machine case we fix the parameters $\beta = \alpha = 0.2$.

\subsubsection{Bounding Major Replacements}

We are interested in jobs that have size at least $p_n$. 
We call such jobs \emph{major jobs} and refer to the replacements of major jobs as \emph{major replacements}.
In the case that there are only two machines, the number of major jobs is at most $4$ by a simple bin-packing argument.
(Recall that we focus on medium size last job, say $p_n = 0.4$.)
Further, only major jobs can replace major jobs. 
Hence, we can show sharp bound on the number of major replacements.
In particular, we show that when the last job is of medium size, there is either no major replacement, or at most one major replacement, depending on the size of the last job.
This is formulated as the following lemmas.

\begin{lemma}\label{lemma:two_machines_irreplaceable}
	If $\alg > 1.38$ and $p_n > \frac{1}{2+\alpha}$, then there is no major replacement.
\end{lemma}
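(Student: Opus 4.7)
The plan is to argue by contradiction: suppose some major replacement occurs, consider the \emph{last} such replacement, in which $l$ replaces some $k'$ at time $r_l$, and let $k''$ denote the other job in $J(r_l)$. From the replacement rules, $p_l > (1+\beta) p_{k'} \geq 1.2\, p_n$, and since $k'$ is the smaller of the two processing jobs, $p_{k''} \geq p_{k'} \geq p_n > 1/(2+\alpha) = 1/2.2$. Consequently $p_l + p_j > 2.2\, p_n > 1$ for every $j \in \{n, k', k''\}$, so $l$ must occupy one of the two \opt\ machines by itself.

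The main step will be a two-machine pigeonhole argument. Assuming the four jobs $n, l, k', k''$ are pairwise distinct, the remaining three would have to share the other \opt\ machine, giving total workload $> 3 p_n > 3/2.2 > 1$, a contradiction. It therefore remains to rule out the coincidences $n \in \{l, k', k''\}$ (the identifications $l = k'$ and $l = k''$ being immediate since $l$ first appears at $r_l$, after $k', k''$ were already being processed). If $n = l$, then either $n$ runs from $r_n = r_l$ to completion and $c_n \leq 1 < 1.38$, or $n$ is replaced later, contradicting the choice of $r_l$ as the last major replacement. If $n = k''$, then $n$ is running at $r_l$ and, because no further major replacement is allowed after $r_l$, must run to completion from its current start, giving $c_n \leq r_l + p_n < (1 - 1.2\, p_n) + p_n < 1$, contradiction. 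For $n = k'$, both $l$ and $k''$ must run to completion and $n$ restarts at $s_n = \min(c_l, c_{k''})$. In the sub-case $s_n = c_l$ we get $s_n - r_l = p_l > 1.2\, p_n > 1 - p_n$ (using $p_n > 1/2.2$), while $J(s_n) = \{l, k''\}$ has all sizes $\geq p_n > 1/(2+\beta)$, so Lemma~\ref{lemma:impossible_cases}(2), applied at times just inside $(r_l, s_n)$ so that $n$ is excluded from the two snapshot sets, yields a contradiction.

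The hard part will be the remaining sub-case $n = k'$ with $s_n = c_{k''}$. Here $s_n - r_l \leq p_{k''} \leq 1 - p_n$ (the last inequality forced by \opt\ placing both $n$ and $k''$ on the same machine), so Lemma~\ref{lemma:impossible_cases}(2) is just barely out of reach. My plan is to combine the \opt\ constraint $p_n + p_{k''} \leq 1$ with the algorithm-side constraint $c_{k''} + p_n = \alg > 1.38$ to pin $p_n$ into a narrow window strictly below $0.517$ and to force $s_{k''} > 0.38$, and then to use the observation that no fourth major job can exist (any such job would immediately re-trigger the three-on-one-machine pigeonhole contradiction) to tightly account for the activity on both machines during $[0, r_l]$, extracting the final contradiction either via a direct bin-packing argument on a better-chosen pair of times or via an efficiency argument in the spirit of Section~\ref{sec:general_case} using the Leftover Lemma.
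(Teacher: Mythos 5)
Your overall structure differs from the paper's: you pick the \emph{last} major replacement and do a case analysis on whether $n$ coincides with the replacer, the replaced job, or the bystander, whereas the paper first proves that $n$ itself is never replaced and then handles all other large jobs in one short bin-packing step (which works precisely because, once $n$ is known never to start before $s_n$, $n$ is guaranteed \emph{not} to be one of the two jobs being processed at the replacement time). Your identifications $n=l$ and $n=k''$ are disposed of correctly (though $n=l$ can be killed instantly from $p_l>(1+\beta)p_{k'}\geq(1+\beta)p_n$), and in the sub-case $n=k'$ with $s_n=c_l$ your invocation of Lemma~\ref{lemma:impossible_cases}(2) on times just inside $(r_l,s_n)$ checks out: $s_n-r_l=p_l>1.2p_n>1-p_n$ exactly because $p_n>1/2.2$.

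The gap is the sub-case $n=k'$ with $s_n=c_{k''}$, which you explicitly leave as a plan rather than a proof. This is not a routine loose end: it is precisely the case the paper spends the most effort on, and the bounds you extract ($p_n\in(1/2.2,\,0.517)$, $s_{k''}>0.38$, $r_l<1-1.2p_n$) do not by themselves force a contradiction --- the quantities are consistent, and neither Lemma~\ref{lemma:impossible_cases} nor a crude efficiency bound closes it without a further structural input. The missing ingredient is to split on whether $n$ or $k''$ is released first (the one released earlier has release time $<1-p_n-p_{k''}<1/11<\alpha p_n$), then invoke Lemma~\ref{lemma:all_jobs_large} to show that job was scheduled immediately at its release, and finally either observe that $n$ has been processed more than $\alpha$ so it cannot be replaced by $l$ (if $r_n<r_{k''}$), or use Lemma~\ref{lemma:reschedule_rule} to manufacture a genuine fourth large job completing at $s_{k''}$ (if $r_{k''}<r_n$). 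Without this step, "tightly account for the activity on both machines" and "efficiency argument in the spirit of Section~\ref{sec:general_case}" do not constitute a proof; as written, the lemma is not established.
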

\begin{proof}
	Recall that for the two machines case, we set $\beta = \alpha$.
	First we show that job $n$ cannot be replaced given $p_n > \frac{1}{2+\alpha}$.
	Observe that the replacer of $n$ must be of size $(1+\alpha)p_n > 1-\frac{1}{2+\alpha}$, which cannot be scheduled on the same machine with $n$ in the optimal schedule.
	Suppose $n$ is replaced (which can happen at most once) by $l$.
	Then we know that $n$ is pending from $r_l$ to $s_n$, where
	\begin{equation*}
		s_n - r_l >  (1.38-p_n) - (1-(1+\alpha)p_n) = 0.38 +\alpha p_n > \frac{1}{2+\alpha}.
	\end{equation*}
	
	Hence by Lemma~\ref{lemma:boundary_jobs}, we have $p_k,p_j > \frac{1}{2+\alpha}$, where $\{k,j\} = J(s_n)$ are the two jobs processed at time $s_n$.
	If $l\notin \{k,j\}$, then none of $k,j,n$ can be scheduled on the same machine with $l$ in $\opt$, which yields a contradiction;
	otherwise suppose $l = k$.
	Then we must have $J(r_l) = \{n,j\}$ as otherwise we also have the same contradiction as just argued.
	Then $n$ and $j$ must be scheduled on the same machine in $\opt$.
	If $r_n < r_j$, then $r_n < 1-p_n-p_j < 1-\frac{2}{2+\alpha} < \alpha p_n$.
	Hence $n$ must be scheduled at $r_n$, as otherwise by Lemma~\ref{lemma:all_jobs_large} the two jobs in $J(r_n)$ are of size larger than $\frac{p_n}{1+\alpha} > \frac{1}{3}$. Since
	\begin{equation*}
		r_l - r_n > s_j - r_n > (1.38-p_n-p_j) - (1-p_n-p_j) = 0.38 > \alpha,
	\end{equation*}
	it is impossible for $l$ to replace $n$ as $n$ has been processed a portion larger than $\alpha$.
	If $r_j < r_n$, then for the same reasoning $j$ is scheduled at $r_j$.
	As $s_j - r_j > (1.38-p_n-p_j) - (1-p_n-p_j) = 0.38$, we know that $j$ is replaced, which is impossible as by Lemma~\ref{lemma:reschedule_rule}, the job completed at $s_j$ is a job of size larger than $\frac{1}{2+\alpha}$, apart from $l,j$ and $n$.
	
	Hence we know that $n$ is never replaced. Now suppose some other job of size larger than $\frac{1}{2+\alpha}$ is replaced at time $r_x$.
	Then we have $p_x > 1-\frac{1}{2+\alpha}$, while the two jobs in $J(r_x)$ are of size larger than $\frac{1}{2+\alpha}$.
	Note that since $n$ is never replaced, it is not scheduled before $s_n$.
	Further by our assumption that no job arrives after $s_n$, we have $r_x < s_n$.
	Thus $n\notin J(r_x)$, which implies a contradiction.
\end{proof}

\begin{lemma}
	If $\alg > 1.38$ and $p_n\in (0.38,\frac{1}{2+\alpha}]$, then there is at most one major replacement.
\end{lemma}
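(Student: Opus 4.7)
My plan is to argue by contradiction: assume at least two major replacements occur and derive a violation of $\opt=1$. The starting observation is a bin-packing bound: every major job has size $> p_n > 0.38$, so no $\opt$ machine can host three major jobs (their sum would exceed $1$), hence the total number $M$ of major jobs is at most $4$. Fix any two major replacements, written as $j_1$ replacing $k_1$ at $r_{j_1}$ and $j_2$ replacing $k_2$ at $r_{j_2}$, with $r_{j_1} < r_{j_2}$. The replacement rule yields $p_{j_i} > (1+\alpha)p_n > 0.456$ and $p_{k_i} \geq p_n$, so all four of $j_1, k_1, j_2, k_2$ are major; distinctness of release times gives $j_1 \neq j_2$, by definition $k_i \neq j_i$, and $k_1 \neq j_2$ since $k_1$ is processing at $r_{j_1} < r_{j_2}$.

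I split into the chain case $k_2 = j_1$ and the non-chain case $k_2 \neq j_1$. In the chain case, $p_{j_2} > (1+\alpha)^2 p_n > 0.547$, so any $\opt$ machine-mate of $j_2$ has size $< 0.453$; this already rules out $j_1$ (size $> 0.456$). If $p_n \geq 0.453$, then $k_1$ and $n$ also cannot pair with $j_2$, forcing $k_1, j_1, n$ onto the other $\opt$ machine with total size $> 3.2\, p_n > 1$. If $p_n < 0.453$, I use the further fact that at $r_{j_2}$ the second processing job (not $j_1$) must itself be major with size $\geq p_{j_1}$: either this is a fresh major job, inflating the count past $M \leq 4$ once $n$ is included; or it equals $k_1$, in which case Lemma~\ref{lemma:reschedule_rule} produces an additional major job completing at $k_1$'s reschedule time; or it equals $n$, in which case $n$ is processing at $r_{j_2}$ but must later be replaced for $c_n > 1.38$ to hold, creating a third major replacement to which the same analysis applies.

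In the non-chain case $k_2 \neq j_1$, the jobs $k_1, j_1, k_2, j_2$ are distinct except possibly $k_1 = k_2$; the exception triggers Lemma~\ref{lemma:reschedule_rule}, producing yet another major job between the two replacements. Either way there are at least four distinct major jobs before counting $n$, so $M \leq 4$ forces $n$ to coincide with one of them. The subcase $n = j_i$ is ruled out because a replacer $n$ that is never itself replaced satisfies $c_n = r_n + p_n \leq 1 < 1.38$, contradicting $\alg > 1.38$. The subcase $n = k_i$ forces $n$ to be rescheduled, so Lemma~\ref{lemma:reschedule_rule} yields an additional major job beyond the $M \leq 4$ budget.

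The main obstacle is the subcase where $n$ itself is among the replaced jobs, which is not forbidden since $p_n \leq \tfrac{1}{2+\alpha}$ permits $n$ to be replaced. One must carefully apply Lemma~\ref{lemma:reschedule_rule} to account for the extra major job completing at $s_n$ (the job that makes room for $n$'s final scheduling), and ensure that the bin-packing count still exceeds $M \leq 4$ despite the tight size slack around $p_n$; the interplay between this extra job, the replacers $j_1, j_2$, and the constraint that all of them plus the remaining major jobs must fit in two $\opt$ machines of load $\leq 1$ is what drives the contradiction.
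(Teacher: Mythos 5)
Your opening moves match the paper's: by bin-packing, at most four jobs of size $>0.38$ can exist with $\opt=1$, the replacers satisfy $p_{j_i}>(1+\alpha)\cdot 0.38=0.456>\tfrac{1}{2+\alpha}\geq p_n$ (so $j_1,j_2\neq n$), and $n$ must therefore coincide with one of the other major jobs. The paper sets this up in the same way and then observes, crucially, that since the two replacers cannot share an $\opt$ machine, every job outside the four major ones has size strictly less than $1-(2+\alpha)\cdot 0.38=0.164$.

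The genuine gap is in your treatment of the hard subcase $n=k_i$. You propose that Lemma~\ref{lemma:reschedule_rule} ``yields an additional major job beyond the $M\leq4$ budget,'' but the job that completes to make room for $n$'s rescheduling can coincide with $z$ or $l$ (the replacers $j_1,j_2$ in your notation), so no fifth distinct job of size $>0.38$ need appear; pure counting does not close this case. The paper instead works with scheduling times: it shows (via $\alg>1.38$, $r_z\leq 1-p_z$, and $r_n\leq 1-p_n-\min\{p_z,p_l\}$) that $n$ is pending for more than $0.092$ before some earlier start time $s_n(r_z)$, and hence the job that completes then has size at least $\min\{0.092+\alpha p_n,\, p_n/(1+\alpha)\}\geq 0.168$. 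That job completes before $r_z$, so it is \emph{not} one of the four major jobs, and $0.168>0.164$ contradicts the bound on non-major job sizes. This pending-time estimate, which manufactures a job of intermediate size that is provably distinct from the counted ones, is the missing ingredient; without it your chain/non-chain recursion (``a third major replacement to which the same analysis applies'') does not terminate in a contradiction. A secondary slip: major replacements require only $p_{k_i}>0.38$, not $p_{k_i}\geq p_n$, so the constant $3.2p_n$ in your chain case should be replaced by the explicit sum $0.38+0.456+p_n$.
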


The proof of the lemma is deferred to Section~\ref{ssec:two_machine_large}.
The main idea is that, when there are two or more major replacements, then we can find at least four major jobs.
Hence as long as we can find a job of size ``not too small'', then the bin-packing argument yields a contradiction.

\paragraph{Why is bounding the number of major replacements useful?}
Recall that in the efficiency argument, we upper bound the total waste by $\alpha m = 2 \alpha$ using the fact that each job can only create waste once at its release time if it replaces some other job, and the amount of waste created is at most $\alpha$ times the size of the replacer.
Suppose we can find one major job that does not replace any other job at its arrival.
Then, the upper bound of the total waste will significantly decrease by $\alpha p_n$.
(Recall that we focus on medium size last job, say $p_n = 0.4$.)

How do we find major jobs that do not replace other jobs at their arrival?
It turns out we can argue that in order to have $\alg > 1 + \gamma$, there must exists some major jobs whose final start times do not equal their release times.
For example, the last job $n$'s final start time definitely does not equal its release time.
For each of these jobs, if it replaces some other job at its arrival, it must be replaced later on in order to get rescheduled at its final start time.
Hence, a major replacement must occur for each of these jobs.
By bounding the number of major replacements, we get that some of these major jobs must not replace other jobs at their arrivals.

\subsubsection{Refined Efficiency Argument}

The second technical ingredient is a more careful efficiency argument.
Let $t$ be the last idle time before $s_n$. 
The total amount of work done in the optimal schedule after time $t$ is at most $2(1-t)$. 
(Recall that $\opt = 1$ and there are $m = 2$ machines.)

\paragraph{How much work does the algorithm process after time $t$?}

The algorithm is fully occupied from time $t$ to time $s_n$.
Further, let $\P$ be the total processing our algorithm does after $s_n$.
Then, the total amount of processing power after time $t$ by the algorithm is $2(s_n - t) + \P$.
However, some of the processing power is wasted (due to replacements) and some of the workload could have been done before time $t$ in $\opt$ (the leftover).
Recall that $\W_t$ is the amount of waste before time $t$.
Hence, the amount of waste located after time $t$ is $\W_1 - \W_t$.
Also recall that the amount of leftover is denoted as $\Delta_t$.
So we have: (the first inequality comes from Observation~\ref{observation:processing_after_t})
\[
2(1-t)\geq 2(s_n-t) + \P - \Delta_t -(\W_1 - \W_t) \geq 2(s_n-t) + \P - \frac{t}{2} -\W_1,
\]
where the second inequality follows by the leftover lemma.
Rearranging terms, the above implies:
\begin{equation*}\label{equation:efficiency_overview}
	\alg = s_n+p_n \leq 1+p_n-\frac{\P}{2} + \frac{t}{4} + \frac{\W_1}{2}.
\end{equation*}
Then, we will bound each of the terms $\P$, $t$, and $\W_1$:
$\P$ is trivially lower bounded by $p_n$;
$t$ is trivially upper bounded by $1$, while a more careful argument shows that $t \le 1 - p_n$ (Fact~\ref{fact:no_idle_after_1-p_n});
and $\W_1$ is trivially upper bounded by $2 \alpha$.
If we plug in the trivial bounds, we get:
\[
\alg \le 1.25 + \alpha + \frac{p_n}{2} ~.
\]
Hence, we recover the efficiency argument similar to what we have used in Section~\ref{sec:structural} (except that we further relax $\frac{m-1}{m} p_n \le p_n$ in the general case).
However, if we can obtain improved bounds on $\P$, $t$, or $\W_1$, we would have a better efficiency argument for upper bounding $\alg$.
It is usually impossible to get better bounds for all of these three quantities. 
Nonetheless, we manage to do so for at least one of them in all cases.

We have already provided an argument for getting a better upper bound of $\W_1$ by bounding the number of major replacements.
Next, we present some intuitions why it is possible to get improved bounds for $\P$ and $t$.

Since the jobs are scheduled greedily, if a replaced job $x$ is of size smaller than $p_n$, then it often happens that job $x$ is rescheduled after $s_n$. Hence while we suffer a loss in the total waste $\W_1$ due to the waste that comes from $x$, we have an extra gain of $p_x$ in $\P$.
In general, we will develop a unified upper bound on $\W_1 - \P$ (using the same spirit as in the proof of Lemma~\ref{lemma:bounded_waste} and~\ref{lemma:bounded_waste_weaker}), which measures the net waste due to replacements.

Apart from the trivial upper bound $1-p_n$ on $t$ (by Fact~\ref{fact:no_idle_after_1-p_n}), we can often derive better upper bounds on $t$ if we do not have a good upper bound on $\W_1 - \P$.
In the case when the total net waste is large, we can usually find many major jobs processed at the end of the schedule.
As we have only $m=2$ machines, during idle periods, only one job can be processed while no job is pending.
Suppose we find four major jobs processed at the end of the schedule, then at least three of them must be released after the last idle time $t$. Since $\opt = 1$, we must have $t \leq 1-2p_n$, which gives a better upper bound on $t$.

\subsection{Some Basic Facts for Scheduling on Two Machines} \label{ssec:some_basic_facts}

As before, let $n$ be the job completed last and we assume all release times, processing times, start times and completion times are distinct.
By the minimum counter-example assumption, no job arrives after time $s_n$.
Since $\alg =s_n + p_n> 1.38$, we have $s_n - r_n > (1.38-p_n) - (1-p_n) =0.38$.
Let $J(s_n) = \{k,j\}$ be the two jobs processed before $n$ is scheduled.
By Lemma~\ref{lemma:boundary_jobs}, we have $p_k,p_j > \min\{0.38, p_n\}$.

Let $t$ be the last idle time before $s_n$. By Fact~\ref{fact:no_idle_after_1-p_n}, we have $t\leq 1-p_n$.
Let $\P$ be the total processing our algorithm does after $s_n$.
By Observation~\ref{observation:processing_after_t}, we have (note that $m=2$)
\begin{equation*}
	2(1-t)\geq 2(s_n-t) + \P - \Delta_t -(\W_1 - \W_t) \geq 2(s_n-t) + \P - \frac{t}{2} -\W_1,
\end{equation*}
which implies
\begin{equation}\label{equation:efficiency}
	\alg = s_n+p_n \leq 1+p_n-\frac{\P}{2} + \frac{t}{4} + \frac{\W_1}{2}.
\end{equation}

\begin{definition}[Uncharged Jobs]
	We call a job $j$ \emph{uncharged}, if $j$ does not replace any job at $r_j$, or the job it replaces is rescheduled strictly after $s_n$.
\end{definition}

Note that if a job $j$ is uncharged, then its contribution to the RHS of (\ref{equation:efficiency}) is non-positive.
Let $\Q$ be the total size of uncharged jobs, and define $d:=1-p_n-t \geq 0$, we have
\begin{equation}
	\alg = s_n+p_n \leq 1+\frac{3p_n}{4} - \frac{\widetilde{\P}}{2} + \frac{1-d}{4} + \frac{\alpha}{2}(2-\Q)
	= 1.45 - (\frac{\Q}{10} + \frac{2\widetilde{\P} +d - 3p_n}{4}) \label{equation:extra},
\end{equation}
where $\widetilde{\P} = \sum_{j:s_j\in [s_n-p_js_n]}\max\{c_j - s_n,0\}$ is the total processing our algorithm does after $s_n$, excluding the jobs start strictly after $s_n$.
Hence as long as we can show that $\frac{\Q}{10} + \frac{2\widetilde{\P} +d - 3p_n}{4} \geq 0.07$, we have $\alg \leq 1.45-0.07 = 1.38$, contradicting our initial assumption.

Observe that since $\widetilde{\P} \geq p_n$, we always have
\begin{equation*}
	\frac{\Q}{10} + \frac{2\widetilde{\P} +d - 3p_n}{4} \geq \frac{\Q}{10} + \frac{d - p_n}{4}.
\end{equation*}

\begin{observation}[Bin-packing Constraint]\label{observation:bin-packing}
	It is impossible to find a set of jobs whose sizes cannot be packed into two bins of size $1$.
	For example, if we can find five jobs or size larger than $\frac{1}{3}$, or three jobs of size larger than $\frac{1}{2}$, then we have the contradiction that $\opt > 1$.
	Similarly, it is impossible to have three jobs of size at least $p > \frac{1}{3}$, and another job of size larger than $1-p$.
\end{observation}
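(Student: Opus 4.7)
The plan is to read off the observation directly from the fact that, by assumption, there exists a schedule of makespan $\opt = 1$ on the two machines. Such a schedule induces a partition of all released jobs into two subsets, one per machine, each of total processing time at most $1$. In other words, the multiset of job sizes must admit a partition into two ``bins'' of capacity $1$. So any set of jobs whose sizes do not admit such a partition cannot occur, which is precisely the statement of the observation.

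For the three worked examples, I would just apply pigeonhole on top of this bin-packing view. First, if there were five jobs of size strictly greater than $\tfrac{1}{3}$, pigeonhole forces at least three of them onto a common machine in $\opt$, producing a load strictly greater than $3\cdot\tfrac{1}{3}=1$, a contradiction. Second, if there were three jobs of size strictly greater than $\tfrac{1}{2}$, then two must share a machine, giving load strictly greater than $1$. Third, for the case of three jobs $A,B,C$ of size at least $p>\tfrac{1}{3}$ together with a job $D$ of size strictly greater than $1-p$, I would case on where $D$ lies in $\opt$: if $D$ is alone on its machine, then $A,B,C$ all share the other machine, with total load at least $3p>1$; otherwise $D$ shares a machine with some job of size $\ge p$, giving load strictly greater than $(1-p)+p=1$. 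Both cases contradict $\opt=1$.

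There is no real obstacle here: once one writes down the bin-packing reformulation of ``there is a schedule with makespan $1$ on two machines'', every claim in the observation is a one-line pigeonhole argument. The only thing to be careful about is keeping track of strict versus non-strict inequalities in the three examples, so that the resulting load is genuinely $>1$ rather than $\ge 1$.
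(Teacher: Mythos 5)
Your proposal is correct and matches what the paper intends: the observation is stated without proof precisely because it is immediate from the fact that $\opt = 1$ induces a partition of the jobs into two machine-loads of at most $1$, and each example then follows by the pigeonhole computations you give (with the strict inequalities handled exactly as you note). Nothing further is needed.
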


\subsection{Upper Bounding Last Job} \label{ssec:two_machine_large}

We show that $p_n \leq \frac{1}{2+\alpha}$ in this section.
As it will be convenient for future analysis, we first rule out the case when $p_n > \frac{1+\alpha}{2}$.

\begin{lemma}
	We have $p_n \leq \frac{1+\alpha}{2}$.
\end{lemma}
\begin{proof}
	Assume the contrary that $p_n > \frac{1+\alpha}{2}$.
	We show that $\min\{p_k,p_j,p_n\} > \frac{1}{2}$, which contradicts $\opt = 1$ (Observation~\ref{observation:bin-packing}).
	
	Suppose $p_k = \min\{p_k,p_j,p_n\} \leq \frac{1}{2}$, then we have $s_k \leq r_n$.
	We show that $s_j > r_n$, which implies $p_j>p_n$.
	Suppose otherwise, then $k$ is the minimum job processed at time $r_n$.
	As $p_k < \frac{p_n}{1+\alpha}$ and $r_n - s_k\leq 0.5-0.38= 0.12 < \alpha p_n$, $k$ should have been replaced by $n$ (note that $n$ must be the largest pending job at $r_n$, if $s_j \leq r_n$).
	
	Hence we know that at $s_k$, both $n$ and $j$ are not released, which gives $r_k\leq s_k < \min\{r_n,r_j\}$.
	As $k$ must be scheduled with one of $j,n$ on the same machine in $\opt$, we have $s_k - r_k > (1.38-p_n-p_k) - (1-p_k -p_n) = 0.38$.
	Hence by Lemma~\ref{lemma:boundary_jobs}, the two jobs in $J(s_k)$ are of size larger than $\min\{s_k-r_k,p_k\} \geq 0.38$, which by Observation~\ref{observation:bin-packing} also contradicts $\opt = 1$.
\end{proof}

\begin{lemma}
	We have $p_n \leq \frac{1+\alpha}{2}$.
\end{lemma}
\begin{proof}
	Assume for contrary that $p_n > \frac{1}{2+\alpha}$.
	Recall from Lemma~\ref{lemma:two_machines_irreplaceable} that under this assumption, any job of size larger than $\frac{1}{2+\alpha}$ (including $n$) cannot be replaced.
	Hence $n$ is pending from $r_n$ to $s_n$, and is uncharged.
	Recall that we have $p_k, p_j > 0.38$ for $k,j\in J(s_n)$.
	We first show that if $n$ is scheduled with one of $k,j$ on the same machine in $\opt$, then $n$ is not the one released earlier.
	
	Suppose otherwise, then we have $s_n - r_n \geq (1.38-p_n) - (1-p_n-0.38) = 0.76$, which implies $\min\{p_k,p_j\} > \min\{ s_n-r_n, p_n \} > p_n$.
	Hence both $k$ and $j$ cannot be replaced.
	Moreover, we have $r_n < 1-2p_n$, which gives $d > p_n$.
	Suppose $s_k > s_j$, then $k$ is not charged: either $s_k \neq r_k$, or the job replaced by $k$ is rescheduled after $s_n$.	
	Then we have $\frac{\Q}{10} + \frac{d - p_n}{4} \geq 0.2p_n\geq 0.09$, which implies $\alg\leq 1.38$, a contradiction.
	
	We show that $\min\{p_k,p_j\} < \frac{1}{2+\alpha}$. Suppose otherwise, then $k,j,n$ are all of size lager than $\frac{1}{2+\alpha}$.
	Hence one of them, suppose $k$, is scheduled before one of $j,n$ on the same machine in $\opt$, which implies
	$r_k < 1-\frac{2}{2+\alpha}\leq \alpha p_k$.
	Since $s_k \neq r_k$, by Lemma~\ref{lemma:all_jobs_large}, the two jobs in $J(r_k)$ are both of size larger than $\frac{1}{3}$.
	Moreover, we have $j\notin J(r_k)$, as otherwise $p_j > s_n - r_k > \frac{2}{3}$.
	Hence there exist five jobs of size larger than $\frac{1}{3}$, contradicting $\opt = 1$.
	
	Next we show that $s_k \leq r_n$ and $s_j \leq r_n$.	
	Suppose $s_k > r_n$, then we have $p_k > p_n$ and $p_j < \frac{1}{2+\alpha}$.
	Hence we have $r_j \leq s_j < \min\{r_k,r_n\}$, which implies $s_j - r_j > (1.38-p_n-p_j) - (1-p_j-p_n) = 0.38$.
	Thus the two jobs in $J(s_j)$ (note that $k,j,n\notin J(s_n)$) are both of size larger than $0.38$, which contradicts $\opt = 1$.	
	Hence we know that both $k$ and $j$ starts before $r_n$.
	Moreover, we know that $n$ is the largest pending job at time $r_n$, which implies $\min\{p_k,p_j\} \geq \min\{0.38+\alpha p_n, \frac{p_n}{1+\alpha}\} = \frac{p_n}{1+\alpha}$.
	
	We proceed to show that $k,j$ are never replaced.
	
	Observe that by Lemma~\ref{lemma:reschedule_rule} at most one of $k,j$ is ever replaced.
	Suppose $k$ is replaced and $x$ is the job completed at $s_k$.
	Then as we already have four jobs ($x,k,j,n$) of size larger than $0.38$, $k$ can only be replaced by one of $x$ and $j$, while the other job is being processed while $k$ is replaced.
	Hence $k$ can only be replaced by $j$, as otherwise $j$ is of size $p_j > s_n - r_x > (s_n-r_n)+(c_x-r_x) > 0.38+(1+\alpha)p_n$, which cannot be scheduled on the same machine with any of $x,k,n$ in $\opt$.
	
	However, if $j$ replaces $k$ (at $r_j = s_j$), then we have $p_j > (1+\alpha)p_k > p_n$.
	First observe that $j$ cannot be schedule with $n$ on the same machine in $\opt$, as $s_j = r_j < r_n$, and $s_j+p_j+p_n > 1.38$. Hence $j$ must be scheduled with one of $k$ and $x$ on the same machine in $\opt$.
	
	\begin{figure}[h]
		\centering
		\vspace*{-10pt}
		\includegraphics[width = 0.65\textwidth]{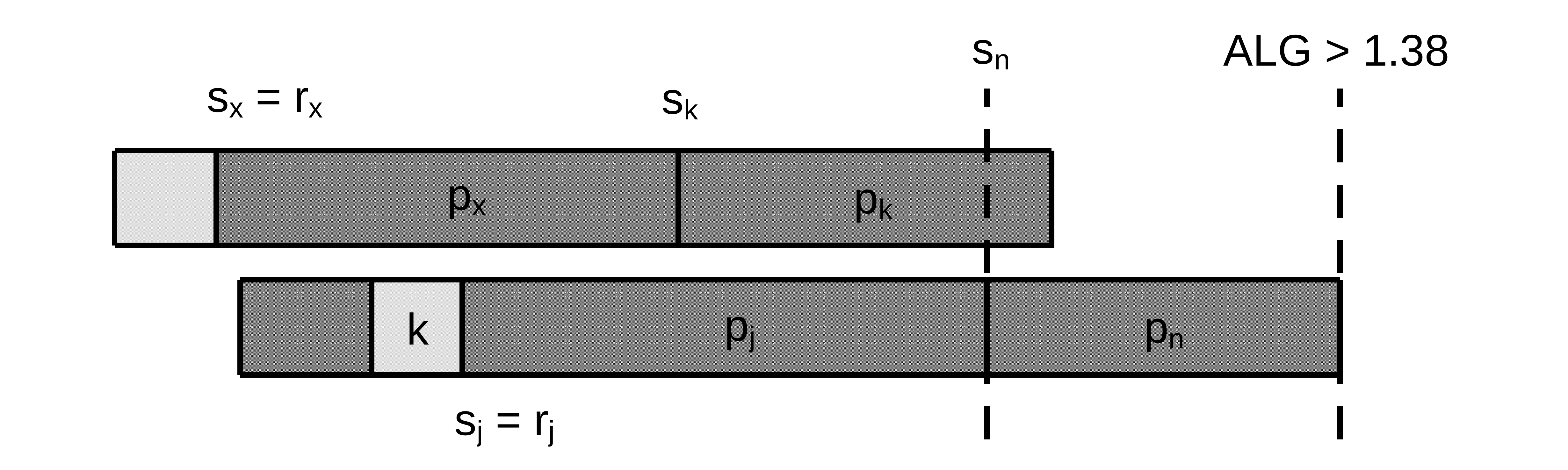}
		\vspace*{-10pt}
		\caption{Case when $j$ replaces $k$, when $x$ is being processed.}
		\label{fig:j_replace_k}
	\end{figure}
	
	Note that as we already have $p_x > p_k > 0.38$, and $p_j > p_n > \frac{1}{2+\alpha}$, any other job must have size strictly smaller than $1-0.38-\frac{1}{2+\alpha} = 0.166$.
	
	If $j$ is scheduled with $k$, then we have
	\begin{align*}
		& s_k(r_j) - r_k > s_j-\alpha p_j - r_k > (1.38-p_n-p_j) - \alpha p_j - (1-p_k-p_j)\\
		\geq & 0.38 - p_n - \alpha p_j + \frac{p_n}{1+\alpha} \geq 0.38 - 0.2\times (1-0.38) - (1-\frac{0.2}{1.2}\times 0.6) = 0.156,
	\end{align*}
	which implies that the job completed at $s_k(r_j)$ is of size at least $\min\{0.156+\alpha p_k, \frac{p_k}{1+\alpha}\} > 0.2$, as $k$ is the largest pending job at $r_k$, but not scheduled.
	Then we have a contradiction.
	
	If $j$ is scheduled with $x$ in $\opt$, then we have $r_x \leq 1-p_x-p_j$.
	As $k$ is scheduled with $n$ (and $k$ is released before $n$), we have $r_k \leq 1-p_k-p_n$.
	\begin{compactitem}
		\item If $s_x \neq r_x$, then $x$ is uncharged.
		As $s_k(r_j)-r_k > (1.38-p_n-p_j) - \alpha p_j - (1-p_k-p_n) > 0.38+p_k-(1+\alpha)p_n >0$, we know that $k$ is uncharged.
		Moreover, there is no idle time after $\max\{r_x,r_k\} < 1-p_n-0.38$, as at most one job is processed at idle time. Hence $d > 0.38$, which implies $\frac{\Q}{10} + \frac{d - p_n}{4} \geq (\frac{1}{10}(1+\frac{2}{1+\alpha}) - \frac{1}{4})p_n + \frac{0.38}{4} > 0.095$.
		\item If $s_x = r_x$ (refer to Figure~\ref{fig:j_replace_k}), then we have $p_k+ p_n \geq (s_n-c_x) + p_n > 1.38 - ((1-p_x-p_j) + p_x) = 0.38+ p_j$.
		Hence we have
		\begin{align*}
			s_k(r_j) - r_k & > (1.38 - p_j - p_n) - \alpha p_j - (1-0.38-p_j) \\
			& \geq 0.76 -p_n - \alpha p_j  \geq 0.76 - p_n - \alpha(1-\frac{p_n}{1+\alpha}) \\
			& \geq 0.76 - (1-\frac{\alpha}{1+\alpha})\frac{1+\alpha}{2+\alpha}-\alpha > 0.1,
		\end{align*}
		where we use $p_n \leq \frac{1+\alpha}{2+\alpha}$ as otherwise $p_k > \frac{1}{2+\alpha}$.
		Hence the job completed at $s_k(r_j)$ is of size larger than $\min\{0.1+\alpha p_k, \frac{p_k}{1+\alpha}\} > 0.176$, which is also a contradiction.
	\end{compactitem}
	
	Hence we conclude that none of $k,j$ is ever replaced. Assume $s_k < s_j$.
	
	If $j$ is scheduled on the same machine before one of $k,n$ in $\opt$, then $j$ is uncharged, and there is no idle time after $r_j \leq 1-\frac{2p_n}{1+\alpha}$.
	Hence we have $d = 1-p_n-t \geq \frac{1-\alpha}{1+\alpha}p_n$.
	We show that $k$ is also uncharged: if $k$ is charged, for the job $x$ replaced by $k$ must be rescheduled before $s_j$. Let $y$ be the job completed at $s_x$. Then we have $p_y > p_x > p_j > 0.38$, contradicting $\opt = 1$. Hence we have $\frac{\Q}{10} + \frac{d - p_n}{4} \geq \frac{1}{10}(1+\frac{2}{1+\alpha})p_n + \frac{1}{4}(\frac{1-\alpha}{1+\alpha}-1)p_n \geq 0.083$, which implies $\alg < 1.38$.
	
	Otherwise $k$ is scheduled before one of $j,n$ in $\opt$, and $k$ is pending from $r_k$ to $s_k$.
	Let $J(s_k) = \{x,y\}$. Note that $k,j,n\notin J(s_k)$.
	If $k$ is scheduled on the same machine with $n$ in $\opt$, then $s_k - r_k > (1.38-p_n-p_k)- (1-p_k-p_n) > 0.38$, which (by Lemma~\ref{lemma:boundary_jobs}) implies $p_x, p_y > 0.38$ and contradicts $\opt = 1$;
	if $k$ is scheduled with $j$, then $s_k - r_k > (1.38-p_n-p_k)- (1-p_k-\frac{p_n}{1+\alpha}) > 0.28$.
	Hence $p_x,p_y > 0.28 > \max\{ 1-2\times 0.38, \frac{1}{2}(1-p_n)\}$, which also contradicts $\opt = 1$.
\end{proof}

\subsection{Medium Last Job} \label{ssec:two_machine_boundary}

In this section, we show that if $p_n > 0.38$, then jobs of size larger than $0.38$ are ``almost irreplaceable'' (Lemma~\ref{lemma:major_replacement}).
We prove the following lemma, which is stronger than the one claimed in Section~\ref{sec:two-machine-overview}.
Different from Section~\ref{sec:two-machine-overview}, here we call a replacement \emph{major} if the job replaced is of size larger than $0.38$.

\begin{lemma}\label{lemma:major_replacement}
	If $p_n\in (0.38,\frac{1}{2+\alpha}]$, then there is at most one major replacement.
\end{lemma}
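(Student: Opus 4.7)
The plan is to proceed by contradiction: assuming two major replacements occur, I will exhibit five distinct jobs of size strictly greater than $1/3$, which Observation~\ref{observation:bin-packing} rules out for two machines with $\opt = 1$. Denote the two replacements by: $y_1$ replaces $x_1$ at $r_{y_1}$, and $y_2$ replaces $x_2$ at $r_{y_2}$, with $r_{y_1} < r_{y_2}$. The replacement rule forces $p_{y_i} > (1+\alpha) p_{x_i} > 1.2 \cdot 0.38 = 0.456$, so all four of $y_1, y_2, x_1, x_2$ have size exceeding $0.38 > 1/3$. Furthermore, $p_n \leq \tfrac{1}{2+\alpha} < 0.456$ shows $n$ cannot be a major replacer, hence $n \notin \{y_1, y_2\}$; also $y_1 \neq y_2$ and $n \neq k, j$ since $n \notin J(s_n)$. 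Finally, Corollary~\ref{corollary:boundary_jobs_n} (with $\min\{\gamma, p_n\} = 0.38$) gives $p_k, p_j > 0.38$.

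The first step is to secure four distinct major jobs. The triple $\{y_1, y_2, x_1\}$ is automatically three distinct major jobs ($y_2 \neq x_1$ since $x_1$ is already replaced before $y_2$ is released). To obtain a fourth, I split on the relation between $x_2$ and this triple. If $x_2 \notin \{x_1, y_1\}$, then $x_2$ itself is the fourth. If $x_2 = x_1$, then the common job is replaced twice, and by Lemma~\ref{lemma:reschedule_rule} some job $w$ with $p_w > p_{x_1} > 0.38$ completes at the intermediate reschedule of $x_1$; a short check on which machine $x_1$ is rescheduled yields $w \notin \{y_1, y_2, x_1\}$. If $x_2 = y_1$ (a chain), then $p_{y_2} > (1+\alpha)^2 \cdot 0.38 > 0.547 > 1/2$, and I would use the large size of $y_2$ together with the impossibility of replacing it (Fact~\ref{fact:irreplaceable_large_job}) to supply a stronger bin-packing constraint, as explained below.

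The second step is to find a fifth major job. Since $n$ is major and distinct from $y_1, y_2, k, j$, the only way $n$ is already among the four is $n \in \{x_1, x_2\}$. If $n$ is new the contradiction is immediate. Otherwise, say $n = x_1$: the goal is to show that at least one of $k, j$ lies outside the four replacement-related jobs. For this, I note that each $y_i$ is scheduled at $r_{y_i}$ and so in $\opt$ satisfies $r_{y_i} + p_{y_i} \leq 1$, but if $y_i$ were in $J(s_n)$ the final schedule would force $s_n \leq r_{y_i} + p_{y_i}$; combining with $s_n + p_n > 1.38$ and the lower bounds $p_k, p_j, p_n > 0.38$, an application of Lemma~\ref{lemma:boundary_jobs} and the two-bin feasibility constraint yields a scheduling contradiction in $\opt$. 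The case $n = x_2$ is handled symmetrically.

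The main obstacle will be the chain case $x_2 = y_1$ when it coincides heavily with $\{k, j, n\}$, since there the pure counting bound $\{y_1, y_2, x_1, k, j, n\}$ of size four is not tight and the contradiction must be extracted from the fine-grained size estimate $p_{y_2} > 0.547$: in $\opt$, $y_2$ must share its machine with another job of size at most $1 - 0.547 = 0.453$, yet each of $k, j, n, x_1$ has size exceeding $0.38$ and they must be partitioned across two machines, which I expect to force a contradiction via a careful pairing argument combined with the release-time bound $r_{y_2} \leq 1 - p_{y_2} < 0.453$ and the reschedule rule applied to $y_1 = x_2$ at time $r_{y_2}$.
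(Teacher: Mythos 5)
Your plan is to exhibit five distinct jobs of size $> 1/3$, which the bin-packing constraint (Observation~\ref{observation:bin-packing}) rules out. This is the right \emph{flavor}, but the paper's proof shows the ``5 jobs $> 1/3$'' bound is too coarse to close the argument in the cases that actually matter; the contradiction ultimately hinges on finer size estimates. Let me flag the concrete gaps.

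First, you never exploit the \emph{other} job being processed when the first replacement occurs. Since $m=2$, at time $r_{y_1}$ there are two jobs processing; $x_1$ is the smaller one and gets replaced, but the second job (call it $y$, as the paper does) satisfies $p_y > p_{x_1} > 0.38$ and is therefore also major. The paper's case analysis (notably the chain case $x_2 = y_1$, their Case~3) uses $y$ essentially: it shows $y$ is processed at $r_{y_2}$ with $p_y > p_{y_1}$, that $y_2$ cannot share a machine in $\opt$ with $y_1$ or $y$, and then deduces that $y_1$ is rescheduled at $c_y$. Without $y$ in hand, the chain case that you acknowledge as ``the main obstacle'' does not have the ingredients to close, and your release-time/reschedule-rule plan for it does not replace what $y$ supplies.

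Second, your Step~2 argument that ``if $y_i \in J(s_n)$ then $s_n \leq r_{y_i} + p_{y_i} \leq 1$'' implicitly assumes $y_i$ is never replaced after its initial scheduling at $r_{y_i}$. But $y_1$ (the paper's $z$, with $p_{y_1} > 0.456 < 1/2$) is not protected by Fact~\ref{fact:irreplaceable_large_job}, and the chain case is precisely the scenario where $y_1$ \emph{is} replaced later; then $s_{y_1} > r_{y_1}$ and $y_1 \in J(s_n)$ is possible. So the ``$y_i \notin J(s_n)$'' step fails exactly where you need it.

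Third, in the $x_2 = x_1$ sub-case you claim the job $w$ completing at $x_1$'s intermediate reschedule is distinct from $y_1, y_2, x_1$, but Lemma~\ref{lemma:reschedule_rule} only produces some job completing at that time --- and $w = y_1$ is entirely consistent if $x_1$ is rescheduled on the machine where $y_1$ ran. You'd again need $y$ (or a completion-time argument) to rule this out.

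More fundamentally: even after correctly establishing that $\{x, y, z, l\}$ are four distinct major jobs and $n \in \{x,y\}$, no raw ``fifth job $> 1/3$'' needs to exist. The paper instead shows (via $z, l$ being unschedulable on the same $\opt$-machine) that \emph{any} other job has size $< 1 - (2+\alpha)\cdot 0.38 = 0.164$, and then, in each of three sub-cases, extracts a job (completing at $s_n(r_z)$ or $s_y(r_l)$) of size at least roughly $0.168$--$0.232$, exceeding $0.164$. That is a bin-packing contradiction with a much tighter threshold than $1/3$, and it needs the pending-time arithmetic ($s_n(r_z) - r_n > 0.092$ etc.) that your sketch does not reach. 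The overall bin-packing spirit is shared, but without $y$, without the ``$z,l$ on different machines'' step, and with the trivial $1/3$ threshold, your proposal does not complete the proof.
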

\begin{proof}
	Suppose there are two major replacements.
	
	Consider the first major replacement.
	Let $z$ be the replacer, $J(r_z) = \{x,y\}$ and $x$ be the job replaced.
	Then we have $p_y > p_x > 0.38$, and $p_z > (1+\alpha)0.38 = 0.456$.
	Let $l$ be the replacer in the second major replacement.
	Then we know that the job replaced by $l$ must be one of $x,y,z$.
	Observe that $n\in \{x,y\}$, since we have $\min\{p_z,p_l\} > 0.456 > \frac{1}{2+\alpha}$, and it is impossible to have five jobs of size larger than $0.38$.
	
	We first show that $l$ and $z$ cannot be scheduled on the same machine in $\opt$.
	
	Suppose otherwise, then we have $r_z + p_z + p_l \leq 1$ (as $r_z < r_l$).
	Hence the job replaced by $l$ is not $z$, as otherwise $p_z + p_l > ((1+\alpha)^2 + (1+\alpha))0.38 > 1$;
	the job replaced by $l$ is not $x$, as otherwise $y$ is completed at $s_x(r_l)$ and $n=x$, which gives $r_z + p_z + p_l > r_z + p_z + p_x \geq \alg > 1.38$.
	Thus $y$ is replaced by $l$, which gives $p_l \geq (1+\alpha)p_y$.
	However, since $r_z + p_z + p_x + p_y > 1.38$, we have
	\begin{align*}
		r_z + p_z + p_l &> (r_z + p_z + p_x + p_y) + \alpha p_y - p_x
		> 1.38 + \alpha\cdot 0.38 - \frac{1-(1+\alpha)\cdot 0.38}{1+\alpha} > 1,
	\end{align*}
	which is also a contradiction.
	
	As $l$ and $z$ cannot be scheduled on the same machine in $\opt$, we know that apart from $x,y,z$ and $l$, any other job must be of size less than $1-(2+\alpha)\cdot0.38 = 0.164 < \frac{0.38}{1+\alpha}$.
	Moreover, as $x$ and $y$ are released before $z$ and $l$, we have $r_x \leq 1-p_x-\min\{p_z,p_l\}$ and $r_y \leq 1-p_y-\min\{p_z,p_l\}$.
	
	Depending on which job is replaced by $l$, we divide our analysis into three cases.
	
	\paragraph{Case-1: $x$ is replaced.}
	If $x$ is replaced by $l$, then we know that $y$ is completed at $s_x(r_l)$.
	Hence $n=x$, which is scheduled after $z$ or $l$.
	As $r_z - s_n(r_z) \leq \alpha p_z$, we have
	\begin{align*}
		s_n(r_z) - r_n & > (1.38-p_n-p_z) -\alpha p_z - (1-p_n-\min\{p_z, p_l\}) \\
		& = 0.38+\min\{p_z, p_l\}-(1+\alpha)p_z > 2.2\times 0.38 - 1.2\times 0.62 = 0.092.
	\end{align*}
	
	\vspace*{-10pt}
	\begin{figure}[H]
		\centering
		\includegraphics[width = 0.65\textwidth]{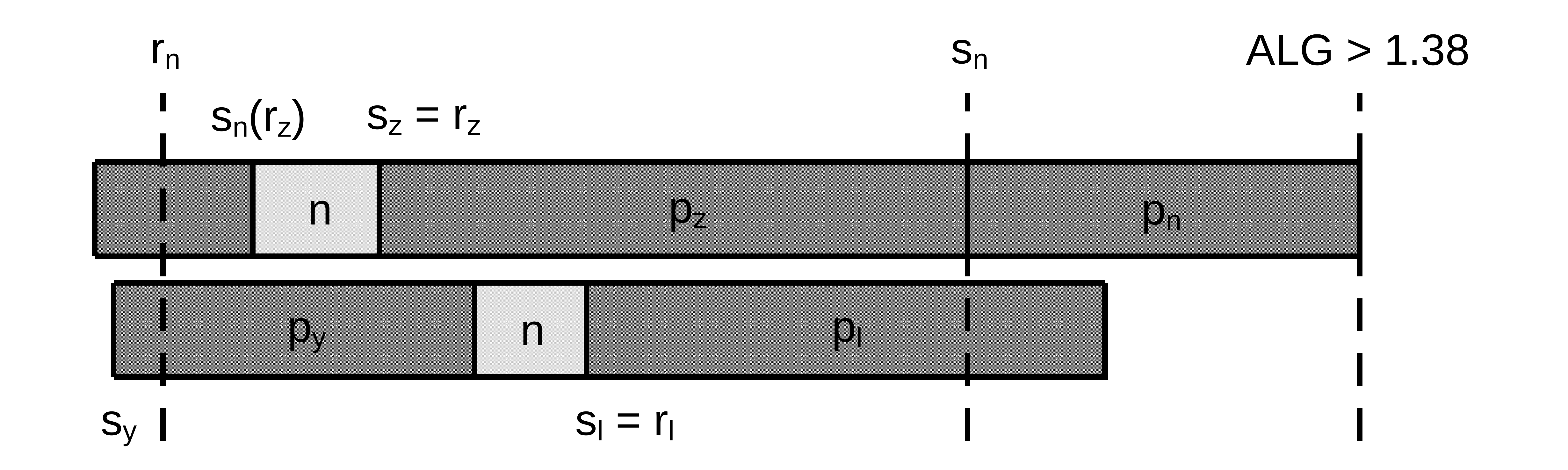}
		\vspace*{-5pt}
		\caption{Case when $n=x$ is replaced by $l$.}
		\label{fig:l_replace_x}
	\end{figure}
	Hence we know that $n$ is pending from $r_n$ to $s_n(r_z)$, which means that the job completed at $s_n(r_z)$ is of size at least $\min\{ 0.092+\alpha p_n, \frac{p_n}{1+\alpha} \} \geq 0.168$, contradicting $\opt = 1$.
	
	\paragraph{Case-2: $y$ is replaced.}
	First note that $n\neq y$, as otherwise (similar to \textbf{Case-1}) we have
	\begin{align*}
		s_n(r_l) - r_n  &> 0.38+\min\{p_z, p_l\} - (1+\alpha) \max\{p_z,p_l\}
		> 2.2\times 0.38 - 1.2\times 0.62 = 0.092,
	\end{align*}
	which implies a contradiction.
	Thus we have $n=x$. Consider which job is completed at $s_n$:
	\begin{compactitem}
		\item if it is $z$, then by \textbf{Case-1} we have $s_n(r_z) - r_n  > 0.092$;
		\item if it is $l$, then $s_y(r_l) - r_y  > (1.38-p_x-p_l)-\alpha p_l-(1-p_y-\min\{p_z,p_l\}) > 0.092$;
		\item if it is $y$, then we know that the job processed on the other machine at $s_n$ is either $z$ or $l$.
		If it is $z$, then we have $s_n(r_z) > 1.38 - p_n-(1+\alpha)p_z$; otherwise we have $s_y(r_l) > 1.38-p_n-(1+\alpha)p_l$. Thus we go to one of the above two cases.
	\end{compactitem}
	Hence in all cases, we can find a job completed at either $s_n(r_z)$ or $s_y(r_l)$ that is of size larger than $0.168$, which contradicts $\opt = 1$.
	
	\paragraph{Case-3: $z$ is replaced.}
	If $z$ is replaced, then we know that $y$ is not replaced (thus $n=x$).
	Moreover, $y$ must be processed at $r_l$, which gives $p_y > p_z$.
	Observe that $p_l \geq (1+\alpha)p_z > 1-p_z > 1-p_y$. Hence in $\opt$, $l$ cannot be scheduled with $z$ or $y$.
	Then we have $p_y \leq 1-p_z < p_l$, which implies $c_y < c_l$.
	Hence we know that $z$ is rescheduled at $c_y$.
	Then again, we have
	\begin{align*}
		s_n(r_z) - r_n & > (1.38-p_n-p_l) -\alpha (p_l+p_z) - (1-p_n-p_l) \\
		& = 0.38 - \alpha(p_l+p_z) \leq 0.38-0.2\times (0.62 + 0.5) = 0.156,
	\end{align*}
	which implies that the job completed at $s_n(r_z)$ is of size at least $0.232$, contradicting $\opt = 1$.
\end{proof}

Now with the help of Lemma~\ref{lemma:major_replacement}, we show that we can push the upper bound of $p_n$ from $\frac{1}{2+\alpha}$ to $0.38$.
Depending on whether $n$ is ever replaced, we use different proof strategies.

\begin{lemma}
	If $0.38 < p_n \leq \frac{1}{2+\alpha}$ and $n$ is never replaced, then $\alg \leq 1.38$.
\end{lemma}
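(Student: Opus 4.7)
The plan is to derive a contradiction from the hypothesis $\alg > 1.38$ by applying the refined efficiency inequality~(\ref{equation:extra}). Explicitly, I aim to establish
\begin{equation*}
\frac{\Q}{10} + \frac{2\widetilde{\P} + d - 3p_n}{4} \geq 0.07,
\end{equation*}
which, combined with (\ref{equation:extra}), forces $\alg \leq 1.38$ and contradicts the assumption.

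First I would observe that since $\alg > 1.38$ and $p_n \leq \frac{1}{2+\alpha}$, we have $s_n - r_n > 0.38$, so $n$ is pending throughout $[r_n, s_n]$. In particular $n$ is not scheduled at $r_n$, and therefore does not replace any job there; hence $n$ is uncharged, contributing $p_n$ to $\Q$. Let $\{k,j\} = J(s_n)$; by Corollary~\ref{corollary:boundary_jobs_n} we have $p_k, p_j > \min\{\gamma, p_n\} = 0.38$. Since $n, k, j$ are three jobs of size greater than $0.38$, Observation~\ref{observation:bin-packing} forces two of them to share a machine in $\opt$ while the third is alone. By Lemma~\ref{lemma:major_replacement}, at most one major replacement occurs in the entire schedule.

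The analysis then splits on the $\opt$ structure. In Case~(A), where $k$ and $j$ share a machine in $\opt$ (with $n$ alone), assume $r_k < r_j$ so that $r_k \leq 1 - p_k - p_j < 0.24$ and $s_k - r_k > (1.38 - p_n - p_k) - (1 - p_k - p_j) = 0.38 + p_j - p_n > 0$; this forces $k$ either to be pending at $r_k$ (hence uncharged, adding $p_k$ to $\Q$) or to be scheduled at $r_k$ and later replaced (consuming the single allowed major replacement and thereby constraining $j$ via Lemma~\ref{lemma:major_replacement}). In Case~(B), where $n$ shares a machine with one of $k, j$ (say $k$) and the other is alone, I would split on whether $r_n < r_k$ or $r_k < r_n$, using $\alg > 1.38$ and $\opt = 1$ on the shared machine to extract gaps such as $s_k - r_k > 0.38$ or $r_n < 1 - 2 p_n$. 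In each sub-case I would determine whether $k$ and $j$ are uncharged, add their sizes to $\Q$ accordingly, and simultaneously upper bound the last idle time $t$ (using Fact~\ref{fact:no_idle_after_1-p_n} and the non-idle periods implied by the pending major jobs) to lower bound $d = 1 - p_n - t$.

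The main obstacle is the boundary regime where $p_n$ is close to $\frac{1}{2+\alpha}$ and both $p_k, p_j$ barely exceed $0.38$, so that the slack in every individual inequality is small. There, whenever a major job is charged at its release time, Lemma~\ref{lemma:reschedule_rule} combined with Lemma~\ref{lemma:all_jobs_large} forces additional jobs of size close to $\tfrac{1}{3}$ to appear in the prefix of the schedule; I would use Observation~\ref{observation:bin-packing} to rule out the impossible configurations outright and, for the remaining ones, show that the forced non-idle periods push $t$ down enough that the $d$-term in~(\ref{equation:extra}) compensates for the deficit in $\Q$. Assembling these bounds across all cases yields the target threshold of $0.07$.
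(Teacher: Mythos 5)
Your proposal follows the same high-level framework as the paper: start from $n$ being uncharged, invoke equation~(\ref{equation:extra}), accumulate contributions to $\Q$ from uncharged major jobs, bound the last idle time $t$ to lower bound $d$, and use Lemma~\ref{lemma:major_replacement} to limit major replacements. So in spirit the approach is the right one.

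However, there is a genuine gap: you have written a roadmap, not a proof, and the step you defer is precisely the step that takes up almost the entire proof in the paper. After dispatching the case where $n$ is the earlier job on its shared $\opt$-machine (which you and the paper both handle quickly via $t<1-2p_n$ and one extra uncharged job), the paper must show that \emph{neither $k$ nor $j$ is ever replaced} before it can account for $\Q$ and $d$. That argument is a long branching case analysis using Lemma~\ref{lemma:reschedule_rule}, Lemma~\ref{lemma:all_jobs_large}, the $(1+\alpha)$-growth of replacers, and explicit geometry of $s_k(r_j)$, $s_x$, $c_x$, etc., and in several branches it terminates in an outright $\opt>1$ contradiction rather than a numerical contribution to $\Q$ or $d$. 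Your Case~(A) treats ``$k$ scheduled at $r_k$ and later replaced, consuming the single allowed major replacement'' as one alternative and asserts it ``constrains $j$,'' but you never show how that resolves to the 0.07 threshold — in fact the paper derives a contradiction there, not a bound — and your sub-case split omits the possibility that $k$ is scheduled at $r_k$ and never replaced, which is the scenario the paper ultimately lands in. In addition, two smaller misstatements: Observation~\ref{observation:bin-packing} does not ``force two of $n,k,j$ to share a machine in $\opt$'' — that is pigeonhole on the three jobs of $J(s_n)\cup\{n\}$ with $m=2$; and ``the third is alone'' does not follow, since the third may share its machine with smaller jobs. These do not derail the program, but the missing non-replaceability argument does: without it, the bound $\W_1\leq 2\alpha$ cannot be improved, and plugging the trivial $\W_1$ and $\widetilde{\P}\geq p_n$ into~(\ref{equation:extra}) is exactly what fails in the boundary regime $p_n$ near $\frac{1}{2+\alpha}$ that you flag as the main obstacle.
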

\begin{proof}
	Note that in this case $n$ is uncharged. Recall that we have $p_k>0.38$ and $p_j > 0.38$.
	
	If $n$ is scheduled before one of $k,j$ in $\opt$, then we have $r_n \leq 1-p_n-0.38$, which implies $s_n - r_n > 0.76$.
	Hence we have $p_k>p_n$, $p_j > p_n$ and last idle time $t < r_n < 1-2p_n$.
	Suppose $s_k > s_j$.
	As it is impossible to have $p_k>0.76$ and $p_j>0.76$, we have $s_k > r_n$.
	We show that $k$ is uncharged: if the job replaced by $k$ (at $r_k$) is of size less than $p_n$, then it cannot be rescheduled before $s_n$; otherwise it is a major replacement, which (by Lemma~\ref{lemma:major_replacement}) implies that $k$ is not replaced. Hence we have $r_k = s_k$, and the job replaced by $k$ is rescheduled before $s_n$, which is impossible, as $j$ is processed from $s_k$ to $s_n$.
	
	Then we have $\frac{\Q}{10}+\frac{d-p_n}{4} > 0.2p_n > 0.076$, which gives $\alg \leq 1.38$.
	
	Hence two of $k,j,n$ are scheduled together in $\opt$, while $n$ is not the one released earlier.
	
	As before, we show that none of $k,j$ is ever replaced.
	
	Suppose $k$ is replaced. Let $x$ be the job completed at $s_k$. We have $p_x > p_k$, and we know that $k$ must be replaced by one of $x$ and $j$, while the other job is being processed when $k$ is replaced.
	If $k$ is replaced by $x$, then either $k$ is uncharged (if $s_k(r_x) > s_j$), or $j$ is uncharged (if $s_k(r_x) < s_j$), because the job replaced must be rescheduled after $s_n$.
	Moreover, we have $c_k - s_n > p_x+p_k-p_j > (2+\alpha)0.38 - 0.62 > 0.216$, which gives
	\begin{align*}
		\frac{\Q}{10}+\frac{2\widetilde{\P} -3p_n}{4} &\geq \frac{1}{10}(p_n+0.38)+\frac{1}{4}(2\times 0.216 - p_n)
		\geq \frac{0.38}{10}+\frac{0.216}{2}-\frac{0.15}{2.2} > 0.0778.
	\end{align*}
	
	If $k$ is replaced by $j$, then we have $p_j > (1+\alpha)p_k > 0.456>p_n$.
	Note that there is no idle time after $\max\{r_k,r_x\} < 1-0.38-p_n$, which gives $d > 0.38$.
	
	If $s_k(r_j) < s_x$, then $x$ is uncharged, and $c_k - s_n > p_x+p_k-(1+\alpha)p_j > 0.016$, which implies
	\begin{align*}
		\frac{\Q}{10}+\frac{2\widetilde{\P} +d -3p_n}{4} &\geq \frac{1}{10}(p_n+0.38)+\frac{1}{4}(2\times 0.016 +0.38 - p_n) \\
		&\geq \frac{0.38}{10}+\frac{0.412}{4}-\frac{0.15}{2.2} > 0.0728.
	\end{align*}
	
	Hence we have $s_k(r_j) > s_x$, which means that $k$ is uncharged.
	Observe that if $x$ is also uncharged then we are done as
	\begin{align*}
		\frac{\Q}{10}+\frac{d -p_n}{4} &\geq \frac{1}{10}(p_n+0.38+0.38)+\frac{1}{4}(0.38 - p_n) 
		\geq \frac{0.38}{5}+\frac{0.38}{4}-\frac{0.15}{2.2} > 0.1.
	\end{align*}
	
	Hence we have $r_x = s_x < s_k(r_j)$, and the job replaced by $x$ is completed during $(s_x, s_k(r_j))$.
	Observe that we have $p_x > p_k > 0.38$ and $p_j > p_n > 0.38$, hence any other job must be of size less than $0.24$.
	Hence we have $r_k > s_x$, as otherwise the job replaced by $x$ (which is of size less than $0.24$) will be rescheduled after $s_n$.
	
	Since $r_k < 1-p_k-p_n$, we have $s_k(r_j) - r_k > (1.38-p_n-p_j) \alpha p_j - (1-p_k-p_n) = 0.38+p_k - (1+\alpha)p_j > 0$, which means that $k$ is not scheduled at $r_k$.
	Hence the job $y$ (apart from $x$) processed at $r_k$ is processed at least $\alpha p_k$ (as $0.24 < \frac{p_k}{1+\alpha}$).
	Moreover, we have $s_y(r_k) > s_x$, as otherwise $x$ is uncharged.
	Hence we have $t < s_y(r_k) < r_k - \alpha p_k < 1-p_n - 0.456$, which gives
	\begin{align*}
		\frac{\Q}{10}+\frac{d -p_n}{4} &\geq \frac{1}{10}(p_n+0.38)+\frac{1}{4}(0.456 - p_n)
		\geq \frac{0.38}{10}+\frac{0.456}{4}-\frac{0.15}{2.2} > 0.083.
	\end{align*}
	
	Hence we can assume that none of $k,j$ has been replaced. Assume $s_k < s_j$ ($j$ is uncharged).
	
	If $j$ is scheduled before one of $k,n$ on the same machine in $\opt$, then there is no idle time after $r_j \leq 1-2\times 0.38=0.24$, and $k$ is also not charged: if $k$ is charged, then for the job $x$ replaced by $k$ and the job $y$ completed at $s_x$ we have $p_y > p_x > p_j > 0.38$, contradicting $\opt = 1$. Hence
	\begin{align*}
		\frac{\Q}{10}+\frac{d -p_n}{4} &\geq \frac{1}{10}(p_n+0.76)+\frac{1}{4}(0.76 - 2p_n) \geq \frac{0.76}{10}+\frac{0.76}{4}-\frac{0.4}{2.2} > 0.084.
	\end{align*}
	
	Otherwise $k$ is scheduled before one of $j,n$ in $\opt$. Observe that $s_k-r_k>0$.
	Let $J(s_k) = \{x,y\}$. Note that $k,j,n\notin J(s_k)$.
	If $k$ is scheduled with $n$, then $s_k - r_k > (1.38-p_n-p_k)- (1-p_k-p_n) > 0.38$, which (by Lemma~\ref{lemma:boundary_jobs}) implies $\min\{p_x, p_y\} > 0.38$ and contradicts $\opt = 1$;
	if $k$ is scheduled with $j$, then $s_k - r_k > (1.38-p_n-p_k)- (1-p_k-0.38) = 0.76-p_n$.
	Hence $\min\{p_x, p_y\} > \min\{ 1-2\times 0.38, \frac{1}{2}(1-p_n)\}$, which also contradicts $\opt = 1$.
\end{proof}

\begin{lemma}
	If $0.38 < p_n \leq \frac{1}{2+\alpha}$ and $n$ is ever replaced, then $\alg \leq 1.38$.
\end{lemma}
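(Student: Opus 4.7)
The plan is to reuse the efficiency framework of the previous lemma, driving inequality~(\ref{equation:extra}) past the threshold $\frac{\Q}{10}+\frac{2\widetilde{\P}+d-3p_n}{4}\geq 0.07$, but now exploiting the strong structural restrictions that follow from the existence of a major replacement. Let $l$ be the last job to replace $n$, so $p_l\geq (1+\alpha)p_n>0.456$; since this replacement consumes the unique major replacement allowed by Lemma~\ref{lemma:major_replacement}, the jobs $l$, $k$, and $j$ are each irreplaceable. Because $c_l=r_l+p_l\leq 1<s_n$, job $l$ finishes strictly before $s_n$, so $l\notin J(s_n)$ and $\{n,l,k,j\}$ are four distinct major jobs. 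A standard bin-packing argument then rules out a fifth major job, confines every remaining job to size strictly less than $1-2\cdot 0.38=0.24$, and forces the four major jobs to partition into two pairs on the two $\opt$-machines.

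I would then branch on this $\opt$-pairing. In the \emph{separated} subcase, where $l$ is paired with one of $k,j$ (say $j$), the constraint $r_l\leq 1-p_l-p_j<0.164$ forces $l$ to arrive very early; since $l$ is irreplaceable and $n$ stays pending throughout $[r_l,s_n]$, the last idle time $t$ satisfies $t\leq r_l$ and hence $d=1-p_n-t\geq 0.836-p_n$. A careful accounting of the window $[r_l,s_n]$ should then show that either $k$ is uncharged (adding $\geq p_n$ to $\Q$, since any job it would replace at arrival is non-major of size $<0.24$ and must be rescheduled strictly after $s_n$ once the two machines fill with $l$ and $j$), or the small job replaced by $l$ at $r_l$ is itself rescheduled strictly after $s_n$ and boosts $\widetilde{\P}$ by its size. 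Combined with the large $d$, this should drive the left-hand side of~(\ref{equation:extra}) above $0.07$.

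The harder \emph{joined} subcase is when $n$ is paired with $l$ and $k$ with $j$ in $\opt$, giving $r_n\leq 1-p_n-p_l<0.544-p_n$; here the last idle time can be as late as $r_l\leq 1-p_l$, so both $d$ and the trivial $\Q$-lower bound are weak. The main obstacle will be to squeeze out sharper slack in this regime. My strategy is to examine the window $[c_l,s_n]$: after $l$ completes, the two machines must collectively carry $k$, $j$, and (later) $n$, and by comparing total processing in $[r_n,s_n]$ against what $\opt$ can fit one can force at least one of $k,j$ to be uncharged, because a charged $k$ or $j$ would require its small replaced job to be rescheduled in a narrow window before $s_n$ that clashes with the occupancy by the irreplaceable majors. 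Together with a refined upper bound on $t$ obtained by tracking $s_n(r_l)$ (the algorithm must have started $n$ before $r_l$ and processed it only briefly), this should close the gap. I expect the delicate regime to be $p_n\to\frac{1}{2+\alpha}$, where all three ingredients $\Q$, $\widetilde{\P}$, and $d$ approach their tight values simultaneously and the case analysis becomes intricate.
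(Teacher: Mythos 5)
Your proposal hinges on the claim that $c_l=r_l+p_l\le 1<s_n$, hence $l\notin J(s_n)$ and $\{n,l,k,j\}$ are four distinct major jobs. The inequality $s_n>1$ is not available here: we only know $s_n=\alg-p_n>1.38-p_n$, and since in this lemma $p_n>0.38$, that lower bound is \emph{below} $1$ (e.g.\ with $p_n=0.45$ it only gives $s_n>0.93$). Unlike the general-machine hybrid case, where $p_n<\gamma$ forces $s_n>1$, here $p_n>\gamma=0.38$, so $s_n\le 1$ is entirely possible, and $l$ (which starts at $r_l$ and runs to $c_l\le 1$) can well be among the two jobs processed at $s_n$. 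This is not a corner case the paper ignores: the paper's proof takes $x$ with $J(r_l)=\{n,x\}$ and splits on $J(s_n)$ versus $\{l,x\}$, and the hardest subcases — $J(s_n)=\{l,x\}$ with its minimum-counter-example reduction, and $l\in J(s_n)$ with $l=k$ — are precisely the ones your structural premise rules out. Your ``separated'' and ``joined'' subcases (pairings of $\{n,l,k,j\}$ in $\opt$) therefore do not cover the actual configuration space, and the downstream bookkeeping of $\Q$, $\widetilde{\P}$, $d$ cannot repair this, since the missing cases change which jobs count as uncharged and where the last idle time sits.

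A secondary issue, even granting four distinct majors: the ``joined'' subcase is left at the level of a sketch (``this should close the gap''), whereas the paper's corresponding arguments lean on two specific devices you do not invoke — a reduction to a smaller counter-example when $s_x>s_n(r_l)$ (using the minimality assumption to replace $\opt$ by $\opt'$ on the truncated instance), and a quantitative comparison $s_n(r_l)-r_n>0.092$ forcing a job of size $>0.168$ that contradicts $\opt=1$. These are not obviously recoverable from the pairing-based framework you set up. I would rework the case split to follow $x\in J(r_l)$, and drop the assumption $l\notin J(s_n)$.
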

\begin{proof}
	Observe that since $p_n > 0.38$ and $n$ is ever replaced, by Lemma~\ref{lemma:major_replacement} we know that $n$ is replaced exactly once, and any other job of size larger than $0.38$ is never replaced.
	
	Suppose $n$ is replaced by $l$.
	Let $J(r_l) = \{n,x\}$. Then we must have either $l\in J(s_n)$ or $x\in J(s_n)$, as otherwise we have five jobs of size larger than $0.38$.
	
	We first consider the case when $J(s_n) = \{l,x\}$.
	Note that $l,x$ are never replaced.
	
	\begin{enumerate}
		\item If $s_x > s_n(r_l)$, then we have $r_x > s_n(r_l)$, as $x$ cannot be pending at $s_n(r_l)$.
		Consider the instance with jobs released after $s_n(r_l)$ removed.
		Let $\opt'$ be the new optimal makespan and $\alg'$ be the makespan of our algorithm on the new instance.
		We have $\opt' \leq 1-\min\{p_x,p_l\}$ (as $x,l$ are released after $s_n(r_l)$ and cannot be scheduled on the same machine in $\opt$), while $\alg' = s_n(r_l)+p_n = \alg - (s_n - s_n(r_l)) \geq \alg - 1.38\cdot \min\{p_x,p_l\}$, which gives a smaller counter-example (the inequality holds since $s_n-r_l\leq \min\{p_x,p_l\}$ and $r_l - s_n(r_l) \leq \alpha p_l \leq \alpha (p_x+p_n-0.38) < 0.38 p_x$).
		
		\item If $r_x \leq s_x < s_n(r_l)$, then we compare the release times of $n$ and $x$.
		Observe that $r_l > \max\{r_n, r_x\}$.
		If $r_n<r_x$, then $t < r_n < 1-2p_n$.
		We show that $x$ is uncharged: any job $y$ replaced by $x$ at $r_x < r_n$ cannot be scheduled before $s_n$, as there is at most one major replacement.
		As $s_n(r_l) > s_x > r_n$, $n$ is uncharged, which implies $\frac{\Q}{10}+\frac{d-p_n}{4} > 0.2\times 0.38 = 0.076$.
		If $r_x<r_n$, then $n$ is uncharged, as any job replaced by $n$ can only be scheduled after one of $x,n$ is completed.
		\begin{compactitem}
			\item If $x$ is scheduled together with one of $n,l$ in $\opt$, we have $s_x - r_x > (1.38-p_x-p_n) - (1-p_x-p_n) = 0.38$, which means that the two jobs processed at $s_x$ (note that $n,x,l\notin J(s_x)$) are of size larger than $0.38$, contradicting $\opt = 1$;
			\item otherwise we have $r_x < r_n < 1-p_n-p_l$, which implies $d > p_l$. If $x$ is uncharged then we have $\frac{\Q}{10}+\frac{d-p_n}{4} > 0.095$; otherwise we have $p_x > 1.38-p_n-r_x > 0.38+p_l$. Moreover, we have $s_n(r_l)-r_n > (1.38-p_n-p_l)-\alpha p_l - (1-p_n-p_l) > 0.092$, which means that the job completed at $s_n(r_l)$ is of size larger than $0.168$. Then we have a contradiction as $\min\{0.168+p_x,0.168+p_n+p_l \} > 0.168+(2+\alpha)0.38 > 1$.
		\end{compactitem}
	\end{enumerate}
	
	Hence we have $J(s_n) \neq \{l,x\}$, which means that at least three of the four jobs in $\{n,x,l\} \cup J(s_n)$ are released after $t$ (as $t < r_l$), which implies $t < 1-2p_n$ and $d > p_n$.
	
	We first consider the case when $l\notin J(s_n)$ and $x\in J(s_n)$. 
	Suppose $x = j$, then we have $c_k - s_n > p_l+p_k-p_x > (2+\alpha)0.38 - 0.62 > 0.216$.
	Hence we have
	\begin{equation*}
		\frac{\Q}{10}+\frac{2\widetilde{\P} + d - 3p_n}{4} \geq \frac{p_n}{10}+\frac{2\times 0.216}{4} > 0.146.
	\end{equation*}
	
	Next we consider the case when $l\in J(s_n)$ and $x\notin J(s_n)$.
	Suppose $l=k$.
	Observe that at any time from $\max\{s_n(r_l),s_x\}$ to $s_n$, the minimum job being processed is of size at least $p_n$.
	Hence any job replaced at or after $\max\{s_n(r_l),s_x\}$ must be rescheduled after $s_n$.
	Thus job $j$ and one of $n,x$ are uncharged.
	Then we have $\frac{\Q}{10}+\frac{d - p_n}{4} \geq \frac{p_n}{5} > 0.2\times 0.38 = 0.076$.
\end{proof}

\subsection{Refined Efficiency Argument} \label{ssec:two_machine_small}

Observe that the upper bound (\ref{equation:extra}) on $\alg$ is quite loose when $p_n$ and $d$ are very small.
Actually, since there are only two machines, if time $t$ is idle, then there is one job being processed (it is impossible to have two idle machines in the minimum counter-example).
Hence we should have a better upper bound on $\Delta_t$, compared to Lemma~\ref{lemma:leftover} (The Leftover Lemma).

Let $t$ be the last idle time before $s_n$, and let $J(t) = \{i\}$.
Let $\opt_t$ be the makespan of the optimal schedule of the jobs released before $t$.
Then we have $c_i(t) := s_i(t)+p_i \leq 1.38\cdot\opt_t$, as otherwise we can remove all jobs released after $t$, and obtain a smaller counter-example.
Define $p := \max\{\opt_t - t,0\}$. Then we have $p\leq \opt_t\leq t+p$.
Note that by optimality of $\opt_t$, the total processing $\opt$ does after time $t$ is at least $p+\sum_{j: r_j\geq t}p_j$.
Hence the total size of jobs released after time $t$ is $\sum_{j: r_j\geq t}p_j \leq 2(1-t)-p$.

\begin{claim}
	There exists a job of size at least $p_n$ released after $t$ that does not replace other jobs at its release time.
\end{claim}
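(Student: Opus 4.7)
The plan is to argue by contradiction: suppose every job $l$ with $r_l \geq t$ and $p_l \geq p_n$ is a replacer at its release time $r_l$. Since $|J(t)| = 1$ and no jobs are pending at $t$, exactly one machine is idle at $t$ while the other processes $i$. Because $t$ is the last idle time before $s_n$, the idle period on that machine must end at $t$ itself, which can only happen via a fresh job arrival; let $j^*$ be the first job with $r_{j^*} \geq t$, so $r_{j^*} = t$ (otherwise the idle period extends strictly past $t$, contradicting $t$'s maximality), and $j^*$ is scheduled on the idle machine without displacing anything. If $p_{j^*} \geq p_n$, then $j^*$ is the desired major non-replacer; otherwise $p_{j^*} < p_n$.

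Apply Lemma~\ref{lemma:boundary_jobs} to $J(s_n) = \{k, j\}$: both jobs have size at least $\min(s_n - r_n, p_n) = p_n$, using $s_n - r_n > \gamma$ together with $p_n \leq \gamma = 0.38$ in the regime handled by this subsection. Since $|J(t)| = 1$, at most one of $k, j$ equals $i$, so one of them (call it $j$) satisfies $r_j \geq t$; further $p_j \geq p_n > p_{j^*}$ forces $j \neq j^*$ and hence $r_j > t$. In this regime $s_n + p_n > 1.38$ with $p_n \leq 0.38$ yields $s_n > 1$, and combined with $r_j + p_j \leq \opt = 1 < s_n < c_j = s_j + p_j$ this gives $s_j > r_j$. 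Therefore $j$ is pending immediately before its final start $s_j$, and if $j$ was never scheduled during $[r_j, s_j)$, then $j$ was pending at $r_j$ and is a non-replacer in the target set, giving the desired contradiction.

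Otherwise $j$ was scheduled at $r_j$, necessarily as a replacer (no machine is idle at $r_j > t$); and since $s_j > r_j$, $j$ must have been replaced later at some $r_l > r_j$ by a job $l$ with $p_l > (1+\beta) p_j > p_n$ and $r_l > t$, so $l$ is a major job released after $t$ and, by the contradiction hypothesis, is itself a replacer at $r_l$. The main obstacle is then to iterate the argument along the resulting chain of major replacers $j, l, l', \dots$: each link grows by a factor of $1+\beta = 1.2$ and is bounded by $1 = \opt$, so the chain has length at most $\log_{1.2}(1/p_n)$. I expect this chain to terminate either by a replacer that completes without being replaced --- freeing its machine for a pending major job to be scheduled from the pool, which is then a non-replacer in the target set --- or by the accumulated sizes in $\{j, l, l', \dots\}$ conflicting with the processing-budget bound $\sum_{r_i \geq t} p_i \leq 2(1 - t) - p$ derived from the optimality of $\opt_t$; either outcome yields the required contradiction.
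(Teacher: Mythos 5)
Your proof has a genuine gap: the chain argument in your last paragraph is left as a conjecture (``I expect this chain to terminate\dots''), and neither termination scenario you sketch actually produces the required contradiction. A chain of major replacers $j, l, l', \dots$, each replacing its predecessor at its arrival, is entirely consistent with the contradiction hypothesis; the maximal link simply completes without ever being replaced, with nothing wrong. To close the argument you would need a concrete reason why, when some chain job finishes, a major job released after $t$ is pending at that instant and is rescheduled from the pool (hence a non-replacer); this requires tracking which jobs are pending when, and the processing-budget bound you invoke does not directly control a geometrically growing chain. There is also a small slip in your case split: the dichotomy should be ``$j$ scheduled at $r_j$'' versus ``$j$ not scheduled at $r_j$'' (the latter already gives a non-replacer, whether or not $j$ is scheduled later in $(r_j, s_j)$), rather than ``never scheduled in $[r_j,s_j)$''.

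The paper's proof sidesteps all of this via the minimum-counter-example assumption, which your attempt does not use. Under the contradiction hypothesis, every job of size at least $p_n$ released after $t$ replaces at arrival, and one argues that any job $x$ with $r_x \ge t$ and $p_x < p_n$ can then be deleted without changing the schedule projected onto jobs of size at least $p_n$ (so in particular $\alg$ is unchanged), contradicting minimality. Therefore in the minimum counter-example every job released after $t$ has size at least $p_n$, and the first such job is placed on the idle machine and hence does not replace, a direct contradiction. Your opening paragraph already identifies the right object, $j^*$, the first arrival at or after $t$ scheduled on the idle machine; what is missing is the minimality argument showing that $p_{j^*} \ge p_n$ must hold. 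Once you have that, the claim follows in one line, and the entire detour through $J(s_n)$ and the replacer chain is unnecessary.
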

\begin{proof}
	Suppose otherwise, then every job of size at least $p_n$ released after $t$ must be scheduled immediately (but can possibly be replaced later).
	If there is any job of size $p_x < p_n$ released after $t$, then we argue that the instance with $x$ removed is a smaller counter-example: the schedule produced by our algorithm on the new instance is identical to the original scheduled (projected on jobs of size at least $p_n$), as the behavior of every job of size at least $p_n$ is unchanged.
	
	Hence in the minimum counter-example, all jobs released after $t$ are of size at least $p_n$.
	Then the first job released after $t$ does not replace any job, as there is an idle machine.
\end{proof}

By Observation~\ref{observation:processing_after_t} we have (recall that $\P$ is the total processing our algorithm does after $s_n$)
\begin{equation}\label{equation:W-P}
	2(1-t) \geq 2(s_n - t)+\P-\Delta_t(t) - (\W_1 - \W_t(t)).
\end{equation}
Note that $\W_1 - \W_t(t)$ is the total waste created by jobs released after time $t$, which is at most $\alpha(2-2t-p-p_n)$.
Rearranging the inequality and by $\P \geq p_n$, we have
\begin{equation*}
	\alg = s_n+p_n \leq 1.2 + 0.4p_n + 0.5\Delta_t(t) - 0.2t - 0.1p.
\end{equation*}

Note that we have $\Delta_t(t) \leq c_i(t) - t - p\leq 0.38\cdot \opt_t\leq 0.38(t+p)$.
Applying the upper bound on $\Delta_t(t)$, we have
\begin{equation}\label{equation:t_and_p}
	\alg = s_n+p_n \leq 1.2 + 0.4p_n + 0.09(t+p) - 0.1t.
\end{equation}

Observe that we have $t+p\leq 1$ and the following lower bound on $t$ (from (\ref{equation:efficiency})):
\begin{equation*}
	1.38 < \alg \leq 1+\frac{p_n}{2}+\frac{t}{4}+\frac{\alpha}{2}(2-p_n),
\end{equation*}
which implies $t > 4(0.38-\frac{p_n}{2}-\frac{\alpha}{2}(2-p_n)) = 0.72-1.6p_n$.
Hence we have
\begin{align*}
	\alg &= s_n+p_n \leq 1.2 + 0.4p_n + 0.09 - 0.1(0.72-1.6p_n) =1.218+0.56p_n.
\end{align*}

Thus immediately we can show that $p_n$ cannot be too small, as otherwise we have the contradiction that $\alg \leq 1.218+0.56\times 0.28 = 1.3748$.

\begin{lemma}[Lower Bound on $p_n$]
	We have $p_n > 0.28$.
\end{lemma}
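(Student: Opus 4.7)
The plan is a one-line contradiction from the refined efficiency bound derived immediately before the lemma statement. I would assume for contradiction that $p_n \le 0.28$, and plug this into the inequality
\[
\alg \le 1.218 + 0.56\,p_n,
\]
which was obtained by combining (\ref{equation:t_and_p}) with the lower bound $t > 0.72 - 1.6\,p_n$ coming from rearranging (\ref{equation:efficiency}). This gives $\alg \le 1.218 + 0.56 \cdot 0.28 = 1.3748 < 1.38$, contradicting the minimum counter-example assumption that $\alg > 1.38$.

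Since all the structural work is already done, the only thing to check is that the two ingredients of the refined bound remain valid in the regime $p_n \le 0.28$. The sharper leftover estimate $\Delta_t(t) \le 0.38(t + p)$ used in (\ref{equation:t_and_p}) rests only on the minimum counter-example property $c_i(t) \le 1.38\,\opt_t$ (otherwise deleting all jobs released after $t$ would yield a strictly smaller counter-example) together with $m=2$, both independent of $p_n$. The lower bound $t > 0.72 - 1.6\,p_n$ is simply a rearrangement of the pre-existing efficiency inequality and becomes only weaker (hence more applicable) as $p_n$ shrinks. Thus the bound $\alg \le 1.218 + 0.56\,p_n$ holds unconditionally in the relevant range, and the lemma is an immediate corollary.

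There is no real obstacle here beyond arithmetic verification; the lemma essentially packages the refined efficiency analysis into an explicit lower bound on $p_n$, and the only subtlety worth noting is that the strict inequality $1.3748 < 1.38$ gives a clean contradiction with room to spare, so the threshold $0.28$ in the statement is safely below the break-even value $\frac{1.38 - 1.218}{0.56} \approx 0.2893$.
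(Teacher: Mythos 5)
Your proposal is correct and is essentially the paper's own argument: the paper derives $\alg \le 1.218 + 0.56\,p_n$ in the displayed chain immediately preceding the lemma (using the refined leftover bound, the uncharged-major-job claim, and the lower bound $t > 0.72 - 1.6\,p_n$ from~(\ref{equation:efficiency})), none of which assumes $p_n > 0.28$, and the lemma is precisely the observation that $p_n \le 0.28$ would force $\alg \le 1.3748 < 1.38$. Your sanity checks on the applicability of the ingredients and the note about the break-even threshold $\approx 0.2893$ are both accurate.
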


We show the following lemma, which will be the main framework towards deriving a contradiction, given that $0.28 < p_n \leq 0.38$.

\begin{lemma}\label{lemma:p_n_leq_0.38}
	Given that $p_n\leq 0.38$, if we have $\W_1 - \W_t(t) - (\P-p_n) \leq \alpha(2-2t-p-2p_n)$, then we can show that $\alg \leq 1.38$.
\end{lemma}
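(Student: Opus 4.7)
The plan is to refine the bound $\alg \le 1.2 + 0.4 p_n + 0.5 \Delta_t(t) - 0.2 t - 0.1 p$ that appears just before equation~(\ref{equation:t_and_p}), where the coefficient $0.4 p_n$ reflects crediting only one uncharged major job of size at least $p_n$ released after $t$. The hypothesis of the lemma strengthens the waste accounting by $\alpha p_n = 0.2 p_n$: it effectively subtracts $2 p_n$ rather than $p_n$ from the total replacer size among jobs released after $t$, equivalently crediting an additional $\P - p_n$ worth of processing against the waste.

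Concretely, I would start from the efficiency inequality~(\ref{equation:W-P}), rewrite $(\W_1 - \W_t(t)) - \P$ as $\bigl[(\W_1 - \W_t(t)) - (\P - p_n)\bigr] - p_n$, and apply the hypothesis to bound the bracketed term by $\alpha(2 - 2t - p - 2 p_n)$. After plugging in $\alpha = 0.2$ together with the minimum-counter-example bound $\Delta_t(t) \le 0.38(t + p)$, this simplifies to
\begin{equation*}
\alg \le 1.2 + 0.3 p_n - 0.01 t + 0.09 p.
\end{equation*}
Then I would use $p \le 1 - t$, which follows from $\opt_t \le \opt = 1$ and the definition $p = \max\{\opt_t - t, 0\}$, reducing to $\alg \le 1.29 + 0.3 p_n - 0.1 t$, and combine with the lower bound $t > 0.72 - 1.6 p_n$ coming from the original (unrefined) efficiency inequality~(\ref{equation:efficiency}) together with the naive waste estimate $\W_1 \le \alpha(2 - p_n)$.

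The main obstacle I anticipate is that the slack in the constants is very thin as $p_n$ approaches $0.38$: closing the argument requires $t \ge 3 p_n - 0.9$, which is met by the available lower bound on $t$ only for $p_n$ not too close to $0.38$. If a direct substitution does not quite close the gap at the upper end of the range $p_n \in (0.28, 0.38]$ fixed in this section, I would sharpen two places: replace $\Delta_t(t) \le 0.38(t+p)$ by the tighter $\Delta_t(t) \le c_i(t) - t$ when $p = 0$, and exploit $\P > p_n$ whenever the uncharged major job guaranteed by the claim preceding the lemma is rescheduled strictly after $s_n$ rather than completed inside $[t, s_n]$, which directly lowers the right-hand side via the $-\P/2$ term of the efficiency inequality.
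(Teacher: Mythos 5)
Your derivation matches the paper's up to $\alg \le 1.2 + 0.3p_n - 0.01t + 0.09p \le 1.29 + 0.3p_n - 0.1t$, and you correctly diagnose the obstacle: closing needs $t \ge 3p_n - 0.9$, which the naive bound $t > 0.72 - 1.6p_n$ (from (\ref{equation:efficiency}) with the crude estimate $\W_1 \le \alpha(2-p_n)$) delivers only for $p_n \le 0.352$, leaving a gap on $(0.352, 0.38]$.

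The missing idea is that the lemma's hypothesis also tightens the lower bound on $t$. The paper feeds $\W_1 - \W_t(t) - (\P - p_n) \le \alpha(2 - 2t - p - 2p_n)$ back into the efficiency inequality (\ref{equation:efficiency}) itself, not only into (\ref{equation:W-P}); this upgrades the term $\frac{\alpha}{2}(2 - p_n)$ to $\frac{\alpha}{2}(2 - 2p_n)$ and yields the stronger bound $t > 4\bigl(0.38 - \tfrac{p_n}{2} - \tfrac{\alpha}{2}(2 - 2p_n)\bigr) = 0.72 - 1.2 p_n$. Substituting that gives $\alg \le 1.218 + 0.42 p_n \le 1.3776$, which covers the whole range $p_n \le 0.38$. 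Your two proposed fallbacks do not close the remaining interval: replacing $\Delta_t(t) \le 0.38(t+p)$ by $\Delta_t(t) \le c_i(t) - t$ when $p = 0$ still only gives $0.38 t$, identical to the bound you already used; and the improvement on $\P$ you describe is conditional on the uncharged major job actually being rescheduled after $s_n$, and in any case the quantity $\P - p_n$ is already folded into the hypothesis, so there is no further slack to extract from that term without new structural information. In short, the route is right and you pinpointed exactly where it is thin, but the paper's resolution — reusing the hypothesis to sharpen the lower bound on $t$ — is the step you need and did not supply.
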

\begin{proof}
	Applying the upper bounds on $\W_1 - \W_t(t) - (\P-p_n)$ and $\Delta_t(t)$ to (\ref{equation:W-P}), we obtain the following stronger version of (\ref{equation:t_and_p}).
	\begin{align*}
		\alg & = s_n+p_n \leq \frac{1}{2}(2-p_n+0.38(t+p)+\alpha(2-2t-p-2p_n))+p_n \\
		& = 1.2 + 0.3p_n + 0.09(t+p) -0.1t.
	\end{align*}
	
	Since $t+p\leq 1$ (we do not use the upper bound given by Lemma~\ref{lemma:upper_bound_opt_t}), we have
	\begin{align*}
		\alg & = s_n+p_n \leq 1.2 + 0.3p_n + 0.09 - 0.1(0.72-1.2p_n) \\
		& =1.218+0.42p_n\leq 1.218+0.42\times 0.38\leq 1.3776,
	\end{align*}
	where in the first inequality we use a stronger lower bound on $t$: $t > 4(0.38-\frac{p_n}{2}-\frac{\alpha}{2}(2-2p_n)) = 0.72-1.2p_n$, which holds only when $\W_1 - \W_t(t) - (\P-p_n) \leq \alpha(2-2t-p-2p_n)$.
\end{proof}

Notice that the ``if'' condition of Lemma~\ref{lemma:p_n_leq_0.38} holds if there exists two jobs of size at least $p_n$ released after $t$ that are uncharged.

Next we show an upper bound on $\opt_t$.
Note that if $p\neq 0$, then we have $t+p=\opt_t$.
Hence the upper bound holds for $t+p$ when $p\neq 0$.

\begin{lemma}\label{lemma:upper_bound_opt_t}
	We have $\opt_t \leq 1-p_n$.
\end{lemma}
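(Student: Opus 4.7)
I would argue by contradiction, assuming $\opt_t > 1 - p_n$. By Fact~\ref{fact:no_idle_after_1-p_n} we know $t \leq 1 - p_n$, so the assumption immediately gives $\opt_t > t$, hence $p > 0$ and $\opt_t = t + p$. The natural strategy is to construct a smaller counter-example, contradicting the minimum counter-example assumption.

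The candidate smaller instance is $I' := R \cup \{n\}$, where $R$ denotes the set of jobs released at or before $t$. Because $I'$ is a sub-instance of the full instance, $\opt(I') \leq \opt = 1$. Moreover, $|I'|$ is strictly less than the full instance's size provided $S := \{j : r_j > t\}$ contains some job besides $n$; the corner case where $n$ is the only late-arriving job is already handled because in that case $c_i(t) = \alg > 1.38 \geq 1.38 \opt_t$ would contradict the bound $c_i(t) \leq 1.38 \opt_t$ established just before the lemma.

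The heart of the proof is then to show that the algorithm's makespan on $I'$ is at least $\alg = s_n + p_n > 1.38$, which together with $\opt(I') \leq 1$ makes $I'$ a smaller counter-example. Up to time $t$, the algorithm's execution on $I'$ agrees exactly with the full instance (no job of $S \setminus \{n\}$ has arrived). After $t$, the task reduces to establishing a monotonicity statement: deleting the jobs in $S \setminus \{n\}$ cannot cause $n$ to complete any earlier. This is the main obstacle, since in principle a removed $S$-job could have triggered a helpful replacement. However, Lemma~\ref{lemma:major_replacement} sharply restricts major replacements whenever $p_n > 0.28$, and a case analysis on whether $n$ replaces, is replaced by, or coexists with each of the removed jobs (and with the job $i$ currently processing) should rule out any scenario in which $n$ benefits from the removal.

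If the monotonicity step proves too delicate, an alternative route is a direct structural contradiction. The hypothesis $\opt_t > 1 - p_n$ says that in \emph{every} schedule of $R$ on two machines, some machine finishes strictly after $1 - p_n$; combining this with the global bound $\opt \leq 1$, which forces $n$ into a slot $[a, a + p_n]$ with $a \geq r_n$ and $a + p_n \leq 1$, and with the total workload constraint $\sum_{j \in R} p_j \leq 2 - p_n$, one may attempt to extract an infeasible configuration in the spirit of Observation~\ref{observation:bin-packing} (for instance, too many jobs of size close to $\tfrac{1}{2}$ in $R$, which then cannot coexist with $n$'s slot in any makespan-$1$ schedule), using $p_n \in (0.28, 0.38]$ in a crucial way.
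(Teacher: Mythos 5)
There is a genuine gap. The paper's proof does not attempt any counter-example reduction or monotonicity argument; instead it shows directly that the hypothesis $\opt_t>1-p_n$ implies the existence of two uncharged jobs of size at least $p_n$ released after $t$, and then invokes Lemma~\ref{lemma:p_n_leq_0.38} to conclude $\alg\le 1.38$, contradicting the initial assumption. Your Route~1 instead tries to delete $S\setminus\{n\}$ and argue that the algorithm's makespan on the pruned instance $R\cup\{n\}$ is still at least $s_n+p_n$. This monotonicity step is the whole ballgame and you explicitly leave it open (``should rule out any scenario in which $n$ benefits''). Monotonicity under job deletion is in general false for LPT-style greedy schedulers (Graham-type anomalies), and it is far from obvious here: a removed job may have replaced some $y$, which otherwise occupies the machine longer or gets rescheduled later, cascading into a shorter schedule for $n$ in the pruned instance. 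Moreover, you appeal to Lemma~\ref{lemma:major_replacement} to control the replacement cascade, but that lemma is stated for $p_n\in(0.38,\frac{1}{2+\alpha}]$, whereas Lemma~\ref{lemma:upper_bound_opt_t} lives in the regime $p_n\le 0.38$ (at the point of use, $p_n\in(0.28,0.38]$), so it simply does not apply.

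Two smaller issues. First, your corner-case claim that $c_i(t)=\alg$ when $n$ is the only job released after $t$ is not justified: $c_i(t)=s_i(t)+p_i$ is the scheduled finish time of $i$ at time $t$ and need not coincide with $c_n$. In fact a cleaner observation is that this corner case simply cannot occur: at time $t$ only one machine is busy and no job is pending, so if nothing but $n$ arrives later there is no way both machines become busy during $[r_n,s_n)$, which is required since $s_n-r_n>0.38>0$. Second, Route~2 is only a direction, not a proof, and offers no argument that an infeasible bin-packing configuration actually arises. The missing idea that the paper uses, and that your proposal does not touch, is the interaction between $\opt_t>1-p_n$ and the \emph{uncharged} bookkeeping: the optimality of $\opt_t$ forces $c_i(t)\ge\opt_t$ and $p_i$ large, which pins down which of $k,j,n$ are released after $t$ and shows (via a small case split on $i\in J(s_n)$ or not) that two of them are uncharged; that is exactly the hypothesis of Lemma~\ref{lemma:p_n_leq_0.38}.
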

\begin{proof}
	Suppose $\opt_t > 1-p_n$, then we know that any job of size at least $p_n$ released after time $t$ must be scheduled on the same machine in $\opt$ (otherwise $\opt_t$ is not optimal).
	Also by optimality of $\opt_t$, we have $c_i(t) \geq \opt_t$.
	
	Recall that we have $\min\{p_k,p_j,p_n\}\geq p_n$.
	Since only one job is processed at time $t$, we know that at least two of $k,j,n$ are released after $n$.
	Suppose $l\neq n$ is the largest job released after time $t$, then we have $p_i > c_i(t) - t > (1-p_n) - (1-p_n-p_l) = p_l$, which means that $p_i$ cannot be replaced after $t$, and $c_i(t) = c_i$.
	Observe that if there are two jobs of size at least $p_n$ released after $t$ that are uncharged, then by Lemma~\ref{lemma:p_n_leq_0.38} we have $\alg\leq 1.38$.
	
	Next we prove the existence of uncharged jobs.
	
	If $i\in J(s_n)$, suppose $i=k$, then we have $c_i = c_k > s_n > 1$.
	Note that $n$ is uncharged: any job replaced by $n$ must be rescheduled after $s_n$.
	We show that $j$ is also uncharged.
	Suppose otherwise, then the job replaced by $j$ must be $n$, and $j$ must also be replaced, as $c_j > 1$.
	Hence we have $p_j \geq (1+\alpha)p_n$ and the replacer of $j$ is of size at least $(1+\alpha)^2 p_n$, which is impossible, as $p_n+(1+\alpha)p_n + (1+\alpha)^2 p_n > 0.28\times (1+1.2+1.44) > 1$.
	
	If $i\notin J(s_n)$, then $k,j,n$ are all released after time $t$.
	Note that it is impossible to have four jobs of size at least $p_n$ released after $t$, as $4p_n > 1$.
	Observe that none of $k,j$ or $n$ can be replaced.
	Suppose otherwise, let $l$ be the replacer and $M$ be the machine where the replacement happens.
	Then the first job completed on $M$ after $r_l$ is a job of size at least $(1+\alpha)p_n$ that is not $k,j$ or $n$.
	Hence all of $k,j$ and $n$ are never replaced, thus uncharged (as $s_n > 1$).
\end{proof}

Lemma~\ref{lemma:upper_bound_opt_t} helps us to improve the lower bound on $p_n$.

\begin{corollary}[Improved Lower Bound]
	We have $p_n > \frac{1}{3}$.
\end{corollary}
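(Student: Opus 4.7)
The plan is to combine the improved upper bound on $\opt_t$ from Lemma~\ref{lemma:upper_bound_opt_t} with the already-derived inequality~(\ref{equation:t_and_p}) and the lower bound on the last idle time $t$ coming from~(\ref{equation:efficiency}).

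First, I would argue that $t+p \leq 1-p_n$ in all cases. Recall that $p = \max\{\opt_t - t, 0\}$. If $p > 0$, then $t+p = \opt_t$, and Lemma~\ref{lemma:upper_bound_opt_t} gives $\opt_t \leq 1-p_n$. If $p = 0$, then $t+p = t$, and Fact~\ref{fact:no_idle_after_1-p_n} already gives $t \leq 1-p_n$. So in either situation we obtain the uniform bound $t+p \leq 1-p_n$, an improvement over the trivial $t+p \leq 1$ used in Lemma~\ref{lemma:upper_bound_opt_t}'s deduction of $p_n > 0.28$.

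Next I would recall the lower bound on $t$ derived from~(\ref{equation:efficiency}) together with $\P \geq p_n$ and $\W_1 \leq \alpha(2-p_n)$: namely $t > 0.72 - 1.6 p_n$. Plugging the new bound $t+p \leq 1-p_n$ into~(\ref{equation:t_and_p}), and then inserting this lower bound on $t$, gives
\begin{equation*}
	1.38 < \alg \leq 1.2 + 0.4\,p_n + 0.09(1-p_n) - 0.1\,(0.72 - 1.6\,p_n) = 1.218 + 0.47\,p_n,
\end{equation*}
which rearranges to $p_n > 0.162/0.47 > 1/3$, as desired.

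This argument is essentially a plug-and-chug refinement of the earlier computation; the main conceptual work was already done in Lemma~\ref{lemma:upper_bound_opt_t}. The only genuine obstacle is handling the two sub-cases $p=0$ and $p > 0$ uniformly in the $t+p \leq 1-p_n$ bound, since Lemma~\ref{lemma:upper_bound_opt_t} is phrased in terms of $\opt_t$ rather than $t+p$; but this is resolved by the observation above that in the degenerate case $p=0$ we may fall back on Fact~\ref{fact:no_idle_after_1-p_n}.
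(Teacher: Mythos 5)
Your proof is correct and follows essentially the same route as the paper: combine $t > 0.72-1.6p_n$ (from~(\ref{equation:efficiency})) with the bound $t+p \le 1-p_n$ (from Lemma~\ref{lemma:upper_bound_opt_t}) inside~(\ref{equation:t_and_p}) to get $\alg \le 1.218+0.47p_n$, which contradicts $\alg>1.38$ when $p_n\le\frac{1}{3}$. The only cosmetic difference is that the paper dispatches the $p=0$ case separately (getting the tighter $\alg\le 1.2+0.4p_n$ directly), whereas you absorb it into a single uniform bound via Fact~\ref{fact:no_idle_after_1-p_n}; your version is slightly looser there but still valid.
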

\begin{proof}
	Assume for contrary that $p_n \leq \frac{1}{3}$.
	If $p = 0$, then by (\ref{equation:t_and_p}), we have $\alg\leq 1.2+0.4p_n \leq 1.34$;
	otherwise by Lemma~\ref{lemma:upper_bound_opt_t} we have $t+p=\opt_t \leq 1-p_n$.
	Hence we have
	\begin{align*}
		\alg & = s_n+p_n \leq 1.2 + 0.4p_n + 0.09(1-p_n) - 0.1(0.72-1.6p_n) \\
		& =1.218+0.47p_n\leq 1.218+0.47\times \frac{1}{3} \leq 1.3747,
	\end{align*}
	where in the first inequality we use $t > 0.72-1.6p_n$.
\end{proof}	

It remains to prove the following lemma. Recall that so far we have shown that $p_n\in(\frac{1}{3},0.38]$.

\begin{lemma}\label{lemma:two_uncharged_after_t}
	We have $\W_1 - \W_t(t) - (\P-p_n) \leq \alpha(2-2t-p-2p_n)$, given that $p_n\in(\frac{1}{3},0.38]$.
\end{lemma}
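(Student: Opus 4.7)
The plan, guided by the remark preceding the lemma, is to exhibit two jobs of size at least $p_n$, released strictly after $t$, that are uncharged. Given such $x_1, x_2$, I would decompose $\W_1 - \W_t(t) - (\P - p_n)$ as a sum of contributions over replacers $x$ released after $t$: a charged replacer contributes at most $\alpha p_x$ (the waste it creates), while an uncharged replacer contributes at most zero, since the waste from its replaced job $y_x$ is at most $p_{y_x}$ and $y_x$ is rescheduled strictly after $s_n$, so the full $p_{y_x}$ is counted in $\P - p_n$. Hence the left-hand side is at most $\alpha$ times the total size of charged replacers released after $t$, which is at most $\alpha\bigl((2(1-t)-p)-2p_n\bigr) = \alpha(2-2t-p-2p_n)$, as required.

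The two candidate jobs come from $\{k, j, n\}$. By Lemma~\ref{lemma:boundary_jobs} and $p_n \le 0.38$, we have $p_k, p_j > \min\{0.38, p_n\} = p_n > 1/3$, so every member of $\{k, j, n\}$ has size at least $p_n$. I first argue that at least two of them are released strictly after $t$: the idle time $t$ has only $i$ processing and no pending job, so every job whose final processing period meets $[t, s_n]$ is either $i$ or is released after $t$; each of $k, j, n$ has such a period (since $k, j \in J(s_n)$ and $n$ starts at $s_n$), and at most one of them can coincide with $i$. Let $S' := \{k, j, n\} \cap \{x : r_x > t\}$, so $|S'| \ge 2$ and every member of $S'$ has size $\ge p_n > 1/3$.

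It remains to show that at most one element of $S'$ is charged. I would assume for contradiction that two members $x_1, x_2 \in S'$ are both charged, each replacing a strictly smaller $y_i$ at $r_{x_i}$ with $y_i$ rescheduled at or before $s_n$. Because $(t, s_n]$ contains no idle time, every rescheduling there is triggered by a job completion, so the schedule on $(t, s_n]$ must accommodate (i)~the large jobs among $\{k, j, n, i\}$, (ii)~the full processing of $y_1$ and $y_2$ (which may or may not coincide), and (iii)~any further replacer needed when a charged $x_i$ is itself replaced in $(r_{x_i}, s_n)$ — necessarily a fresh job of size $> (1+\alpha)p_n > 0.4$, giving yet another large job. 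A case analysis on whether $i \in \{k, j, n\}$, whether $y_1 = y_2$, and whether each $x_i$ is itself replaced in $(r_{x_i}, s_n)$, should show that every configuration forces either five jobs of size $> 1/3$ — contradicting $\opt = 1$ — or total workload in $(t, s_n]$ exceeding $2(s_n - t)$ plus admissible waste, contradicting the efficiency bound.

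The main obstacle is the case analysis in this last step: a charged large job in $S'$ may itself be replaced before $s_n$ by a further large job, and the small replaced jobs $y_1, y_2$ must be fully processed somewhere in $(t, s_n]$, which can happen in several patterns depending on the ordering of $r_{x_1}, r_{x_2}$ and the times of completions in $(t, s_n]$. In the tight regime $p_n \in (1/3, 0.38]$ the bin-packing slack is small, so each sub-case — most delicately, when a charged large job is itself replaced by yet another large job of size $> 0.4$ — has to be ruled out by its own dedicated bin-packing or efficiency estimate, with Lemma~\ref{lemma:major_replacement} and Lemma~\ref{lemma:boundary_jobs} as the main tools.
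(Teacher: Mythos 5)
Your overall framework — exhibit two uncharged jobs of size at least $p_n$ released strictly after $t$, then decompose $\W_1-\W_t(t)-(\P-p_n)$ over replacers released after $t$ — is the paper's intended route, and your accounting of the contributions of charged versus uncharged replacers is essentially right. But the proof as written has several genuine gaps.

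First, the logical target is off. You establish $|S'|\ge 2$ but then propose to show ``at most one element of $S'$ is charged.'' When $|S'|=2$ (which you explicitly allow, when one of $k,j$ equals $i$), this yields only one uncharged job, not the two your decomposition needs. The paper closes this hole by proving none of $k,j,n$ can equal $i$: if $k=i$ then $c_i(t)\ge s_n=\alg-p_n>1.38(1-p_n)\ge 1.38\cdot\opt_t$ (using Lemma~\ref{lemma:upper_bound_opt_t}), contradicting the minimum counter-example assumption via $c_i(t)\le 1.38\cdot\opt_t$. You would need either this argument (giving $|S'|=3$) or a separate treatment of the $|S'|=2$ subcase.

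Second, and more importantly, the central claim — that at most one element of $S'$ can be charged — is left entirely as ``a case analysis \ldots should show,'' and that case analysis is precisely the content of the lemma. The paper's actual argument does not proceed the way you sketch (assume two of $\{k,j,n\}$ are charged and chase the replaced jobs $y_1,y_2$ through the schedule). Instead it dichotomizes on whether $n$ is ever replaced. If $n$ is never replaced, $n$ is pending from $r_n$ to $s_n$ and hence cannot be a replacer at $r_n$, so $n$ is uncharged; then one argues (via Lemma~\ref{lemma:reschedule_rule} and a five-large-jobs bin-packing contradiction) that at most one of $k,j$ is ever replaced, and the unreplaced one is shown uncharged. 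If $n$ is replaced, the paper writes $J(r_l)=\{n,x\}$ for $n$'s last replacer $l$, shows that both $k$ and $j$ are never replaced (identifying $k=l$, $x=j$ as the only surviving subcase, then splitting on $c_k-s_n$ against $2\alpha p_n$ and finishing either by granting the inequality directly or via $p_j>\frac{2}{3}$), and then concludes both are uncharged. Your proposal names none of this structure.

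Third, Lemma~\ref{lemma:major_replacement} is not available here: its hypothesis is $p_n\in(0.38,\tfrac{1}{2+\alpha}]$, while the present lemma lives in the complementary regime $p_n\in(\tfrac{1}{3},0.38]$. The tools the paper actually relies on in this range are Lemma~\ref{lemma:reschedule_rule}, Lemma~\ref{lemma:boundary_jobs}, Lemma~\ref{lemma:upper_bound_opt_t}, the minimum counter-example assumption, and direct bin-packing/efficiency estimates — not the major-replacement bound.
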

\begin{proof}
	If $n$ is never replaced, i.e., is pending from $r_n$ to $s_n$, then $n$ is uncharged.
	Note that $n\neq i$. Hence $n$ is released after $t$.
	We show that at least one of $k,j$ is uncharged.
	Note that it is impossible that both $k,j$ are replaced.
	Hence one of them, suppose $k$, is not replaced, thus uncharged.
	Moreover, $k$ must be released after $t$: otherwise $k=i$, and we have the contradiction that $c_i(t) \geq s_n = \alg - p_n > 1.38(1-p_n) \geq 1.38\cdot \opt_t$.
	
	Otherwise let $r_l$ be the last time $n$ is replaced, and $J(r_l) = \{n,x\}$. Then we have $p_x > p_n$.
	
	We show that both $k,j$ are never replaced.
	Suppose the contrary that $k$ is replaced. Then $k$ cannot be replaced by $n,l$ or $j$ (otherwise $j$ must also be replaced).
	As there cannot exist five jobs of size larger than $\frac{1}{3}$, we must have $k=l$ and $x=j$, i.e., $k$ replaces $n$ while $j$ is being processed, and then $k$ is replaced by some job $y$.
	If $c_k - s_n \geq 2\alpha p_n$, then we already have $\W_1 - \W_t(t) - (\P-p_n) \leq \alpha(2-2t-p-2p_n)$;
	otherwise $p_j > p_y+p_k-(c_k-s_n) \geq (2+\alpha)(1+\alpha)p_n - 2\alpha p_n > \frac{2}{3}$, which also contradicts $\opt =1$.
	Hence both $k,j$ are not charged. Moreover, for the same reason argued above, both $k,j$ are released after $t$.
\end{proof}

\section{Other Candidate Algorithms} \label{sec:candidate_alg}

All the candidate algorithms are based on LPT. That is, whenever there is an idle machine, we always schedule the largest job.
The only difference is the replacement rule. We will show that none of the them can beat the ratio of $1.5$.

\paragraph{Candidate Algorithm 1.}
Fix any constant $0<\rho<1$. Upon the arrival of a job $j$, job $k$ can be replaced by job $j$ if $k$ is the smallest processing job, $p_k<p_j$ and job $k$ has been processed no larger than $\rho$ fraction.

\noindent
\textbf{Counter example.}
At $t=0$, $m$ identical jobs come with $p_1 = p_2 = \cdots = p_m = 1$. Each of them is scheduled on a machine. At $t=\rho$, job $(m+1)$ comes with $p_{m+1}=1+\xi$ ($\xi$ is an infinitesimal amount). Then one of jobs $1$ to $m$ is replaced by job $(m+1)$. At $t=2\rho$, job $(m+2)$ comes with $p_{m+2}=1+2\xi$, then job $(m+1)$ is replaced by job $(m+2)$. The same thing goes on and on, and at time $t=1$, job $(m - 1 + \lceil \frac{1}{\rho} \rceil)$ is replaced by job $(m + \lceil \frac{1}{\rho} \rceil)$. After then, at $t=1$, $(m-\lceil \frac{1}{\rho} \rceil)$ jobs come with $p_{m + \lceil \frac{1}{\rho} \rceil + 1} = p_{m + \lceil \frac{1}{\rho} \rceil + 2} = \cdots = p_{2m} = 1$. Then there are $m$ pending jobs but only $(m-1)$ idle machines, so $\alg = 3$. In the optimal solution, no replacement happens, and $\opt = 2 + \lceil \frac{1}{\rho} \rceil\xi$.

\paragraph{Candidate Algorithm 2.}
Fix $0<\rho<1$ and $1 < \mu < 2$. Upon the arrival of a job $j$, job $k$ can be replaced by job $j$ if $\mu p_k<p_j$ and job $k$ has been processed no larger than $\rho$ fraction.

We show a counter example using $\mu = 3/2$ and $\rho = 1/2$. This counter example can be generalized to any $\mu$ and $\rho$ such that $\mu + \rho \le 2$.

\noindent
\textbf{Counter example.}
At $t=0$, $m$ jobs come, with $p_1 = m + 1, p_2 = m + 2, \cdots, p_m = 2m$. At time $t=m$, job $(m+1)$ comes with $p_{m+1}=3m$. Then, as job $m$ is the only job that is processed at most half, job $m$ is replaced by job $(m+1)$. After the replacement, $(m-1)$ jobs come with $p_{m+2} = 2m + 1, p_{m+3} = 2m + 2, \cdots, p_{2m} = 3m - 1$. Then $\alg = 6m$, while in the optimal solution, no replacement happens, thus $\opt = 4m + 1$.

\paragraph{Candidate Algorithm 3.}
Fix a target performance ratio $1+\gamma$. When a job $j$ comes, schedule it virtually to $\alg$, and calculate the current optimal solution with all the jobs that have been released. If the ratio can still be bounded in $1+\gamma$, do not replace any jobs; otherwise choose one job to replace.

\noindent
\textbf{Counter example.}
At $t=0$, $m$ jobs comes first, with $p_1=2m, p_2=2m+1, \cdots, p_m = 3m - 1$. After each of them has been scheduled on a machine, at $t=0$, another $m$ jobs come, with $p_{m+1} = 3m, p_{m+2} = 3m + 1, \cdots, p_{2m} = 4m-1$. At this time, since the local $\alg$ and local $\opt$ are exactly the same, no replacement happens. Then at $t=3m-1$, another job comes with $p_{2m+1} = 3m$. So for this instance, $\alg = 6m-1$; while in the optimal solution, three smallest jobs (jobs $1$, $2$, $3$) are scheduled on the same machine, while all other jobs are paired up with the smallest with largest, and $\opt=4m+3$.

\medskip

For our algorithm \lptr, some may wonder what happens if $\alpha$ or $\beta$ is not in $(0, 1/2)$.
we know that if $\alpha=0$, it is exactly the same with LPT, which cannot beat $1.5$; if $\beta=0$, a counter example can be given that is similar to that of Candidate Algorithm 1. 
Next we show that when $\beta \ge 1/2$ or $\alpha \ge 1/2$, \lptr could not beat $1.5$, no matter what the value of the other parameter is.

\paragraph{Candidate Algorithm 4}
In \lptr, set $\beta \ge 1/2$, and $\alpha$ to be any constant.

\noindent
\textbf{Counter example.} At $t=0$, $m (m \ge 4)$ jobs come first, with $p_1=p_2=\cdots=p_m=1$. After all these jobs are scheduled, still at time $0$, another $m$ jobs come, with $p_{m+1} = p_{m+2} = \cdots = p_{2m} = 3/2+\xi$. Then at $t=1$, another job comes with $p_{2m+1}=2$. Since $\beta\geq 1/2$, no replacement happens, and $\alg=9/2+\xi$, while in optimal schedule, all jobs could be completed at or before time $3+\xi$.

\paragraph{Candidate Algorithm 5}
In \lptr, set $\alpha \ge 1/2$, and $\beta$ to be any constant such that $0<\beta<1/2$.

\noindent
\textbf{Counter example.} We consider the special case with only one machine. At $t=0$, job $1$ comes with $p_1=1$. At $t=1-\xi$ (again, $\xi$ is an infinitesimal amount), job $2$ comes with $p_2=2$, then job $1$ is replaced by job $2$. At $t=3-2\xi$, job $3$ comes with $p_3=4$, job $2$ is replaced by job $3$. The same thing goes on and on. Each time a new job comes, $\alg$ would replace the previous job, while $\opt$ would wait for the previous job to end. The final ratio would be arbitrarily close to $1.5$.

\section{Hardness for deterministic algorithms with restart} \label{appendix:restart_hardness}

In this section, we present a simple lower bound of $\sqrt{1.5} \approx 1.225$ for any deterministic algorithms with restart.
Given any deterministic algorithm, consider the following instance with two machines.

At $t=0$, two jobs come with size $p_1 = p_2 = 1$. 
Then, at $t=3-\sqrt{6}$, another job comes with size $p_3=\sqrt{6}-1$. 

\begin{enumerate}
	\item If the algorithm starts processing job $3$ after time $1$, i.e., after completing the two jobs of size $1$, no more jobs arrive in the instance. 
	We have $\opt = 2$ as we could have scheduled job $1$ and job $2$ on the same machine and job $3$ on the other one.
	On the other hand, $\alg \ge 1 + p_3 = \sqrt{6}$.
	\item If the algorithm starts processing job $3$ before time $1$, e.g., it restarts one of the size-$1$ jobs, let there be a fourth job that arrives at time $1$ with size $p_4 = \sqrt{6} - 1$. 
	We have $\opt = \sqrt{6}$ by scheduling job $1$ and $3$ on one machine, and $2$ and $4$ on the other.
	On the other hand, we have $\alg \geq 3$ since at time $1$ at least one of jobs $1$ and $2$ is pending, and job $3$ does not complete until time $2$.
\end{enumerate}

{
	\bibliography{scheduling}
	\bibliographystyle{plainnat}
}

\end{document}